\documentclass{dmtcs}
\usepackage{amssymb,amsmath,rotating}
\usepackage{tabularx}
\usepackage{enumerate}
\usepackage{vaucanson-g}
\usepackage{subfigure}

\def\Z{\mathbb Z}

\def\Q{\mathbb Q}

\def\N{\mathbb N}

\def\A{\mathcal A}

\def\le{\leqslant}
\def\leq{\leqslant}
\def\ge{\geqslant}
\def\geq{\geqslant}

\newcommand{\decdot}{\raisebox{0.1ex}{\textbf{.}}}
\newcommand{\Tb}{T_{\beta}}
\newcommand{\mb}{{-\beta}}
\newcommand{\Db}{D_{\beta}}
\newcommand{\Sb}{S_{\beta}}

\newcommand{\Edb}{\mathop{\mathsf{d}_{\beta}}}
\newcommand{\Edbs}{\mathop{\mathsf{d}^{*}_{\beta}}}
\newcommand{\Edbm}{\mathop{\mathsf{d}_{-\beta}}}
\newcommand{\Edbms}{\mathop{\mathsf{d}^{*}_{-\beta}}}

\newtheorem{theorem}{Theorem}[section]
\newtheorem{lemma}[theorem]{Lemma}
\newtheorem{proposition}[theorem]{Proposition}

\newtheorem{remark}[theorem]{Remark}
\newtheorem{example}[theorem]{Example}
\newtheorem{definition}[theorem]{Definition}

\author{Christiane Frougny\addressmark{1}
 \and Anna Chiara Lai\addressmark{2}}

\begin{document}
\title{Negative bases and automata}

\address{\addressmark{1}LIAFA, CNRS UMR 7089, case 7014, 75205 Paris Cedex 13, France,
and University Paris 8\\
\addressmark{2}SBAI, Universit\`a La Sapienza, Rome, Italy}

\keywords{numeration system, sofic system, Pisot number, automaton, transducer}

\maketitle

\begin{abstract}
We study expansions in non-integer negative base $\mb$
introduced by Ito and Sadahiro. Using countable automata associated with $(\mb)$-expansions,
we characterize the case where the $(\mb)$-shift is a system of finite type.
We prove that, if $\beta$ is a Pisot number, then the $(\mb)$-shift is a sofic system. In that case, 
addition (and more generally normalization on any alphabet)
is realizable by a finite transducer. We then give an on-line algorithm for
the conversion from positive
base $\beta$ to negative base $\mb$.
When $\beta$ is a Pisot number, the conversion can be realized by a finite on-line
transducer.
\end{abstract}


\section{Introduction}

Expansions in integer negative base $-b$, where $b \ge 2$,
seem to have been introduced by Gr\"unwald in~\cite{Grunwald},
and rediscovered by several authors, see the historical comments given by Knuth~\cite{Knu}.
The choice of a negative base $-b$ and of the alphabet $\{0,\ldots,b-1\}$ is interesting, because it provides a signless 
representation for every number (positive or negative). In this case it is easy to distinguish the sequences representing a positive integer 
from the ones representing a negative integer: denoting $(w\decdot)_{-b}:=\sum_{i=0}^k w_k(-b)^k$ for any $w=w_k\cdots w_0$ in$\{0,\ldots,b-1\}^*$ 
with no leading $0$'s, 
we have $\N=\{(w\decdot)_{-b} \mid |w| \textnormal{ is odd}\}$. The classical monotonicity between the lexicographical ordering
 on words and the represented numerical values does not hold anymore in negative base, for instance 
$3=(111\decdot)_{-2}$, $4=(100\decdot)_{-2}$ and $111  >_{lex}  100$. 
 Nevertheless it is possible to restore such a correspondence by introducing an appropriate ordering on words, in the sequel denoted by $\prec_{alt}$, 
and called the {\em alternate order}. 

Representations in negative base also appear in some complex base number systems, for instance base $\beta=2i$ since
$\beta^2=-4$ (see \cite{Frougny99} for a study of their properties from an automata theoretic
point of view). Thus, beyond the interest in the problem in itself, the authors also wish the study of negative bases to be an useful preliminar step
 to better understanding the complex case. 

Ito and Sadahiro recently introduced expansions in non-integer negative base $\mb$
in \cite{IS}.
They have given a characterization of admissible sequences, and shown that the $(\mb)$-shift is
sofic if and only if the $(\mb)$-expansion of the number $-\frac{\beta}{\beta+1}$
is eventually periodic. 

In this paper we pursue their work. The purpose of this contribution is to show that
 many properties of the positive base (integer or not) numeration systems extend to the negative base case,
 the main difference being the sets of numbers that are representable in the two different cases.
The results could seem not surprising, but this study put into light the important role played
by the order on words: the lexicographic order for the positive bases, the alternate order
for the negative bases.
 
Very recently there have been several
contributions to the study of numbers having only positive powers
of the base in their expansion, the so-called $(\mb)$-integers, in \cite{ADMP}, \cite{MPV}, and \cite{steiner}.
 
We first establish some properties of the negative integer base $-b$, that are more or less folklore. This allows
to introduce the definitions of alternate order and of short-alternate order, that are natural to order
numbers by their $(\mb)$-expansions.

We then prove a general result which is not related to numeration systems but to the alternate order,
and which is of interest in itself.
We define a
symbolic dynamical system associated with a given infinite word $s$ satisfying some
properties with respect to
the alternate order on infinite words.
We design an infinite countable automaton recognizing it.
We then are able to characterize the case when the symbolic dynamical system is sofic
(resp. of finite type).
Using this general construction 
we can prove that the $(\mb)$-shift is a symbolic dynamical system of finite type 
if and only if the $(\mb)$-expansion of $-\frac{\beta}{\beta+1}$
is purely periodic.
We also show that the entropy of the $(\mb)$-shift is equal to $\log \beta$.

We then focus on the case where $\beta$ is a Pisot number, that is to say, an
algebraic integer greater than 1 such that the modulus of its Galois conjugates is less than 1.
The natural integers and the Golden Mean are Pisot numbers.
We extend all the results known to hold true in the Pisot case for $\beta$-expansions to the
$(\mb)$-expansions. In particular we prove that, if $\beta$ is a Pisot number, then
every number from $\Q(\beta)$ has an eventually periodic $(\mb)$-expansion, and thus that
the $(\mb)$-shift is a sofic system.

When $\beta$ is a Pisot number, it is known that addition in base $\beta$
--- and more generally normalization in base $\beta$
on an arbitrary alphabet --- is realizable by a finite transducer \cite{Frougny92}.
We show that this is still the case in base $\mb$.

The conversion from positive integer base to negative integer base is realizable by a
finite right sequential transducer. 
When $\beta$ is not an integer, we give an on-line algorithm for the conversion
from base $\beta$ to
base $\mb$, where the result is not admissible. When $\beta$ is a Pisot number, the conversion can be realized by a finite 
on-line
transducer.

A preliminary version of Sections \ref{alt} and \ref{s_real} has been presented in~\cite{FrougnyLai}.
\section{Definitions and preliminaries}

\subsection{Words and automata}

An \emph{alphabet} is a totally ordered set. In this paper the alphabets are always finite. 
A finite sequence of elements of an alphabet $A$ is called a {\em word}, and
the set of words on $A$ is the free monoid $A^*$.
The empty word is denoted by $\varepsilon$.
The set of infinite (resp. bi-infinite) words on $A$ is denoted by
$A^\mathbb{N}$ (resp.
$A^\mathbb{Z}$).
Let $v$ be a word of $A^*$, denote by $v^n$ the concatenation of $v$ to itself
$n$ times, and by $v^\omega$  the infinite concatenation $vvv\cdots$.
A word of the form $uv^\omega$ is said to be {\em eventually periodic}.
A (purely) {\em periodic} word is an eventually periodic word of the form $v^\omega$.

A finite word $v$ is a \emph{factor} of a (finite, infinite or bi-infinite)
word $x$ if there exists $u$ and $w$ such that $x=uvw$.
When $u$ is the empty word, $v$ is a \emph{prefix} of $x$.
The prefix $v$ is \emph{strict} if $v \neq x$.
When $w$ is empty, $v$ is said to be a \emph{suffix} of $x$.

We recall some definitions on automata, see \cite{Eil} and \cite{Sak} for
instance.
An {\em automaton over $A$}, $\mathcal A=(Q,A,E,I,T)$, is a directed graph
labelled by elements of $A$.
The set of vertices, traditionally called {\em states}, is denoted by $Q$,
$I \subset Q$ is the set of {\em initial} states, $T \subset Q$ is the set of
{\em terminal} states and $E \subset Q \times A \times Q$ is the set of
labelled {\em edges}.
If $(p,a,q) \in E$, we write $p \stackrel{a}{\to} q$.
The automaton is {\em finite} if $Q$ is finite.
The automaton $\mathcal A$  is {\em deterministic} if $E$ is the graph of a
(partial) function from $Q \times A$ into $Q$, and if there is a unique initial
state.
A subset $H$ of $A^*$ is said to be {\em recognizable by a finite automaton},
or {\em regular}, if
there exists a finite automaton $\mathcal A$ such that $H$ is equal to the set
of labels of paths starting in an initial state and ending in a terminal state.

Recall that two words $u$ and $v$ are said to be {\em right congruent modulo} $H$ if, for
every $w$, $uw$ is in $H$ if and only if $vw$ is in $H$. It is well known that $H$
is recognizable by a finite automaton if and only if the congruence modulo $H$ has
finite index.

Let $A$ and $A'$ be two alphabets.
A {\em transducer} is an automaton $\mathcal{T}=(Q,A^* \times A'^*,E,I,T)$
where the edges of $E$ are labelled by couples in $A^* \times A'^*$.
It is said to be {\em finite} if the set $Q$ of states and the set $E$ of edges
are finite.
If $(p,(u,v),q) \in E$, we write $p \stackrel{u| v}{\longrightarrow} q$.
The \emph{input automaton} (resp. \emph{output automaton}) of such a transducer is obtained by taking the
projection of edges on the first (resp. second) component.
A transducer is said to be {\em sequential} if its input automaton is deterministic.

An on-line transducer is a particular kind of sequential transducer.
An {\em on-line transducer} with delay $\delta$,
${\mathcal A}=(Q, A \times (A' \cup \varepsilon), E,\{q_0\})$, is a sequential 
automaton
composed of a transient part and of a synchronous part, see \cite{M}. 
The set of states is equal to $Q=Q_t \cup Q_s$, where $Q_t$ is
the set of
transient states and $Q_s$ is the set of synchronous states.
In the transient part, every path
of length $\delta$ starting in the initial state $q_0$ is of the form
$$q_0 \stackrel{x_1|\varepsilon}{\longrightarrow}q_1 \stackrel{x_2|\varepsilon}{\longrightarrow
} \cdots
\stackrel{x_{\delta}|\varepsilon}{\longrightarrow}q_{\delta}$$
where $q_0, \ldots,
q_{\delta -1}$ are in $Q_t$, $x_j$ in $A$, for $1 \le j \le \delta$, and
the only edge arriving in a state of $Q_t$ is as above.
In the synchronous part, edges are labelled by elements of $A \times A'$.
This means that the transducer starts reading words of length $\le \delta$
and outputting nothing,
and after that delay, outputs serially one digit for each input digit.
If the set of states $Q$ and the set of edges $E$ are finite, the on-line
automaton is said to be finite.

The same notions can be defined for automata and transducer processing words from
right to left : they are called {\em right} automata or transducers.

\subsection{Symbolic dynamics}
Let us recall some definitions on symbolic dynamical systems or subshifts
(see~\cite[Chapter~1]{Lot} or ~\cite{LM}).
The set $A^\Z$ is
endowed with the lexicographic order, denoted $<_{lex}$,
the product topology, and the shift $\sigma$, defined
by $\sigma((x_i)_{i \in \Z})=(x_{i+1})_{i \in \Z}$.
A set $S \subseteq A^\Z$ is a {\em symbolic dynamical system}, or {\em subshift}, if it
is shift-invariant and closed for the product topology on $A^\Z$. 
A bi-infinite word $z$ {\em avoids} a set of word $X \subset A^*$ if no factor of $z$
is in $X$. The set of all words which avoid $X$ is
denoted $S_X$. A set $S \subseteq A^\Z$ is a subshift if and only if $S$ is of the form
$S_X$ for some $X$.

The same notion
can be defined for a one-sided subshift of $A^\N$. 

Let $F(S)$ be the set of factors of elements of $S$, 
let $I(S)=A^+ \setminus F(S)$ be the set of words avoided by $S$, and let $X(S)$ be the set of elements
of $I(S)$ which have no proper factor in $I(S)$.
The subshift $S$ is
{\em sofic} if and only if $F(S)$ is
recognizable by a finite automaton, or equivalently if $X(S)$ is recognizable by a finite automaton.
The subshift $S$ is of {\em finite type} if $S=S_X$ for some finite set $X$,
or equivalently if $X(S)$ is finite.

The topological entropy of a subshift $S$ is 
\begin{equation*}
h(S)=\lim_{n \to \infty} \frac{1}{n} \log(B_n(S))
\end{equation*}
where $B_n(S)$ is the number of elements of $F(S)$ of length $n$.
When $S$ is sofic, the entropy of $S$ is equal to the logarithm of
the spectral radius of the adjacency matrix of
the finite automaton recognizing $F(S)$.

\subsection{Numeration systems}
The reader is referred to~\cite[Chapter 7]{Lot} and to ~\cite{cant} for a detailed presentation
of these topics.
Representations of real numbers in a non-integer base $\beta$
were introduced by R\'enyi~\cite{Ren} under the name of
{\em $\beta$-expansions}.
Let $x$ be a real number in the
interval $[0,1]$.
A {\em
representation in base
$\beta$} (or a $\beta$-representation) of $x$ is an
infinite word $ (x_i)_{i \ge 1}$ such that
$$x= \sum_{i \ge 1} x_i \beta^{-i}.$$

Let $\mathbf x = (x_i)_{i \ge 1}$.
The \emph{numerical value} in base $\beta$ is the
function $\pi_\beta$ defined by $\pi_\beta(\mathbf x)=\sum_{i=1}^\infty x_i\beta^{-i}$.

A particular $\beta$-representation --- called the $\beta$-{\em expansion} --- 
can be computed by the ``greedy algorithm''~:
denote by $\lfloor y \rfloor$, $\lceil y \rceil$ and $\{y\}$ the lower integer part, the upper integer part and the
fractional part of a
number $y$. Set $r_0=x$ and let
for $i \ge 1$, $x_i=\lfloor \beta r_{i-1} \rfloor$, $r_i=\{\beta r_{i-1}\}$.
Then $x= \sum_{i \ge 1} x_i \beta^{-i}$. The digits $x_i$ are elements of
the canonical alphabet $A_\beta
=\{0, \ldots,\lceil \beta \rceil -1\}$.

The $\beta$-expansion of $x \in [0,1]$ will be denoted by
$\Edb(x)=(x_i)_{i \ge 1}$. If $x>1$, there exists some $k \ge 1$ such that
$x/\beta^{k}$ belongs to $[0,1)$. If $\Edb(x/\beta^{k})=(y_i)_{i \ge 1}$
then by shifting $x=(y_1 \cdots y_{k} \decdot y_{k+1} y_{k+2}\cdots)_\beta$.

An equivalent definition is obtained by using the {\em
$\beta$-transformation} of the unit
interval which is the mapping
$$\Tb : x \mapsto \beta x- \lfloor \beta x \rfloor.$$
Then $\Edb(x)=(x_i)_{i \ge 1}$ if and only if
$x_i = \lfloor \beta \Tb^{i-1}(x) \rfloor$.

If a representation ends in infinitely many zeros, like $v0^\omega$,
the  ending zeros are omitted and the representation
is said to be {\em finite}.

In the case where the $\beta$-expansion of 1 is finite, there is
a special representation playing an important role.
Let $ \Edb(1) =(t_i)_{i \ge 1}$ and set
$\Edbs(1)=\Edb(1)$ if $\Edb(1)$ is infinite and $\Edbs(1)=
(t_1 \cdots t_{m-1} (t_m - 1))^\omega$ if $\Edb(1)=t_1 \cdots t_{m-1}t_m$
is finite.

Denote by $ \Db$
the set
of $\beta$-expansions of numbers of $ [0,1)$. It
is a shift-invariant subset of $A_\beta^\N$. The {\em $\beta$-shift} $\Sb$ is the
closure of $ \Db $ and it is a subshift of $A_\beta^\Z$.
When $\beta$ is an integer, $\Sb$ is the full $\beta$-shift $A_\beta^\Z$.

\begin{theorem}[Parry\cite{Parry}]\label{parryth}
Let $\beta>1$ be a real number.
A word $(w_i)_{i\ge 1}$ belongs to $\Db $ if and only if
for all $n \ge 1$
$$w_nw_{n+1}\cdots <_{lex} \Edbs(1).$$
A word $(w_i)_{i\in \Z}$ belongs to $\Sb$ if and only if
for all $n$
$$w_nw_{n+1}\cdots  \le_{lex} \Edbs(1).$$
\end{theorem}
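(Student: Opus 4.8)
The plan is to prove Parry's theorem in two parts, first for the one-sided $\beta$-shift $\Db$ and then deducing the statement for the two-sided shift $\Sb$ by a closure/limiting argument.

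\smallskip

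\emph{Necessity for $\Db$.} First I would show that if $(w_i)_{i\ge 1}=\Edb(x)$ for some $x\in[0,1)$, then every shifted tail is lexicographically smaller than $\Edbs(1)$. Fix $n\ge1$ and set $y=\Tb^{n-1}(x)\in[0,1)$; by the characterization via the $\beta$-transformation, $w_nw_{n+1}\cdots=\Edb(y)$. The key estimate is that for any $z\in[0,1)$ one has $\pi_\beta(\Edb(z))=z<1=\pi_\beta(\Edbs(1))$, together with the fact that on the set of \emph{admissible} (greedy) sequences the map $\pi_\beta$ is strictly increasing for the lexicographic order; hence $\Edb(y)<_{lex}\Edbs(1)$. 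The monotonicity of $\pi_\beta$ on greedy sequences is itself proved by the standard argument: if $u<_{lex}v$ with $u,v$ greedy and they first differ at position $k$ with $u_k<v_k$, then $\pi_\beta(v)-\pi_\beta(u)\ge \beta^{-k}-\sum_{i>k}u_i\beta^{-i}$, and since $u$ greedy forces $\sum_{i>k}u_i\beta^{-i}<\beta^{-k}$ (this is exactly the upper bound coming from the greedy algorithm, encoded in $\Edbs(1)$), the difference is positive. I would need to be slightly careful to handle the case where $\Edb(1)$ is finite, i.e.\ to justify replacing $\Edb(1)$ by its ``quasi-greedy'' version $\Edbs(1)$: this is where the definition of $\Edbs(1)=(t_1\cdots t_{m-1}(t_m-1))^\omega$ enters, since the tails of a greedy sequence can equal $\Edb(1)$ only finitely, but can be arbitrarily close to it, so the correct uniform upper bound is $\Edbs(1)$, which satisfies $\pi_\beta(\Edbs(1))=1$ and whose own tails are all $\le_{lex}\Edbs(1)$.

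\smallskip

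\emph{Sufficiency for $\Db$.} Conversely, suppose $(w_i)_{i\ge1}$ is a sequence over $A_\beta$ with $w_nw_{n+1}\cdots<_{lex}\Edbs(1)$ for all $n\ge1$. Set $x=\pi_\beta((w_i))=\sum_{i\ge1}w_i\beta^{-i}$; first check $x\in[0,1)$, which follows from applying the condition at $n=1$ and the fact that $\pi_\beta(\Edbs(1))=1$ together with the same monotonicity lemma. Then I would show by induction on $i$ that the greedy algorithm applied to $x$ reproduces exactly the digits $w_i$, i.e.\ $\lfloor\beta\Tb^{i-1}(x)\rfloor=w_i$. The inductive step amounts to showing that $r_{i-1}:=\sum_{j\ge i}w_j\beta^{-(j-i+1)}=\pi_\beta(w_iw_{i+1}\cdots)$ lies in $[0,1)$ and that $w_i\le\beta r_{i-1}<w_i+1$; the left inequality is immediate, and the right inequality $\beta r_{i-1}-w_i=\pi_\beta(w_{i+1}w_{i+2}\cdots)<1$ is precisely the hypothesis applied at index $i+1$ (again via $\pi_\beta(\Edbs(1))=1$ and monotonicity). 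Hence $(w_i)=\Edb(x)\in\Db$.

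\smallskip

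\emph{Passage to $\Sb$.} Finally, for the bi-infinite statement: $\Sb$ is by definition the closure of $\Db$ (suitably interpreted in $A_\beta^\Z$), so a word $(w_i)_{i\in\Z}$ lies in $\Sb$ iff every finite factor of it is a factor of some element of $\Db$. Using the one-sided characterization just proved, a finite factor is allowed iff it extends to a one-sided sequence all of whose tails are $<_{lex}\Edbs(1)$; taking a limit of such sequences turns the strict inequality into the non-strict one $w_nw_{n+1}\cdots\le_{lex}\Edbs(1)$ for all $n\in\Z$, and conversely any word satisfying these non-strict inequalities is a limit of elements of $\Db$ (truncate and re-expand). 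The main obstacle throughout is the careful bookkeeping around the finite-expansion case and the precise role of $\Edbs(1)$ versus $\Edb(1)$: getting the strict-versus-non-strict inequalities exactly right, and verifying the monotonicity lemma in the borderline situations, is where essentially all the real work lies; the rest is a direct unwinding of the greedy algorithm.
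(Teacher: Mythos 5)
The paper does not prove this statement: it is Parry's classical theorem, quoted as background with a citation to \cite{Parry}, so there is no in-paper proof to compare yours against. Judged on its own, your outline is the standard textbook route (a monotonicity lemma for $\pi_\beta$ on greedy sequences, an induction reproducing the greedy digits, then a closure argument to pass from $\Db$ to $\Sb$). The necessity direction and the passage to the two-sided shift are sound modulo the shift-maximality of $\Edbs(1)$ (that all its tails are $\le_{lex}\Edbs(1)$ and have $\pi_\beta$-value at most $1$), which you correctly flag as a needed auxiliary fact and which can be checked directly from the definition of $\Edbs(1)$ in both the finite and infinite cases.

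There is, however, one step that is circular as written. In the sufficiency direction you justify $\pi_\beta(w_{i+1}w_{i+2}\cdots)<1$ by ``the hypothesis applied at index $i+1$ \ldots via monotonicity''. But your monotonicity lemma is proved only when the lexicographically smaller sequence is greedy --- that is exactly where the bound $\sum_{i>k}u_i\beta^{-i}<\beta^{-k}$ comes from --- and at this point $w_{i+1}w_{i+2}\cdots$ is precisely the sequence you are trying to prove is greedy. The standard repair is to bound all tails simultaneously: write $\Edbs(1)=(t^*_i)_{i\ge1}$, set $M=\sup_n \pi_\beta(w_nw_{n+1}\cdots)$ (finite since the digits are bounded), and use the first position $k$ of disagreement with $\Edbs(1)$ to get
\begin{equation*}
\pi_\beta(w_nw_{n+1}\cdots)\;\le\; 1-\beta^{-k}\pi_\beta(t^*_{k+1}t^*_{k+2}\cdots)+\beta^{-k}(M-1)\;\le\; 1+\beta^{-1}(M-1)
\end{equation*}
whenever $M\ge 1$; taking the supremum over $n$ forces $M\le 1$, and the term $\pi_\beta(t^*_{k+1}t^*_{k+2}\cdots)>0$ (no tail of $\Edbs(1)$ is $0^\omega$, since $t^*_1=\lfloor\beta\rfloor\ge1$ recurs) then yields the strict inequality. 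With that inserted, your induction on the greedy digits, and hence the whole argument, goes through.
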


The following results are well-known
(see~\cite[Chapt. 7]{Lot}).
\begin{theorem}
\begin{enumerate}
\item The $\beta$-shift is sofic if and only if $\Edb(1)$
is eventually periodic.
\item The $\beta$-shift is of finite type if and only if
$\Edb(1)$ is finite.
\end{enumerate}
\end{theorem}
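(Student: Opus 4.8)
The plan is to derive both statements from a single combinatorial description of the language $F(\Sb)$ of factors. From Theorem~\ref{parryth}, a finite word $u=u_1\cdots u_\ell$ can be a factor of an element of $\Sb$ only if $u_i\cdots u_\ell\le_{lex}\Edbs(1)$ for every $1\le i\le\ell$; conversely, given such a $u$, the bi-infinite word $\ldots 0\,0\,u\,0\,0\ldots$ lies in $\Sb$ by Theorem~\ref{parryth}, using that $\Edbs(1)=s_1s_2\cdots$ begins with $s_1\ge 1$ and, when $\beta\notin\N$, does not end in $0^{\omega}$ (if $\beta\in\N$ then $\Sb$ is the full shift and $\Edb(1)$ is a single digit, so the statement is trivial; assume henceforth $\beta\notin\N$). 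Thus
\[
F(\Sb)=\{\,u:\ u_i\cdots u_{|u|}\le_{lex}\Edbs(1)\ \text{for all }1\le i\le |u|\,\}.
\]
Since $\sigma^{n}\Edb(1)=\Edb(\Tb^{n}(1))\in\Db$ for $n\ge1$, Theorem~\ref{parryth} gives the self-admissibility $\sigma^{n}\Edbs(1)\le_{lex}\Edbs(1)$ for all $n$, and hence that the length-$k$ prefix of $\Edbs(1)$ is the lexicographically greatest word of length $k$ in $F(\Sb)$.

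For (1): if $\Edb(1)$, hence $\Edbs(1)$, is eventually periodic, then $\Edbs(1)$ is a finite-state pattern and I would recognize $F(\Sb)$ by its Knuth--Morris--Pratt automaton, in which a state records the length $M$ of the longest suffix read so far that is a prefix of $\Edbs(1)$: on the digit $s_{M+1}$ one advances, on a smaller digit one applies the failure function, on a larger digit one goes to an absorbing sink, and every non-sink state is accepting. Tracking only the longest such suffix is legitimate because, for every border $s_1\cdots s_{M'}$ of a prefix $s_1\cdots s_M$ of $\Edbs(1)$, self-admissibility gives $s_{M+1}\le s_{M'+1}$, so the longest match is the most constraining; the automaton is finite (the state $M$ being taken modulo the period once past the pre-period), so $\Sb$ is sofic. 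Conversely, if $\Sb$ is sofic, take any finite automaton for $F(\Sb)$ and follow in it the lexicographically greatest bi-extendable infinite path; by the ``greatest word of each length'' remark that path is labelled by $\Edbs(1)$, and since the set of live states reached after reading a prefix depends only on the previous such set, this sequence of state sets --- hence $\Edbs(1)$, and hence $\Edb(1)$ --- is eventually periodic.

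For (2), the ``if'' direction: when $\Edb(1)=t_1\cdots t_m$ is finite, $\Edbs(1)=(s_1\cdots s_m)^{\omega}$ with $s_i=t_i$ for $i<m$ and $s_m=t_m-1$, purely periodic of period $m$. I claim $\Sb=S_X$ with $X:=A_\beta^{\,m}\setminus F(\Sb)$, a finite set. The inclusion $\Sb\subseteq S_X$ is immediate. For the converse, if some $w\in S_X$ had a tail $w_nw_{n+1}\cdots>_{lex}(s_1\cdots s_m)^{\omega}$, let $J$ be the first disagreement position and write $J-1=qm+r$ with $0\le r<m$; then the length-$m$ factor of $w$ starting at position $n+qm$ equals $s_1\cdots s_r\,b\cdots$ with $b=w_{n+J-1}>s_{r+1}$, so it exceeds $s_1\cdots s_m$, hence exceeds $\Edbs(1)$, hence lies in $X$ --- contradicting $w\in S_X$. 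So $\Sb=S_X$ is of finite type.

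The remaining implication --- $\Sb$ of finite type $\Rightarrow\Edb(1)$ finite --- is the one I expect to be the real obstacle. Assume $\Sb$ is of finite type; by (1) it is sofic, so $\Edb(1)=\Edbs(1)=t_1t_2\cdots$ is eventually periodic, and it must be shown that it is not infinite. If it is infinite then self-admissibility is strict ($\sigma^{n}\Edb(1)<_{lex}\Edb(1)$ for $n\ge1$), which prevents $\Edb(1)$ from ending in $(\lceil\beta\rceil-1)^{\omega}$ and, in particular, forces $t_\ell<\lceil\beta\rceil-1$ for infinitely many $\ell$. For such $\ell$ the word $t_1\cdots t_{\ell-1}(t_\ell+1)$ is forbidden (it exceeds $\Edbs(1)$ at position $\ell$); its proper factors are the factors of $\Edb(1)$ --- all in $F(\Sb)$ --- together with its proper suffixes $t_j\cdots t_{\ell-1}(t_\ell+1)$, and one checks that the latter lies in $F(\Sb)$ unless $t_1\cdots t_\ell$ has a border of length $\ell-j+1$. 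Hence $t_1\cdots t_{\ell-1}(t_\ell+1)$ is a \emph{minimal} forbidden word whenever $t_1\cdots t_\ell$ is unbordered. The crux is then to exhibit infinitely many $\ell$ with $t_\ell<\lceil\beta\rceil-1$ and $t_1\cdots t_\ell$ unbordered: here one uses that the shifts $\sigma^{n}\Edb(1)$ past the pre-period are purely periodic --- finitely many words, each sharing only a bounded prefix with the non-periodic word $\Edb(1)$ --- so that the self-overlaps of $\Edb(1)$ have bounded length, which together with the period of $\Edb(1)$ not being constant equal to $\lceil\beta\rceil-1$ yields arbitrarily long unbordered prefixes ending in a sub-maximal digit. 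Arbitrarily long minimal forbidden words contradict the finiteness of $X(\Sb)$, so $\Edb(1)$ is finite. Apart from this last step, everything is bookkeeping on top of Theorem~\ref{parryth}.
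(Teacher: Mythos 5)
The paper itself does not prove this theorem: it is quoted as well known, with a pointer to Lothaire, Chapter~7. The closest internal material is Propositions~\ref{gensof} and~\ref{gensft}, which establish the exact analogues for the alternate order by the same two devices you use (right-congruence classes of the prefixes of the extremal word for soficity, and an explicit infinite family of minimal forbidden words to refute finite type). Measured against that, your part (1) and the ``if'' half of part (2) are correct and complete: the description of $F(\Sb)$ via Theorem~\ref{parryth}, the reduction of all live border constraints to the longest one using self-admissibility, the determinization argument for the converse of (1) (which works because the greatest word of length $k+1$ in $F(\Sb)$ extends the greatest word of length $k$, so the next letter is a function of the current state set), and the identity $\Sb=S_X$ with $X=A_\beta^{\,m}\setminus F(\Sb)$ are all sound.

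The gap is exactly where you place it, and the justification you offer there does not close it. You need infinitely many $\ell$ with $t_1\cdots t_\ell$ unbordered and $t_\ell<\lceil\beta\rceil-1$, and you propose to get them from ``the self-overlaps of $\Edb(1)$ have bounded length'' plus ``the period is not constant equal to $\lceil\beta\rceil-1$''. Boundedness of overlaps is not sufficient: writing $c(n)$ for the length of the longest common prefix of $\sigma^{n}\Edb(1)$ and $\Edb(1)$, the prefix $t_1\cdots t_\ell$ is bordered precisely when $\ell\in[n+1,n+c(n)]$ for some $n$, and a priori these intervals could cover all large $\ell$ even with $\sup_n c(n)<\infty$; ruling that out needs the \emph{strict} self-admissibility $\sigma^{n}\Edb(1)<_{lex}\Edb(1)$, used at the position following a maximal overlap, not just the bound. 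Concretely: let $C=\max c(n)$ over $n$ in the periodic part, pick $n^*$ large with $c(n^*)=C$, and set $\ell=n^*+C+1$. A border of length $b$ of $t_1\cdots t_\ell$ forces $1\le b\le C$ (for $n^*$ large), and combining the border equalities with $t_{n^*+k}=t_k$ for $k\le C$ and the strict inequality $t_{n^*+C+1}<t_{C+1}$ shows that $\sigma^{C+1-b}\Edb(1)$ agrees with $\Edb(1)$ on its first $b-1$ letters and then exceeds it --- contradicting admissibility. Moreover $t_\ell=t_{n^*+C+1}<t_{C+1}\le\lceil\beta\rceil-1$, so $t_\ell+1$ is a legal digit. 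With this lemma inserted your argument is complete; without it, the decisive direction of part (2) remains a plausibility claim rather than a proof.
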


It is known that the entropy of the $\beta$-shift is equal to $\log \beta$.

\bigskip

If $\beta$ is a Pisot number, then every element of $\Q(\beta) \cap [0,1]$ has an
eventually periodic $\beta$-expansion, and the $\beta$-shift $S_\beta$ is
a sofic system \cite{Bertrand,Schmidt}.

Let $C$ be an arbitrary finite alphabet of integer digits. 
The {\em normalization\index{normalization} function} in base $\beta$ on $C$
$$\nu_{\beta,C} : C^{\N} \rightarrow \A_\beta^\N$$
is the partial function which maps an infinite word $\mathbf y=(y_i)_{i \ge 1}$ over $C$,
such that $0 \le  y=\sum_{i \ge 1}y_i \beta ^{-i} \le 1$, onto the $\beta$-expansion
of $y$.
It is known \cite{Frougny92} that, when $\beta$ is a Pisot number, normalization is computable by a finite transducer 
on any alphabet $C$. Note that addition is a particular case of normalization, with
$C=\{0,\ldots,2(\lceil \beta \rceil -1)\}$.

 \section{Negative integer base}\label{int}
 
 Let $b>1$ be an integer.
 It is well known, see Knuth \cite{Knu} for instance, that every integer (positive or negative) has a unique $(-b)$-representation with digits in $A_b=\{0,1,\ldots,b-1\}$.
 Every real number (positive or negative) has a $(-b)$-representation,
 not necessarily unique, since 
 $$\big( -\dfrac{1}{b(b+1)}\big)_{-b} =\decdot 1 ((b-1)0)^\omega
 =\decdot 0(0(b-1))^\omega$$ for instance. 
The representation $\decdot 1 ((b-1)0)^\omega$ will be the admissible one.

We recall some well-known facts.
\begin{proposition}
The set of $(-b)$-representations of the positive integers is
$\{u \in \{0,1,\ldots,b-1\}^* \mid u$ does not begin with  $0$ and $|u|$ is odd$\}$.
The set of $(-b)$-representations of the negative integers is
$\{u \in \{0,1,\ldots,b-1\}^* \mid u$ does not begin with  $0$ and $|u|$ is even$\}$.
\end{proposition}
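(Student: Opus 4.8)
The plan is to rely on the folklore fact recalled above, that every integer has a \emph{unique} $(-b)$-representation over $A_b$ not beginning with $0$: this says precisely that $u\mapsto(u\decdot)_{-b}$ is a bijection from the set of words over $A_b$ not beginning with $0$ (the empty word included, mapping to $0$) onto $\Z$. Hence it suffices to prove that, for each $n\ge 1$, the words of length $n$ not beginning with $0$ represent exactly the integers of an explicit interval $[m_n,M_n]$, and then to check that the intervals with $n$ odd exhaust $\Z_{>0}$ and those with $n$ even exhaust $\Z_{<0}$; since the two families of words are disjoint, this gives both assertions at once.

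First I would compute the extreme values of $(u\decdot)_{-b}=\sum_{i=0}^{n-1}u_i(-b)^i$ over the words $u=u_{n-1}\cdots u_0$ of length $n$ with $u_{n-1}\ne 0$. As $(-b)^i>0$ for $i$ even and $(-b)^i<0$ for $i$ odd, the maximum is attained by setting $u_i=b-1$ on the positions that contribute with the favourable sign and $u_i=0$ elsewhere, and the minimum by the opposite choice --- the one subtlety being that the leading digit is forced to be at least $1$, a constraint that is active exactly when $n$ is even (so that $(-b)^{n-1}<0$). Summing the resulting geometric series gives, for $n$ odd, $m_n=(b^{n-1}+b)/(b+1)$ and $M_n=(b^{n+1}-1)/(b+1)$, and for $n$ even, $m_n=-(b^{n+1}-b)/(b+1)$ and $M_n=-(b^{n-1}+1)/(b+1)$; in either case $M_n-m_n=(b-1)b^{n-1}-1$, while $m_n\ge 1$ when $n$ is odd and $M_n\le-1$ when $n$ is even.

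Next I would note that $(u\decdot)_{-b}$ is injective on words of a fixed length $n$ --- this is immediate from the quoted uniqueness, and also follows directly: if $j$ is the least index at which two length-$n$ words $u\ne v$ differ, then $(u\decdot)_{-b}-(v\decdot)_{-b}$ equals $(-b)^j$ times an integer $\equiv u_j-v_j\pmod b$, which is nonzero since $0<|u_j-v_j|<b$. There are exactly $(b-1)b^{n-1}=M_n-m_n+1$ words of length $n$ not beginning with $0$, and they take that many pairwise distinct values, all lying in $[m_n,M_n]$; therefore they represent precisely the integers of $[m_n,M_n]$.

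Finally, the telescoping identities $m_1=1$, $M_n+1=m_{n+2}$ for $n$ odd, $M_2=-1$, and $m_n-1=M_{n+2}$ for $n$ even --- routine manipulations of the four closed forms --- show that $\{[m_n,M_n]:n\ \text{odd}\}$ partitions $\Z_{>0}$ and $\{[m_n,M_n]:n\ \text{even}\}$ partitions $\Z_{<0}$, which is exactly the claim. I expect the only error-prone step to be the extremal computation of $m_n$ and $M_n$, because the constraint ``leading digit $\ne 0$'' breaks the parity symmetry and must be treated case by case; once the four closed forms are verified, the counting and telescoping are mechanical.
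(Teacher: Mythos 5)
The paper states this proposition with no proof at all --- it is introduced by ``We recall some well-known facts'' and attributed implicitly to the folklore going back to Gr\"unwald and Knuth --- so there is nothing to compare your argument against; it stands or falls on its own. It stands: I checked the four closed forms and they are correct. For $n$ odd one gets $m_n=(b^{n-1}+b)/(b+1)$ and $M_n=(b^{n+1}-1)/(b+1)$, for $n$ even $m_n=-(b^{n+1}-b)/(b+1)$ and $M_n=-(b^{n-1}+1)/(b+1)$, each interval contains exactly $(b-1)b^{n-1}$ integers, the injectivity-on-fixed-length argument via the least differing index is sound, and the telescoping identities $m_1=1$, $M_n+1=m_{n+2}$, $M_2=-1$, $m_n-1=M_{n+2}$ all verify, so the counting argument correctly identifies the image of the length-$n$ words as $[m_n,M_n]\cap\Z$ and the two families of intervals partition $\Z_{>0}$ and $\Z_{<0}$ respectively. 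Note that your argument actually \emph{proves} the folklore bijectivity you set out to ``rely on'' (injectivity within each length plus pairwise disjointness and exhaustiveness of the intervals), so the appeal to it at the start is dispensable --- arguably a feature. One small slip in the prose: the constraint ``leading digit $\ne 0$'' is not active ``exactly when $n$ is even''; it is active for the \emph{minimum} when $n$ is odd (the leading position then carries a positive weight, so the unconstrained minimizer would put $0$ there) and for the \emph{maximum} when $n$ is even. Your displayed formulas already account for this correctly in both cases, so only the sentence, not the mathematics, needs fixing.
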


Let $A$ be a finite alphabet totally ordered, and let $\min A$ be its smallest element.
\begin{definition}
The {\em alternate order} $\prec_{alt}$ on infinite words or finite words with same length on 
$A$ is defined by:
$$u_1u_2u_3\cdots \prec_{alt} v_1v_2v_3\cdots$$ if and only if there exists $k \ge 1$ such that
$$u_i=v_i \; \;  \textrm{for}   \; \; 1 \le i <k  \; \; \; \textrm{and}  \; \; \; (-1)^k(u_k-v_k)<0.$$
\end{definition}
This order was implicitely defined in~\cite{Grunwald}.

\begin{definition}
On finite words, we define the {\em short-alternate order}, denoted $\prec_{sa}$, by:
if $u=u_{1} \cdots u_{\ell} $ and $v=v_{1} \cdots v_m$ are in $A^*$, then $u \prec_{sa} v$ if and only if 
\begin{itemize}
  \item $\ell$ and $m$ are odd, and $\ell<m$, or $\ell=m$ and $(\min A)u  \prec_{alt} (\min A)v$
  \item $\ell$ and $m$ are even, and $\ell >m$, or $\ell=m$ and $u  \prec_{alt} v$
  \item $\ell<m$ and $(\min A)^{m-\ell}u \prec_{sa} v$
   \item $\ell>m$ and $u \prec_{sa} (\min A)^{\ell - m}v$.
\end{itemize}
\end{definition}

The short-alt order is analogous to the short-lex or radix order relatively to the lexicographical order.

Denote $\langle x \rangle_{-b}$ the $(-b)$-representation of $x$.
We have the following result.
\begin{proposition}
If $x$ and $y$ are integers, 
$x<y$ if and only if $\langle x \rangle_{-b} \prec_{sa} \langle y \rangle_{-b}$.\\
If $x$ and $y$ are real numbers from the interval $[-\tfrac{b}{b+1}, \tfrac{1}{b+1})$
then $x<y$ if and only if $\langle x \rangle_{-b} \prec_{alt} \langle y \rangle_{-b}$.
\end{proposition}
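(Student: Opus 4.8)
The plan is to reduce both equivalences to one elementary estimate about the first position at which two $(-b)$-representations disagree. Let $w=w_1w_2\cdots$ and $w'=w'_1w'_2\cdots$ be distinct words over $A_b=\{0,1,\ldots,b-1\}$ that are either both infinite or both finite of the same length $L$, and attach to position $i$ the exponent $e_i=-i$ in the infinite case and $e_i=L-i$ in the finite case, so that the value represented by $w$ is $\sum_i w_i(-b)^{e_i}$. If $k$ is the least index with $w_k\neq w'_k$, then $|(w_k-w'_k)(-b)^{e_k}|\geq b^{e_k}$, whereas $\bigl|\sum_{i>k}(w_i-w'_i)(-b)^{e_i}\bigr|\leq\sum_{i>k}(b-1)b^{e_i}$, which equals $b^{e_k}-1<b^{e_k}$ in the finite case and equals $b^{e_k}$ in the infinite case. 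Hence the difference of the two represented values has the sign of its term of index $k$, that is, the sign of $(-1)^{e_k}(w_k-w'_k)$ (since $(-b)^{e_k}$ has the sign of $(-1)^{e_k}$); the domination is strict in the finite case, while in the infinite case it can fail only in the borderline situation where $|w_k-w'_k|=1$ and the geometric bound is attained, in which event the two values coincide.

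For the integers, I would take distinct $x,y\in\Z$ and prepend $0$'s to $\langle x\rangle_{-b}$ and to $\langle y\rangle_{-b}$ so as to obtain words $\bar u,\bar v$ of a common \emph{even} length $L$; leading $0$'s do not change the represented value, so $(\bar u\decdot)_{-b}=x$, $(\bar v\decdot)_{-b}=y$ and $\bar u\neq\bar v$. Since $L$ is even, $(-1)^{e_k}=(-1)^{L-k}=(-1)^k$ at the first disagreement $k$, so the estimate gives $x<y\iff(-1)^k(\bar u_k-\bar v_k)<0\iff\bar u\prec_{alt}\bar v$. It then remains to identify $\bar u\prec_{alt}\bar v$ with $\langle x\rangle_{-b}\prec_{sa}\langle y\rangle_{-b}$. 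I would first note that prepending $00$ to two words of the same length changes neither their value nor the truth of $\prec_{alt}$, so the condition does not depend on the even $L$ chosen; then, using $\min A_b=0$ together with the fact just established that the $(-b)$-representation of a positive integer has odd length and that of a negative integer has even length ($0$ being the empty word), I would check, clause by clause in the definition of $\prec_{sa}$, that $\langle x\rangle_{-b}\prec_{sa}\langle y\rangle_{-b}$ holds exactly when $\bar u\prec_{alt}\bar v$; the recursive clauses simply undo, one step at a time, the padding by leading $0$'s. This last verification is a routine, if slightly tedious, case distinction.

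For the reals, I would take $x,y\in[-\tfrac{b}{b+1},\tfrac{1}{b+1})$ and write $\langle x\rangle_{-b}=(x_i)_{i\ge 1}$, $\langle y\rangle_{-b}=(y_i)_{i\ge 1}$ for their admissible $(-b)$-expansions, using that distinct numbers have distinct $(-b)$-expansions. If $x\neq y$, the two expansions disagree, first at some index $k$; the borderline case of the estimate would force $x-y=0$, hence $\langle x\rangle_{-b}=\langle y\rangle_{-b}$, a contradiction, so the borderline does not occur and the sign of $x-y$ is that of $(-1)^k(x_k-y_k)$ (here $e_k=-k$). Therefore $x<y\iff(-1)^k(x_k-y_k)<0\iff\langle x\rangle_{-b}\prec_{alt}\langle y\rangle_{-b}$; and when $x=y$ both $x<y$ and $\langle x\rangle_{-b}\prec_{alt}\langle y\rangle_{-b}$ fail, so the equivalence holds in all cases.

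The step I expect to be the real obstacle is the exclusion of the borderline in the infinite case. It is legitimate precisely because admissibility --- equivalently, the fact that the interval $[-\tfrac{b}{b+1},\tfrac{1}{b+1})$ is closed only on the left --- forbids an expansion whose tail is $(0(b-1))^\omega$, a sequence whose value is the excluded endpoint $\tfrac{1}{b+1}$; this plays here the role that greediness (the selection of the $\beta$-expansion among all $\beta$-representations) plays in the positive base case. Everything else --- the geometric estimate, the zero-padding, and the unwinding of the recursive definition of $\prec_{sa}$ --- is elementary.
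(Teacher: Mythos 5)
The paper states this proposition without proof (it is introduced as one of the facts about negative integer bases that are ``more or less folklore''), so there is nothing to compare your argument against line by line; I can only assess it on its own terms, and it is correct. Your central estimate --- at the first index $k$ of disagreement the term $(w_k-w'_k)(-b)^{e_k}$ dominates the tail, strictly for finite words and up to a single borderline case for infinite ones --- is the right engine, and the computations ($\sum_{i>k}(b-1)b^{e_i}=b^{e_k}-1$ resp. $=b^{e_k}$) are exact. The reduction of the integer case to $\prec_{alt}$ on zero-padded words of common even length is sound, since $(-1)^{L-k}=(-1)^k$ for $L$ even and prepending $00$ changes neither values nor the order; the remaining identification of this condition with the clauses of $\prec_{sa}$ is indeed a finite case check (the parity clauses encode that positive integers have odd-length and negative integers even-length representations), which you are entitled to call routine. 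One remark: in the real case your closing paragraph attributes the exclusion of the borderline to admissibility and the half-open interval, but this is not where the weight lies. You are comparing the canonical expansions of two numbers, and the map $x\mapsto\langle x\rangle_{-b}$ is a function whose value determines $x$; hence $\langle x\rangle_{-b}\neq\langle y\rangle_{-b}$ already forces $x\neq y$, and $x-y\neq 0$ is all that is needed to rule out the borderline. Admissibility would only become relevant if one were comparing arbitrary $(-b)$-representations, where distinct words can share a value. This does not affect the validity of the proof in your third paragraph, which is self-contained and correct.
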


\begin{example}
In base $-2$, 
$\langle 3 \rangle_{-2}=111$, $\langle 4 \rangle_{-2}=100$,
$\langle 6 \rangle_{-2}= 11010$, and $111 \prec_{sa} 100  \prec_{sa} 11010 $.
\end{example}

 \begin{proposition}\label{conv_b}
 The function that maps the $b$-representation of a positive integer
 to its $-b$-representation can be realized by a finite right sequential transducer.
 \end{proposition}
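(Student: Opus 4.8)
The plan is to construct the transducer explicitly and argue correctness via a carry/remainder analysis. Recall that if $u = u_{k-1}\cdots u_1 u_0$ is the $b$-representation of a positive integer $n$ (so $u_{k-1}\neq 0$), then $n = \sum_{i=0}^{k-1} u_i b^i$, while a $(-b)$-representation $v = v_{m-1}\cdots v_0$ encodes $\sum_{i=0}^{m-1} v_i(-b)^i$. Since $b^i$ and $(-b)^i$ agree on even positions and differ by sign on odd positions, the idea is to process the digits of $u$ from the least significant end (hence a \emph{right} transducer), keeping track of a small carry that records how much must be propagated to the next position. Concretely, at an even position a digit $u_i$ is left essentially unchanged, while at an odd position a nonzero digit $u_i$ must be replaced using the identity $u_i(-b)^i = -u_i b^i = (b-u_i)(-b)^i - b\cdot(-b)^i$ when $u_i\neq 0$, i.e. one rewrites $u_i \mapsto b - u_i$ at that position and adds a carry of $1$ into the next (even) position; the carry is then absorbed at the even position, possibly triggering a further carry of $1$ when the digit would reach $b$.

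First I would set up the state set. The transducer reads the input word $u$ from right to left over the alphabet $A_b=\{0,\dots,b-1\}$ and outputs over the same alphabet. A state must record two bounded pieces of information: the parity of the position currently being read (even or odd, since the rewriting rule alternates), and the pending carry $c\in\{0,1\}$ to be added at the current position. So $Q = \{\text{even},\text{odd}\}\times\{0,1\}$, a finite set, with initial (rightmost) state $(\text{even},0)$. Second, I would write down the transitions. From $(\text{even},c)$ reading digit $a$: form $a+c$; if $a+c\le b-1$ output $a+c$ and move to $(\text{odd},0)$; if $a+c=b$ output $0$ and move to $(\text{odd},1)$. From $(\text{odd},c)$ reading digit $a$: the quantity to represent is $a(-b)^{\text{odd}} + c(-b)^{\text{odd}}$; if $a=0$ and $c=0$, output $0$ and go to $(\text{even},0)$; otherwise set $a' = a+c$ (note $a+c\ge 1$ here) and output $b - a'$ if $a'\le b$ — one checks $1\le a'\le b$ so $0\le b-a'\le b-1$ — and carry $1$ into the next even position, i.e. go to $(\text{even},1)$. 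Third, I would handle termination: a path is accepting when, having consumed all of $u$, the machine is in a state from which the accumulated carry, together with the (empty) remaining input, yields no further output — the correct terminal states are $(\text{odd},0)$ and $(\text{even},0)$, together with the states where a residual carry must be emitted as one or two extra leading digits; since $u_{k-1}\neq 0$, a finite bounded suffix completes the output. I would describe these finishing edges (reading $\varepsilon$) so that the output has no leading $0$ and the correct length parity dictated by the preceding proposition.

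The correctness proof is then an induction on the length of the processed suffix: after reading $u_j\cdots u_0$ and being in state $(\pi, c)$ having output $v_{\ell}\cdots v_0$, the invariant is
\[
\sum_{i=0}^{j} u_i b^i \;=\; \sum_{i=0}^{\ell} v_i(-b)^i \;+\; c\cdot(\pm b^{\,j+1}),
\]
with the sign determined by the parity $\pi$ of position $j+1$; the base case $j=-1$ is trivial, and each transition above is exactly the arithmetic identity needed to maintain it. At the end $j=k-1$, the residual carry term is flushed by the finishing edges, and by the preceding proposition the resulting word, read with no leading $0$, is the genuine $(-b)$-representation $\langle n\rangle_{-b}$. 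Uniqueness of that representation (already recorded above for integers) guarantees the transducer computes the intended function. The main obstacle I anticipate is not the core recursion — which is forced — but the bookkeeping at the two ends: ensuring the carry never exceeds $1$ (so the state set stays finite), and arranging the finishing edges so that the output length has the parity required by the sign of $n$ and carries no spurious leading zero. These are finite case checks, bounded by $b$, and do not affect finiteness or sequentiality of the transducer.
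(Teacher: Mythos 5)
Your construction is correct and is essentially the paper's: both are right-to-left finite-state carry-propagating transducers, the paper packaging the position parity into two-letter blocks (a two-state machine on digit pairs, Fig.~\ref{cb}) where you keep parity and carry as separate state components in a four-state letter-to-letter machine, and your invariant is precisely the semantics of the paper's states (which the paper leaves implicit). One small nit: the motivating identity $u_i(-b)^i=-u_ib^i=(b-u_i)(-b)^i-b\cdot(-b)^i$ has a sign slip, since for odd $i$ the last expression equals $+u_ib^i$ rather than $-u_ib^i$ --- but $(b-u_i)(-b)^i+b^{i+1}=u_ib^i$ is exactly the relation your transitions actually implement, so nothing downstream is affected.
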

 \begin{proof}
 In Fig.~\ref{cb}, $0 \le c \le b-1$, $1 \le d \le b-1$, $0 \le e \le b-2$.
 The processing is done from right to left by 2-letter blocks. 
 A finite word $x_{k-1}\cdots x_0$ which is the $b$-expansion of $x$
 is transformed
 by the transducer into a finite word $y_k \cdots y_0$ which is the
 $(-b)$-expansion of $x$.
 It is 
 straightforward to transform this transducer into a finite
 right sequential transducer.

 \begin{figure}[h]
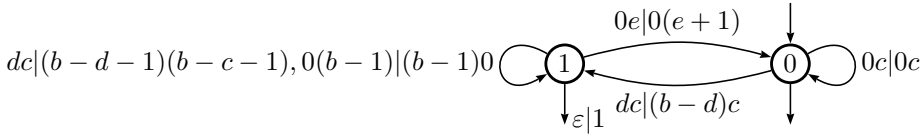

 \begin{center}
 \VCDraw{%
 \begin{VCPicture}{(-1,-1)(4,2)}
 \State[1]{(-1,0)}{A}
 \State[0]{(4,0)}{B}
 \Initial[n]{B} \Final[s]{B} \FinalL{s}{A}{\varepsilon|1}
 \ArcL[.5]{B}{A}{dc|(b-d)c}
 \ArcL[.5]{A}{B}{0e|0(e+1)}
 \LoopW[.5]{A}{dc|(b-d-1)(b-c-1), 0(b-1)|(b-1)0}
 \LoopE[.5]{B}{0c|0c}

 \end{VCPicture}%
         }
 \end{center}
 \caption{Finite right sequential transducer realizing conversion from base $b$ to base $-b$}\label{cb}
 \end{figure}
\end{proof}
 
 \begin{example}\label{conv2}
 Base $-2$. 
  
  \begin{figure}[h]
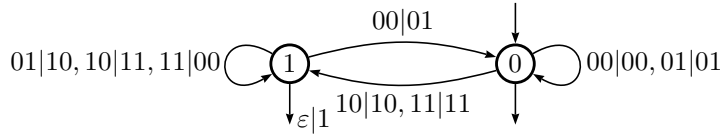

 \begin{center}
 \VCDraw{%
 \begin{VCPicture}{(-1,-1)(4,2)}
 \State[1]{(-1,0)}{A}
 \State[0]{(4,0)}{B}
 \Initial[n]{B} \Final[s]{B} \FinalL{s}{A}{\varepsilon|1}
 \ArcL[.5]{B}{A}{10|10,11|11}
 \ArcL[.5]{A}{B}{00|01}
 \LoopW[.5]{A}{01|10,10|11,11|00}
 \LoopE[.5]{B}{00|00,01|01}

 \end{VCPicture}%
         }
 \end{center}
 \caption{Finite right sequential transducer realizing conversion from base $2$ to base $-2$}\label{c2}
 \end{figure}
 \end{example}

\section{Symbolic dynamical systems and the alternate order}\label{alt}

We have seen in the previous section that the alternate order is the tool to compare
numbers written in a negative base. In this section we give general results on 
symbolic dynamical systems defined by the alternate order. This is analogous 
to the symbolic dynamical systems defined by the lexicographical order, see~\cite{cant}.
Let $A$ be a totally ordered finite alphabet.
\begin{definition}
A word $s=s_1s_2 \cdots$ in $A^\N$ is said to be an {\em alternately shift minimal} word (asmin-word for short)
if $s_1=\max A$ and
$s$ is smaller than, or equal to, any of its shifted images in the alternate order:
for each $n \ge 1$,
$s \preceq_{alt} s_n s_{n+1}\cdots$. 
\end{definition}

Let 
$$S(s)=\{w=(w_i)_{i  \in \Z} \in A^\Z \mid \forall
n, \; s \preceq_{alt} w_nw_{n+1} \cdots \}.$$
We construct a countable infinite automaton ${\cal A}_{S(s)}$ as follows (see Fig.~\ref{autom}, where
$[a,b]$ denotes the set $\{a,a+1,\ldots,b\}$ if $a \le b$, $\varepsilon$ else.
It is assumed in Fig.~\ref{autom} that 
$s_1 >s_j$ for $j \ge 2$.)
The set of states is $\N$.
For each state $i \ge 0$, there is an edge $i \stackrel{s_{i+1}}{\longrightarrow} i+1$.
Thus the state $i$ is the name corresponding to the path labelled $s_1 \cdots s_{i}$.
If $i$ is even, then
for each $a$ such that
$0 \le a \le s_{i+1}-1$, there is an edge $i \stackrel{a}{\longrightarrow} j$, where $j$ is such that
$s_1 \cdots s_j$ is the suffix of maximal length of $s_1 \cdots s_{i}a$. If $i$ is odd,
then for each $b$ such that $s_{i+1}+1 \le b \le s_1-1 $,
there is an edge $i \stackrel{b}{\longrightarrow} j$ where $j$ is maximal such that $s_1 \cdots s_j$ is a suffix of $s_1 \cdots s_{i}b$;
and if $s_{i+1} < s_1$ there is one edge $i \stackrel{s_1}{\longrightarrow} 1$.
By contruction, the deterministic automaton ${\cal A}_{S(s)}$ recognizes exactly the words $w$ such that every suffix of $w$ is
$\succeq_{alt} s$ and the result below follows.

\begin{figure}[h]
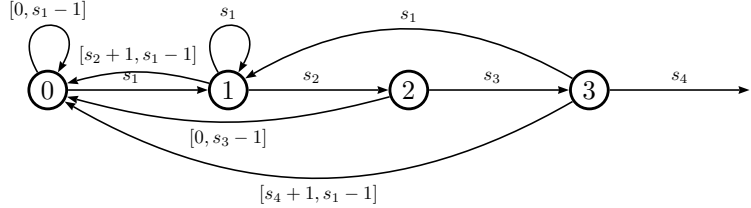

\centering
\VCDraw{%
\begin{VCPicture}{(-1,-2)(11,2.8)}
%
\State[0]{(-2,0)}{A}\State[1]{(2,0)}{B}
\State[2]{(6,0)}{C}\State[3]{(10,0)}{D}
\HideState\State{(14,0)}{E}
\ChgEdgeLabelScale{0.7}
\EdgeL{A}{B}{s_1}
\EdgeL{B}{C}{s_2}
\EdgeL{C}{D}{s_3}

\LoopN[0.5]{A}{[0,s_1-1]}\LoopN[0.5]{B}{s_1}
\ArcR[0.5]{B}{A}{[s_2+1,s_1-1]}
\ArcL[0.5]{C}{A}{[0,s_3-1]}
\LArcR[0.5]{D}{B}{s_1}

\LArcL[0.5]{D}{A}{[s_4+1,s_1-1]}

\EdgeL[0.5]{D}{E}{s_4}

\end{VCPicture}
}
\caption{The automaton ${\cal A}_{S(s)}$}
\label{autom}
\end{figure}

\begin{proposition}\label{aut}
The subshift $S(s)=\{w=(w_i)_{i  \in \Z} \in A^\Z \mid \forall
n, \; s \preceq_{alt} w_nw_{n+1} \cdots \}$ is recognizable by the countable infinite automaton ${\cal A}_{S(s)}$.
\end{proposition}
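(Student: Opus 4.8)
The plan is to show that the language of finite labels of paths in ${\cal A}_{S(s)}$ that start at state $0$ coincides with the set of factors of elements of $S(s)$, and then invoke the standard fact that a bi-infinite word belongs to a subshift iff all of its factors are allowed. First I would make precise the ``suffix of maximal length'' bookkeeping: for a finite word $w=w_1\cdots w_n$ all of whose suffixes are $\succeq_{alt}s$ (where the comparison of a shorter word $u$ with $s$ means $u \preceq_{alt} s_1\cdots s_{|u|}$, i.e. $u$ is a prefix-comparable approximation), define $\varphi(w)$ to be the length of the longest suffix of $w$ that is a prefix of $s$. The claim is that reading $w$ from state $0$ in ${\cal A}_{S(s)}$ leads exactly to state $\varphi(w)$, and that $w$ is readable at all iff every suffix of $w$ is $\succeq_{alt}s$ in this sense. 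This is an induction on $|w|$: the inductive step is precisely the case analysis built into the definition of the edges (even states $i$: a letter $a<s_{i+1}$ forces, by the alternate order with $k=i+1$ even, that $s_1\cdots s_i a \succ_{alt} s_1\cdots s_{i+1}$, so no suffix constraint is violated and we jump to the longest suffix that is a prefix of $s$; a letter $a>s_{i+1}$ would give $s_1\cdots s_i a \prec_{alt} s_1\cdots s_{i+1}$ and is correctly absent; the letter $s_{i+1}$ continues along the spine. For odd $i$ the inequalities flip, which is why the interval of allowed ``branching'' letters is $[s_{i+1}+1,s_1-1]$ together with $s_1$ looping back to state $1$).

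Next I would transfer this to bi-infinite words. By definition $w=(w_i)_{i\in\Z}\in S(s)$ iff every suffix $w_nw_{n+1}\cdots$ satisfies $s\preceq_{alt}w_nw_{n+1}\cdots$; by a routine compactness/limit argument this is equivalent to: every finite factor $w_m\cdots w_n$ has the property that all of its suffixes are $\succeq_{alt}s$ in the finitary sense above (a strict violation on an infinite suffix is witnessed at a finite prefix, and conversely a finitary violation on a factor propagates to a genuine violation on an infinite suffix by shifting). By the previous paragraph this says exactly that every finite factor of $w$ is the label of a path in ${\cal A}_{S(s)}$ issued from state $0$. Since ${\cal A}_{S(s)}$ is a one-sided-deterministic graph in which every state is reachable from $0$ and (one checks) co-reachable, a bi-infinite word is the label of a bi-infinite path iff all its finite factors are path labels; hence $S(s)$ is exactly the set of labels of bi-infinite paths in ${\cal A}_{S(s)}$, which is the assertion of the proposition.

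The one point needing care — and the main obstacle — is the correct handling of the ``suffix of maximal length'' function when $s$ itself has repetitions among $s_1\cdots s_i$, i.e. when the word being read re-enters the spine not at state $1$ but further along, and more delicately the implicit hypothesis $s_1>s_j$ for $j\ge2$ drawn in Figure~\ref{autom} versus the general asmin-word case $s_1=\max A$ where equality $s_1=s_j$ may occur. In the general case the edge $i\stackrel{s_1}{\to}1$ must be replaced by an edge $i\stackrel{s_1}{\to}j$ with $j$ the length of the longest suffix of $s_1\cdots s_i s_1$ that is a prefix of $s$ (which is why the text writes ``$j$ maximal such that $s_1\cdots s_j$ is a suffix''), and one must verify that this reassignment is consistent with the asmin-word hypothesis so that no forbidden suffix is ever created. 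I would isolate this in a short lemma: if $s$ is an asmin-word and $u$ is a suffix of $s_1\cdots s_i$ that is a prefix of $s$, then $u\succeq_{alt}s$ automatically, so landing in ``the longest such suffix'' never loses admissibility; this is exactly the alternate-order analogue of the classical fact for the lexicographic $\beta$-shift. Everything else is the bookkeeping induction sketched above and carries through verbatim.
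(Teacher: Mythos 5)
Your argument is correct and follows the same route as the paper, which in fact gives no detailed proof at all: it merely asserts that ``by construction'' the deterministic automaton ${\cal A}_{S(s)}$ recognizes exactly the words all of whose suffixes are $\succeq_{alt} s$, leaving implicit both the follower-state induction and the treatment of the case where $s_1=s_j$ for some $j\ge 2$ (the hypothesis $s_1>s_j$ is only assumed for the figure). Apart from two cosmetic slips --- at an even state $i$ the relevant index $k=i+1$ is odd, not even, and your truncated comparison should read $s_1\cdots s_{|u|}\preceq_{alt}u$ rather than the reverse --- your elaboration, including the key lemma that landing in the longest suffix of the word read that is a prefix of $s$ never loses admissibility, supplies exactly the verification the paper omits.
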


\begin{proposition}\label{gensof}
The subshift $S(s)=\{w=(w_i)_{i  \in \Z} \in A^\Z \mid \forall
n, \; s \preceq_{alt} w_nw_{n+1} \cdots \}$ is sofic if and only if $s$ is eventually periodic.
\end{proposition}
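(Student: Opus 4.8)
The plan is to mimic the classical dichotomy for $\beta$-shifts (Theorem stating that $\Sb$ is sofic iff $\ddb$ is eventually periodic), replacing the lexicographic order by the alternate order and using the countable automaton ${\cal A}_{S(s)}$ of Proposition~\ref{aut}. Recall that, by the construction preceding Proposition~\ref{aut}, the state $i$ of ${\cal A}_{S(s)}$ is reached exactly by the words whose longest suffix that is a prefix of $s$ is $s_1\cdots s_i$; equivalently $i$ is the name of the right-congruence-like class determined by the prefix $s_1\cdots s_i$ of $s$. Since $S(s)=F(S(s))$-words read along bi-infinite paths, $S(s)$ is sofic if and only if $F(S(s))$ is recognized by a \emph{finite} automaton, i.e.\ iff ${\cal A}_{S(s)}$ can be folded onto a finite quotient; by the Myhill--Nerode argument recalled in the preliminaries this happens iff the right congruence modulo $F(S(s))$ has finite index.

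For the ``if'' direction, suppose $s=uv^\omega$ with $|u|=p$, $|v|=q$. I would show directly that states $i$ and $i+q$ of ${\cal A}_{S(s)}$ are equivalent for all $i\ge p$ (and, more carefully, for $i\ge p$ with the right parity, since the outgoing edges at a state depend on the parity of $i$ — so one may need period $q$ or $2q$ depending on whether $q$ is even or odd). Indeed, for $i\ge p$ the prefix $s_1\cdots s_i$ and $s_1\cdots s_{i+q}$ have the same set of ``continuations'' relative to $s$ because $s_{i+1}s_{i+2}\cdots = s_{i+q+1}s_{i+q+2}\cdots$, and the target of every edge (computed via longest suffix that is a prefix of $s$) depends only on the tail $s_{i+1}\cdots$ and on the bounded data $s_1\cdots s_{\max(p,2q)}$. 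Hence ${\cal A}_{S(s)}$ has finitely many equivalence classes of states, the quotient is a finite automaton recognizing $F(S(s))$, and $S(s)$ is sofic.

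For the ``only if'' direction, I would argue contrapositively: if $s$ is not eventually periodic, the states of ${\cal A}_{S(s)}$ are pairwise inequivalent, so no finite quotient exists. Concretely, given $i<j$, I want a word $w$ accepted from exactly one of the states $i,j$. Using that $s_1\cdots s_i\ne s_1\cdots s_j$ as genuinely different prefixes of the aperiodic word $s$, pick the first position $k\le i$ where $s_{i+1}s_{i+2}\cdots$ and $s_{j+1}s_{j+2}\cdots$ disagree; feeding the common part and then a suitably chosen digit (one that keeps all suffixes $\succeq_{alt} s$ when appended to $s_1\cdots s_i$ but forces a suffix $\prec_{alt} s$ when appended to $s_1\cdots s_j$, or vice versa) separates the two states. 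The parity bookkeeping in the definition of $\prec_{alt}$ — the sign $(-1)^k$ — is exactly what makes such a separating digit available on the correct side, and this is the step I expect to be the main obstacle: one must check that aperiodicity of $s$ really does yield, for every pair $i<j$, a distinguishing continuation respecting the alternating inequality (as opposed to merely distinct prefixes), which is the analogue of the classical fact that distinct tails of $\ddbs(1)$ give inequivalent states of the $\beta$-shift automaton. Once the states are shown pairwise inequivalent, the right congruence modulo $F(S(s))$ has infinite index, so $F(S(s))$ is not regular and $S(s)$ is not sofic, completing the equivalence. \QED
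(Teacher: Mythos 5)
Your plan is the paper's proof: both directions run through the right congruence modulo $F(S(s))$ and the states of ${\cal A}_{S(s)}$, and your parity remark in the ``if'' direction (identify states at distance $q$ if the period $q$ is even, at distance $2q$ if it is odd) is exactly the paper's case split. The one genuine issue is the step you yourself flag as ``the main obstacle'': you assert that a ``suitably chosen digit'' separates the states attached to two distinct prefixes of an aperiodic $s$, but you do not exhibit it, and the argument is incomplete without it.

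Here is how the paper closes that step. If $s$ is not eventually periodic, then any two of its tails are distinct (two equal tails would force eventual periodicity), so there are infinitely many indices with pairwise distinct tails, and it suffices to separate the corresponding prefixes pairwise. For two indices $i_j<i_\ell$ \emph{of the same parity}, let $v=s_{i_j}\cdots s_{i_j+q-1}=s_{i_\ell}\cdots s_{i_\ell+q-1}$ be the longest common prefix of the tails starting at $i_j$ and $i_\ell$ (possibly empty), so that $s_{i_j+q}\neq s_{i_\ell+q}$, and assume without loss of generality $(-1)^{i_j+q}(s_{i_j+q}-s_{i_\ell+q})>0$. The separating continuation is $v\,s_{i_\ell+q}$: appended to $s_1\cdots s_{i_\ell-1}$ it gives the prefix $s_1\cdots s_{i_\ell+q}$ of $s$, which lies in $F(S(s))$ because $s$ is alternately shift minimal; appended to $s_1\cdots s_{i_j-1}$ it gives a word whose first disagreement with $s$ occurs at position $i_j+q$ with the forbidden sign, hence a word not in $F(S(s))$. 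So the ``suitable digit'' is not found by an abstract search: it is the actual digit of the \emph{other} tail at the first disagreement. Distinctness of the tails guarantees that exactly one of the two digits lies on the forbidden side of the other (whatever the parity of $i_j+q$), and using a genuine prefix of $s$ on the legal side makes admissibility there automatic. Indices of different parity are handled separately (the paper simply asserts their non-congruence, the outgoing constraints pointing in opposite directions). Two minor slips in your sketch: the first disagreement of the two tails need not occur at an offset $k\le i$, and you do not need \emph{all} states to be pairwise inequivalent --- an infinite family of pairwise inequivalent ones already defeats regularity of $F(S(s))$.
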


\begin{proof}
The subshift $S(s)$ is sofic if and only if the set of its finite factors $F(S(s))$ is recognizable by 
a finite automaton. Given a word $u$ of $A^*$, denote by $[u]$
the right class of $u$ modulo $F(S(s))$. Then in the automaton ${\cal A}_{S(s)}$, for each state
$i \ge 1$, $i=[s_1 \cdots s_i]$, and
$0=[\varepsilon]$.
Suppose that $s$ is eventually periodic, $s=s_1 \cdots s_m(s_{m+1}  \cdots s_{m+p})^\omega$, with $m$ and $p$ minimal. Thus, for each $k \ge 0$ and each $0 \le i \le p-1$, $s_{m+pk+i}=s_{m+i}$. \\
{\sl Case 1}: $p$ is even. Then $m+i=[s_1 \cdots s_{m+i}]=[s_1 \cdots s_{m+pk+i}]$ for every 
$k \ge 0$ and $0 \le i \le p-1$. Then the set of states of ${\cal A}_{S(s)}$ is $\{0,1,\ldots,m+p-1\}$.\\
{\sl Case 2}: $p$ is odd. Then $m+i=[s_1 \cdots s_{m+i}]=[s_1 \cdots s_{m+2pk+i}]$ for every 
$k \ge 0$ and $0 \le i \le 2p-1$. The set of states of ${\cal A}_{S(s)}$ is $\{0,1,\ldots,m+2p-1\}$.\\
Conversely, suppose that $s$ is not eventually periodic. 
Then there exists an infinite sequence of indices
$i_1<i_2<\cdots $ such that the sequences $s_{i_k}s_{i_k+1}\cdots $ are all different for all $k \ge 1$.
Take any pair $(i_j,i_\ell)$, $j,\ell \ge 1$. If $i_j$ and $i_\ell$ do not have the same parity,
then $s_1 \cdots s_{i_{j}}$ and $s_1 \cdots s_{i_{\ell}}$ are not right congruent modulo $F({S(s)})$.
If $i_j$ and $i_\ell$ have the same parity, there exists $q \ge 0$ such that
$s_{i_{j}} \cdots s_{i_{j}+q-1}=s_{i_{\ell}} \cdots s_{i_{\ell}+q-1}=v$ and, for instance, 
$(-1)^{i_{j}+q}(s_{i_{j}+q} - s_{i_{\ell}+q})>0$
(with the convention that, if $q=0$ then $v=\varepsilon$).
Then
$s_1 \cdots s_{i_{j}-1}vs_{i_{j}+q} \in F({S(s)})$,
$s_1 \cdots s_{i_{\ell}-1}vs_{i_{\ell}+q} \in F({S(s)})$, but
$s_1 \cdots s_{i_{j}-1}vs_{i_{\ell}+q}$ does not belong to $F({S(s)})$. Hence
$s_1 \cdots s_{i_{j}}$ and $s_1 \cdots s_{i_{\ell}}$ are not right congruent modulo $F({S(s)})$,
so the number of right congruence classes is infinite
and $F({S(s)})$ is thus not recognizable by a finite automaton.
\end{proof}

\begin{proposition}\label{gensft}
The subshift ${S(s)}=\{w=(w_i)_{i  \in \Z} \in A^\Z \mid \forall
n, \; s \preceq_{alt} w_nw_{n+1} \cdots \}$ is a subshift of finite type if and only if $s$ is purely periodic.
\end{proposition}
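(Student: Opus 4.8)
The plan is to mimic the proof of Proposition~\ref{gensof}, but now tracking when the set $X(S(s))$ of minimal forbidden factors is finite rather than when $F(S(s))$ is merely regular. Recall that a subshift is of finite type exactly when $X(S(s))$ is finite, and that finite type implies sofic; so by Proposition~\ref{gensof} we may assume throughout that $s$ is eventually periodic, say $s=s_1\cdots s_m(s_{m+1}\cdots s_{m+p})^\omega$ with $m,p$ minimal, and we must show that $S(s)$ is of finite type if and only if $m=0$, i.e. $s$ is purely periodic.

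\emph{Purely periodic $\Rightarrow$ finite type.} First I would suppose $s=(s_1\cdots s_p)^\omega$. I want to exhibit a finite set $X$ with $S(s)=S_X$. The natural candidate is the set of ``first violations'': a bi-infinite word $w$ lies in $S(s)$ iff no suffix $w_nw_{n+1}\cdots$ is $\prec_{alt} s$, and since $s$ is periodic with period $p$, whether $w_n\cdots \prec_{alt} s$ is decided by comparing a bounded-length window $w_n\cdots w_{n+k}$ (for $k<$ the first position of disagreement, which must occur within one period once we know it is a strict inequality — here purely periodicity is what lets us bound $k$ by, say, $2p$, handling the parity subtlety exactly as in Case~2 of Proposition~\ref{gensof}). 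Concretely, I would take $X$ to be the set of words $s_1\cdots s_{i}a$ with $0\le a\le s_{i+1}-1$ for $i$ even, and $s_1\cdots s_i b$ with $s_{i+1}+1\le b\le s_1-1$ for $i$ odd (indices read mod $p$, using $s_1=\max A$ so that $b\le s_1-1$ and the wrap to state $1$ is available), for $i$ ranging over a single period — this is finite — together with the words of length $1$ not equal to $s_1$ at an ``even'' reset, if needed to force the leading $\max A$ constraint; then check $F(S(s))=S_X\cap A^*$ directly from the automaton ${\cal A}_{S(s)}$, whose states in Case~1 form the finite set $\{0,\ldots,p-1\}$ and in Case~2 the finite set $\{0,\ldots,2p-1\}$.

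\emph{Finite type $\Rightarrow$ purely periodic.} For the converse I would argue contrapositively: assume $s$ is eventually periodic but not purely periodic, so $m\ge 1$ and $s_m\ne s_{m+p}$ (minimality of $m$). The idea is that the transient prefix $s_1\cdots s_m$ forces an ``arbitrarily deep'' dependence: for every $N$ one can find a word of length $>N$ that is forbidden but all of whose proper factors are allowed, so $X(S(s))$ is infinite. To produce such words I would use the comparison structure of $\prec_{alt}$ near the junction between the preperiod and the period. Specifically, iterate the period to get, for each large $k$, a prefix $u_k=s_1\cdots s_{m+pk}$; since $s$ is not purely periodic there is a letter $a$ (depending on the parity of $m+pk$, chosen as in Proposition~\ref{gensof}) such that $u_k a\notin F(S(s))$ because the suffix $s_1\cdots s_{m+pk}a$ fails the alternate-order test against $s$ at position $m+pk+1$ — whereas, because $s_m\ne s_{m+p}$, no shorter suffix of $u_k a$ already fails, so $u_k a$ (after trimming a possible forbidden prefix, which I will argue cannot occur for suitably chosen $a$) is in $X(S(s))$. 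Letting $k\to\infty$ gives infinitely many elements of $X(S(s))$ of unbounded length, so $S(s)$ is not of finite type.

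\textbf{Main obstacle.} The delicate point is the parity bookkeeping inherent to $\prec_{alt}$: the position $k$ of first disagreement enters through the sign $(-1)^k$, so ``a suffix is $\prec_{alt} s$'' is not simply a prefix condition as in the lexicographic case, and doubling the period (Case~2 above) is what restores a clean finite-state description. In the converse direction the same parity issue means I must be careful that the extending letter $a$ witnessing forbiddenness genuinely produces a \emph{minimal} forbidden word — i.e. that lengthening the transient prefix does not accidentally create a shorter forbidden factor inside $u_ka$. I expect this to require exactly the observation $s_m\ne s_{m+p}$ from minimality of the preperiod, used to show the first alternate-order failure occurs only at the very end; verifying this cleanly, and handling the boundary letter at the preperiod/period junction, is where the real work of the proof lies.
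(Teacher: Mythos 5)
Your overall strategy is the same as the paper's: in the forward direction, exhibit a finite defining set $X$ of ``first violations'' of the alternate-order condition, and in the converse direction, pump the periodic part to produce arbitrarily long minimal forbidden words so that $X(S(s))$ is infinite. But two concrete points do not work as written. In the forward direction, the set $X$ you write down is the \emph{complement} of what you need: the words $s_1\cdots s_i a$ with $0\le a\le s_{i+1}-1$ for $i$ even (and $s_{i+1}+1\le b\le s_1-1$ for $i$ odd) are precisely the labels of the allowed transitions of the automaton $\mathcal{A}_{S(s)}$, i.e.\ the one-letter extensions for which $s\prec_{alt} s_1\cdots s_i a\cdots$ already holds; taking them as forbidden words would exclude essentially everything. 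The forbidden extensions are the ones with the inequality reversed, $X=\{s_1\cdots s_{n-1}b \mid (-1)^n(b-s_n)<0\}$, i.e.\ $b>s_n$ at odd positions and $b<s_n$ at even positions, which is what the paper uses. (Also, there is no ``leading $\max A$ constraint'' on elements of $S(s)$ --- the condition $s_1=\max A$ is a hypothesis on $s$, not on $w$ --- so your extra length-one words have no role. Your instinct to let the index range over $2p$ positions when $p$ is odd is sound, and indeed necessary for the verification $S(s)=S_X$.)

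In the converse direction, your witness $u_k a$, obtained by appending a \emph{single} letter violating the test exactly at position $m+pk+1$, need not exist: depending on the parity of $m+pk+1$ you need either $a>s_{m+1}$ or $a<s_{m+1}$, and no such letter is available when $s_{m+1}$ is extremal in the relevant direction. Moreover, minimality of $u_k a$ does not follow from $s_m\ne s_{m+p}$ alone; what must be controlled is the comparison of every shifted suffix $s_i\cdots s_{m+pk}$ with the corresponding prefix of $s$. The paper resolves both issues simultaneously by appending not one letter but the block $s_1\cdots s_j$, where $j$ is the position of the first alternate-order disagreement between $s$ and $\sigma^m(s)$ (which exists because $s$ is alternately shift minimal and not purely periodic, and satisfies $(-1)^j(s_j-s_{m+j})<0$): this guarantees that the whole word $s_1\cdots s_m(s_{m+1}\cdots s_{m+p})^{2k}s_1\cdots s_j$ is forbidden, while its strict factors remain in $F(S(s))$ because every proper shifted suffix of $s_1\cdots s_m(s_{m+1}\cdots s_{m+p})^{2k}$ is strictly above the corresponding prefix of $s$ in the alternate order. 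Without this (or an equivalent) choice of witness, the converse is not established; identifying the parity bookkeeping as the main obstacle is correct, but the specific witness you propose does not overcome it.
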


\begin{proof}
Suppose that $s=(s_1 \cdots s_p)^\omega$. Consider the finite set
$X=\{s_1 \cdots s_{n-1}b \mid b \in A, \; (-1)^n(b - s_n)<0, \;1 \le  n \le p\}$.
We show that ${S(s)}={S(s)}_X$.
If $w$ is in ${S(s)}$, then $w$ avoids $X$, and conversely.
Now, suppose that ${S(s)}$ is of finite type. It is thus sofic, and by Proposition~\ref{gensof} $s$
is eventually periodic. If it is not
purely periodic,
then $s=s_1 \cdots s_m(s_{m+1}  \cdots s_{m+p})^\omega$, with $m$ and $p$ minimal, and
$s_1 \cdots s_m \neq \varepsilon$.
Let $I=\{s_1 \cdots s_{n-1}b \mid b \in A, \; (-1)^n(b - s_n)<0, \;1 \le  n \le m\}
\cup \{s_1 \cdots s_m(s_{m+1}  \cdots s_{m+p})^{2k}$ $s_{m+1}  \cdots s_{m+n-1}b
 \mid b \in A, \; k \ge 0, (-1)^{m+2kp+n}(b - s_{m+n})<0, \;1 \le  n \le 2p\}$. 
Then $I \subset A^+ \setminus F({S(s)})$.
First, suppose there exists $1 \le j \le p$ such that
$(-1)^j(s_j - s_{m+j})<0$ and $s_1 \cdots s_{j-1} = s_{m+1}
 \cdots s_{m+j-1}$.
For $k \ge 0$ fixed, let $w^{(2k)}=s_1 \cdots s_m(s_{m+1} \cdots s_{m+p})^{2k}
s_1 \cdots
s_j \in I$. We have
$s_1 \cdots s_m(s_{m+1} \cdots s_{m+p})^{2k}s_{m+1}$ $\cdots
s_{m+j-1} \in  F({S(s)})$. On the other hand,
for $n \ge 2 $,
$s_n \cdots s_m(s_{m+1} \cdots s_{m+p})^{2k}$ is greater in the alternate
order than the
prefix of $s$ of same length,
thus $s_n \cdots s_m(s_{m+1} \cdots s_{m+p})^{2k} s_1 \cdots s_j$
belongs to $F({S(s)})$.
Hence any strict factor of $w^{(2k)}$ is in $ F({S(s)})$.
Therefore for any $k \ge 0$, $w^{(2k)} \in X({S(s)})$, and $X({S(s)})$ is thus
infinite: ${S(s)}$ is not of finite type.
Now, if such a $j$ does not exist, then for every $1 \le j \le p$, $s_j=s_{m+j}$,
and $s=(s_1 \cdots s_m)^\omega$ is purely periodic.
\end{proof}

\begin{remark}\label{remgeneral}
Let $s'=s'_1s'_2 \cdots $ be a word in $A^\N$ such that $s'_1=\min A$ and, for each $n \ge 1$,
$s'_n s'_{n+1}\cdots \preceq_{alt} s'$. 
Such a word is said to be an {\em alternately shift maximal} word.
Let $S'(s')=\{w=(w_i)_{i  \in \Z} \in A^\Z \mid \forall
n, \; w_nw_{n+1} \cdots \preceq_{alt} s'\}$.
The statements in Propositions~\ref{aut}, \ref{gensof} and \ref{gensft} are also valid for the subshift $S'(s')$ (with the automaton ${\cal A}_{S'(s')}$ constructed accordingly).
\end{remark}

\section{Negative real base}\label{s_real}

\subsection{The $(-\beta)$-shift}
Ito and Sadahiro \cite{IS} introduced a greedy algorithm to represent any real number in real base $-\beta$, $\beta>1$, and with digits in $A_\mb=\{0,1,\ldots,\lfloor \beta \rfloor\}$.
Remark that, when $\beta$ is not an integer, $A_\mb=A_\beta$.

A transformation on $I_{-\beta}=\left[-\frac{\beta}{\beta+1},\frac{1}{\beta+1}\right)$
is defined as follows:
\[ T_{-\beta}(x)=-\beta x-\lfloor -\beta x+\frac{\beta}{\beta+1}\rfloor.\]

For every real number $x \in I_{-\beta}$ denote $\Edbm(x)$ the $(-\beta)$-expansion of $x$.
Then $\Edbm(x)=(x_i)_{i \ge 1}$ if and only if $x_i=\lfloor -\beta T_{-\beta}^{i-1}(x) + \frac{\beta}{\beta+1} \rfloor$,
and $x=\sum_{i \ge 1} x_i (\mb)^{-i}$. When this last equality holds, we may also write:
\begin{equation*}\label{eshift}
 x=(\decdot x_1 x_2 \cdots)_{\mb}.
\end{equation*}

We show that the alternate order $\prec_{alt}$ on $(\mb)$-expansions gives the numerical order.
\begin{proposition}
Let $x$ and $y$ be in $I_{-\beta}$. Then
$$x<y  \iff \Edbm(x) \prec_{alt}  \Edbm(y).$$
\end{proposition}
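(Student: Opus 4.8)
The plan is to mirror the classical proof of Parry's theorem for $\beta$-expansions, adapted to the alternate order. First I would recall the key fact from the Ito--Sadahiro setup: the $(-\beta)$-expansion of $x \in I_{-\beta}$ is produced digit-by-digit by iterating $T_{-\beta}$, and at each step the digit $x_i = \lfloor -\beta T_{-\beta}^{i-1}(x) + \frac{\beta}{\beta+1}\rfloor$ is the unique integer making the tail $T_{-\beta}^{i-1}(x)$ land back in $I_{-\beta}$. The crucial monotonicity observation is that the map $x \mapsto -\beta x$ is \emph{orientation-reversing}: on the interval where the floor function is constant (i.e. where the first digit is some fixed value $a$), $T_{-\beta}$ is a decreasing affine bijection onto (a subinterval of) $I_{-\beta}$. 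So a single application of $T_{-\beta}$ reverses the order, and this reversal is exactly what the $(-1)^k$ sign in the definition of $\prec_{alt}$ is designed to track.

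The main argument I would give is the following. Suppose $x < y$ in $I_{-\beta}$ and let $\Edbm(x) = (x_i)_{i\ge 1}$, $\Edbm(y) = (y_i)_{i\ge 1}$. Let $k$ be the first index where the two expansions differ (if there is none, $x=y$, contradiction; here I would need the standard fact that the $(-\beta)$-expansion map is injective on $I_{-\beta}$, which follows from $x = \sum x_i(-\beta)^{-i}$). For $i < k$ the digits agree, so $T_{-\beta}^{i}(x)$ and $T_{-\beta}^{i}(y)$ are obtained by the same affine maps; since each application of $T_{-\beta}$ reverses order, after $k-1$ steps we have $T_{-\beta}^{k-1}(x) < T_{-\beta}^{k-1}(y)$ if $k-1$ is even, and $>$ if $k-1$ is odd — that is, $(-1)^{k-1}\big(T_{-\beta}^{k-1}(x) - T_{-\beta}^{k-1}(y)\big) < 0$, equivalently $(-1)^{k}\big(T_{-\beta}^{k-1}(x) - T_{-\beta}^{k-1}(y)\big) > 0$. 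Now $x_k = \lfloor -\beta T_{-\beta}^{k-1}(x) + \frac{\beta}{\beta+1}\rfloor$ and similarly for $y_k$; since $u \mapsto \lfloor -\beta u + \frac{\beta}{\beta+1}\rfloor$ is non-increasing in $u$, the sign of $x_k - y_k$ is (weakly) opposite to the sign of $T_{-\beta}^{k-1}(x) - T_{-\beta}^{k-1}(y)$, and because $x_k \ne y_k$ it is strictly opposite: $(-1)^k(x_k - y_k) < 0$. That is precisely $\Edbm(x) \prec_{alt} \Edbm(y)$.

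For the converse, I would argue that $\prec_{alt}$ is a total order on the set of $(-\beta)$-expansions and that $\pi$, the numerical-value map $\mathbf{x} \mapsto \sum x_i(-\beta)^{-i}$, is injective on this set; then $x < y \Rightarrow \Edbm(x) \prec_{alt} \Edbm(y)$ together with totality and the contrapositive forces the equivalence. Concretely: if $\Edbm(x) \prec_{alt} \Edbm(y)$ but $x \ge y$, then either $x = y$ (impossible, distinct expansions) or $x > y$, which by the forward direction gives $\Edbm(y) \prec_{alt} \Edbm(x)$, contradicting antisymmetry of $\prec_{alt}$.

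The main obstacle is being careful about the boundary of $I_{-\beta}$ and the cylinder sets: I need that whenever $x, y \in I_{-\beta}$ have the same first $k-1$ digits, the points $T_{-\beta}^{i}(x)$ and $T_{-\beta}^{i}(y)$ for $i < k$ remain in a common interval on which $T_{-\beta}$ is a \emph{single} decreasing affine branch, so that the order-reversal is clean at every step and no extra case distinction (from the half-open nature of $I_{-\beta}$ or from a digit sitting exactly at a branch endpoint) creeps in. This is really a statement about the refinement of $I_{-\beta}$ by the cylinders $[x_1 \cdots x_{k-1}]$, and I expect it to be routine but to require stating explicitly that agreement of digits up to index $k-1$ means both points lie in the same level-$(k-1)$ cylinder, on which $T_{-\beta}^{k-1}$ is affine with slope $(-\beta)^{k-1}$; the sign of that slope is $(-1)^{k-1}$, which is the entire content of the parity bookkeeping. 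Everything else — the injectivity of $\pi$ on expansions, the non-increasing monotonicity of the digit-selection map — is standard and I would invoke it with a one-line justification.
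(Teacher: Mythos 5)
Your proof is correct, but for the substantive direction of the equivalence it takes a genuinely different route from the paper's. The paper proves $\Edbm(x) \prec_{alt} \Edbm(y) \Rightarrow x<y$ by a static series estimate: if the expansions first differ at index $k=2q$ with $x_{2q}\le y_{2q}-1$, then $x-y\le -\beta^{-2q}+\beta^{-2q}\bigl(T_{-\beta}^{2q}(x)-T_{-\beta}^{2q}(y)\bigr)<0$ because both tails lie in $I_{-\beta}$, an interval of length $1$ (the odd case is symmetric), and the converse is declared immediate --- in effect by the same trichotomy/antisymmetry argument you spell out. You instead prove $x<y \Rightarrow \Edbm(x)\prec_{alt}\Edbm(y)$ dynamically: agreement of the first $k-1$ digits gives $T_{-\beta}^{k-1}(x)-T_{-\beta}^{k-1}(y)=(-\beta)^{k-1}(x-y)$, and the non-increasing digit map $u\mapsto\lfloor -\beta u+\frac{\beta}{\beta+1}\rfloor$ then forces $(-1)^k(x_k-y_k)<0$ at the first differing index. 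Both arguments rest on the same two facts ($T_{-\beta}(u)=-\beta u-d(u)$ and $T_{-\beta}^n(u)\in I_{-\beta}$), so neither is deeper; the paper's estimate is shorter to write down, while yours makes the orientation-reversal of multiplication by $-\beta$ --- and hence the origin of the sign $(-1)^k$ in $\prec_{alt}$ --- more transparent, at the cost of a small amount of parity bookkeeping. One remark: the cylinder-boundary issue you flag as the main obstacle is not actually an obstacle, since the relation $T_{-\beta}^{i}(x)-T_{-\beta}^{i}(y)=-\beta\bigl(T_{-\beta}^{i-1}(x)-T_{-\beta}^{i-1}(y)\bigr)$ follows directly from $T_{-\beta}(u)=-\beta u-d_1(u)$ whenever the $i$-th digits coincide, with no case analysis on branch endpoints needed.
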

\begin{proof}
Suppose that $\Edbm(x) \prec_{alt} \Edbm(y)$. Then there exists 
$k \ge 1$ such that
$x_i=y_i $ for  $1 \le i <k $ and $(-1)^k(x_k-y_k)<0$. Suppose that $k$ is even, $k=2q$.
Then $x_{2q} \le y_{2q}-1$. Thus 
$x-y \le  -\beta^{-2q}+\sum_{i \ge 2q+1} x_i (\mb)^{-i}-\sum_{i \ge 2q+1} y_i (\mb)^{-i}<0$,
since $\sum_{i \ge 1} x_{2q+i} (\mb)^{-i}$ and $\sum_{i \ge 1} y_{2q+i} (\mb)^{-i}$ are in $I_{-\beta}$.
The case $k=2q+1$ is similar. The converse is immediate.
\end{proof}

A word $(x_i)_{i \ge 1}$ is said to be $(-\beta)$-{\em admissible} if there
exists a real number $x \in I_{-\beta}$ such that $\Edbm(x)=(x_i)_{i \ge 1}$.
The {\em $(-\beta)$-shift} $S_\mb$ is the
closure of the set of $(-\beta)$-admissible words,
and it is a subshift of $A_\beta^\Z$.

Define the sequence $\Edbms(\frac{1}{\beta+1})$ as follows:
\begin{itemize}
 \item if $\Edbm(-\frac{\beta}{\beta+1})=d_1d_2\cdots$ is not a periodic sequence with odd period,
$$\Edbms(\frac{1}{\beta+1})=\Edbm(\frac{1}{\beta+1})=0d_1d_2\cdots$$ 
\item otherwise if $\Edbm(-\frac{\beta}{\beta+1})=(d_1\cdots d_{2p+1})^\omega$,
$$\Edbms(\frac{1}{\beta+1})=(0d_1\cdots d_{2p}(d_{2p+1}-1) )^\omega.$$
\end{itemize}

\begin{theorem}[Ito-Sadahiro \cite{IS}]\label{adm}
A word $(w_i)_{i \ge 1}$ is $(-\beta)$-admissible if and only if for each $n \ge 1$
\begin{equation*}
\Edbm(-\frac{\beta}{\beta+1})\preceq_{alt} w_nw_{n+1 }\cdots \prec_{alt}
\Edbms(\frac{1}{\beta+1}).
\end{equation*}
A word $(w_i)_{i \in \Z}$ is an element of the $(-\beta)$-shift
if and only if for each $n$
\begin{equation*}
\Edbm(-\frac{\beta}{\beta+1})\preceq_{alt} w_nw_{n+1 }\cdots \preceq_{alt}
\Edbms(\frac{1}{\beta+1}).
\end{equation*}
\end{theorem}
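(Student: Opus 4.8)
The plan is to prove the admissibility characterization by relating the $(-\beta)$-expansion map to the transformation $T_{-\beta}$ and then exploiting the order-theoretic description already established. First I would observe that the lower and upper bounds in the statement come from two different sources: the lower bound $\Edbm(-\frac{\beta}{\beta+1}) \preceq_{alt} w_nw_{n+1}\cdots$ reflects that every $T_{-\beta}^{k}(x)$ stays in $I_{-\beta}$, hence its expansion is $\succeq_{alt}$ that of the left endpoint; the upper bound $w_nw_{n+1}\cdots \prec_{alt} \Edbms(\frac{1}{\beta+1})$ reflects that $T_{-\beta}^k(x) < \frac{1}{\beta+1}$ strictly, the right endpoint being excluded from the half-open interval $I_{-\beta}$. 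Since the shift $\sigma$ on admissible words corresponds to applying $T_{-\beta}$ (i.e.\ if $\Edbm(x)=(x_i)_{i\ge1}$ then $\Edbm(T_{-\beta}(x))=(x_{i+1})_{i\ge1}$, which follows directly from the definition $x_i=\lfloor -\beta T_{-\beta}^{i-1}(x)+\frac{\beta}{\beta+1}\rfloor$), the condition "for each $n$, $w_nw_{n+1}\cdots$ is admissible" is equivalent to "for each $n$, $w_nw_{n+1}\cdots$ lies between the expansions of the two endpoints in the alternate order," using the Proposition that $\prec_{alt}$ on $(-\beta)$-expansions is the numerical order on $I_{-\beta}$.

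The key steps, in order, are: (1) Establish the shift-invariance: $\Edbm\circ T_{-\beta} = \sigma\circ\Edbm$ on $I_{-\beta}$, so that a word is admissible iff all its suffixes are of the form $\Edbm(y)$ for some $y\in I_{-\beta}$. (2) Translate "$y\in I_{-\beta}$, i.e.\ $-\frac{\beta}{\beta+1}\le y<\frac{1}{\beta+1}$" into the alternate-order inequalities via the order-isomorphism proposition: the left inequality is non-strict and gives $\Edbm(-\frac{\beta}{\beta+1})\preceq_{alt}\Edbm(y)$; the right inequality is strict. (3) Handle the subtlety that $\frac{1}{\beta+1}\notin I_{-\beta}$, so we cannot directly write $\Edbm(\frac{1}{\beta+1})$ — this is exactly why one introduces the modified sequence $\Edbms(\frac{1}{\beta+1})$. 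Here I would argue: $-\beta\cdot\frac{1}{\beta+1}+\frac{\beta}{\beta+1}=\frac{-\beta^2+\beta}{\beta+1}$... more cleanly, $T_{-\beta}$ would send $\frac{1}{\beta+1}$ to $-\frac{\beta}{\beta+1}$, and iterating shows the "forbidden" threshold sequence is the orbit of the right endpoint, which is $0\,\Edbm(-\frac{\beta}{\beta+1})$ when this is not purely periodic with odd period; in the odd-period exceptional case the naive expansion would be eventually the left endpoint's expansion and one must pass to the quasi-greedy-type variant $(0d_1\cdots d_{2p}(d_{2p+1}-1))^\omega$ to get a sequence that serves as a strict sup. (4) Conclude: the closure operation passing from admissible words to the $(-\beta)$-shift replaces the strict upper inequality by a non-strict one, exactly as in Parry's theorem (Theorem~\ref{parryth}), which gives the second displayed equivalence.

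The main obstacle will be step (3): the careful analysis of why the right endpoint requires the modified sequence $\Edbms(\frac{1}{\beta+1})$, and in particular the odd-period case. The issue is that $\frac{1}{\beta+1}$ is not in the domain, so $\Edbm(\frac{1}{\beta+1})$ strictly speaking is undefined; one defines it as the limit of $\Edbm(y)$ as $y\uparrow\frac{1}{\beta+1}$ (equivalently the expansion of $\frac{1}{\beta+1}$ obtained by always taking the value forcing the remainder to stay $\ge -\frac{\beta}{\beta+1}$), and one must check that the alternate-order parity bookkeeping — whether a block has even or odd length — interacts correctly so that the strict inequality $w_nw_{n+1}\cdots\prec_{alt}\Edbms(\frac{1}{\beta+1})$ is genuinely equivalent to $\pi_{-\beta}(w_nw_{n+1}\cdots)<\frac{1}{\beta+1}$. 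This is a direct but delicate computation; I would lean on Ito--Sadahiro's original argument in~\cite{IS} for this threshold analysis, and present the rest of the proof as the clean order-theoretic reduction sketched above. (Indeed, as the statement is attributed to Ito--Sadahiro, the role here is to record it; I would keep the proof brief and cite \cite{IS}.)
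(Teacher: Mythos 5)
The paper does not prove this theorem: it is imported verbatim from Ito--Sadahiro and cited as such, so there is no internal proof to compare against, and your decision to keep the argument brief and cite \cite{IS} is consistent with what the authors themselves do.

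That said, if the sketch were to be carried out, it has one genuine gap worth naming. The necessity direction (every $\Edbm(x)$ satisfies the inequalities) does follow your plan: shift-commutation $\Edbm(T_{-\beta}(x))=\sigma(\Edbm(x))$ plus the order-isomorphism proposition applied to the points $T_{-\beta}^{n-1}(x)\in I_{-\beta}$. But the sufficiency direction cannot be obtained the same way, because the order-isomorphism proposition only compares two words that are \emph{already known} to be $(-\beta)$-expansions; in the ``if'' direction $w_nw_{n+1}\cdots$ is an arbitrary word over $A_{-\beta}$ satisfying the alternate-order inequalities, so comparing it with $\Edbm(-\frac{\beta}{\beta+1})$ in $\prec_{alt}$ says nothing a priori about the value $\sum_{i\ge 0}w_{n+i}(-\beta)^{-(i+1)}$. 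One must prove directly that the inequalities at every index force each such value to lie in $I_{-\beta}$ and that the greedy digit at each step then equals $w_n$; this induction is the actual content of Ito--Sadahiro's argument and is not a consequence of the propositions available in this paper. Your step (3) on the right endpoint and the odd-period modification is correctly identified as the other delicate point, and deferring both to \cite{IS} is acceptable here, but the write-up should not suggest that the order-isomorphism proposition alone closes the sufficiency direction.
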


Put $\mathbf{d}=\Edbm(-\frac{\beta}{\beta+1})=d_1d_2\cdots$ and $\mathbf{d}^*=\Edbms(\frac{1}{\beta+1})$.
Theorem~\ref{adm} can be restated as follows.

\begin{lemma}\label{inter}
If $\mathbf{d}=\Edbm(-\frac{\beta}{\beta+1})$ is not a periodic sequence with odd period, 
then
$$S_\mb=S(\mathbf{d})=\{(w_i)_{i \in \Z} \in A_\beta^\Z \mid \forall n,\; \mathbf{d} \preceq_{alt} w_nw_{n+1 }\cdots  \}.$$
If $\mathbf{d}=\Edbm(-\frac{\beta}{\beta+1})$ is a periodic sequence of odd period,
then $\mathbf{d}^*=(0d_1\cdots d_{2p}(d_{2p+1}-1) )^\omega$ and
$$S_\mb= S(\mathbf{d}) \cap S'(\mathbf{d}^*)$$
where 
$$S'(\mathbf{d}^*)=\{(w_i)_{i \in \Z} \in A_\beta^\Z \mid \forall n,\; 
 w_nw_{n+1 }\cdots \preceq_{alt} \mathbf{d}^*\}.$$
\end{lemma}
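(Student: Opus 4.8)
The plan is to derive Lemma~\ref{inter} directly from the Ito--Sadahiro admissibility criterion (Theorem~\ref{adm}) by analysing how the right-hand inequality $w_nw_{n+1}\cdots \prec_{alt} \mathbf{d}^*$ behaves in the two cases, and then matching the resulting condition against the definitions of $S(\mathbf{d})$ and $S'(\mathbf{d}^*)$ given in Section~\ref{alt}. Throughout I would use that $S_\mb$, being a subshift, equals the closure of the set of admissible words, so that by Theorem~\ref{adm} a bi-infinite word $w$ lies in $S_\mb$ iff every suffix $w_nw_{n+1}\cdots$ satisfies $\mathbf{d}\preceq_{alt} w_nw_{n+1}\cdots \preceq_{alt}\mathbf{d}^*$.

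First I would treat the case where $\mathbf{d}$ is \emph{not} periodic with odd period. Here $\mathbf{d}^*=\Edbm(\tfrac1{\beta+1})=0d_1d_2\cdots$, i.e. $\mathbf{d}^*=0\mathbf{d}$. The claim is that the upper constraint $w_nw_{n+1}\cdots\preceq_{alt}\mathbf{d}^*$ is implied by the lower constraints $\mathbf{d}\preceq_{alt}w_mw_{m+1}\cdots$ for all $m$, so that $S_\mb=S(\mathbf{d})$. The key observation is that $d_1=\max A_\beta=\lfloor\beta\rfloor$ (this follows from $T_{-\beta}$ applied at the left endpoint, or can be quoted from the structure of $\mathbf d$), hence $\mathbf{d}^*=0d_1\cdots$ begins with a letter $0$ which is $\min A$. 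Comparing any $w_nw_{n+1}\cdots$ with $\mathbf{d}^*$ in the alternate order: if $w_n>0$ then position $1$ already gives $(-1)^1(w_n-0)<0$, i.e. $w_nw_{n+1}\cdots\prec_{alt}\mathbf{d}^*$; if $w_n=0$ then we must compare $w_{n+1}w_{n+2}\cdots$ with $\mathbf d$ but \emph{with the parity shifted by one}, which is exactly the statement that $\mathbf{d}\preceq_{alt}w_{n+1}w_{n+2}\cdots$ forces $w_{n+1}w_{n+2}\cdots\preceq_{alt}\mathbf d^*$ read from position $2$ onwards — more precisely $0\mathbf{d}$ versus $0w_{n+1}\cdots$ reduces to $\mathbf d \preceq_{alt} w_{n+1}w_{n+2}\cdots$ which is one of our hypotheses, and if equality fails strictly there we get strict inequality. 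The only subtle point is ruling out $w_nw_{n+1}\cdots = \mathbf{d}^*$, i.e. $w = 0\mathbf{d}$: since $\mathbf d$ is not odd-periodic, $\mathbf{d}^*$ is aperiodic-enough that this does not cause a problem, but in any case $\preceq_{alt}$ (not $\prec_{alt}$) is what defines the two-sided shift, so equality is allowed and $S_\mb=S(\mathbf d)$ follows cleanly for bi-infinite words.

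Next I would treat the case where $\mathbf{d}=(d_1\cdots d_{2p+1})^\omega$ is periodic with odd period. Now $\mathbf{d}^*=(0d_1\cdots d_{2p}(d_{2p+1}-1))^\omega$ by definition, and this is a genuinely different sequence from $0\mathbf d$, so the upper bound is a real constraint and cannot be absorbed into the lower one. Here the proof is essentially a rewriting: Theorem~\ref{adm} says $w\in S_\mb$ iff for all $n$ both $\mathbf{d}\preceq_{alt}w_nw_{n+1}\cdots$ and $w_nw_{n+1}\cdots\preceq_{alt}\mathbf{d}^*$ hold; the first conjunction over all $n$ is by definition $w\in S(\mathbf d)$, and the second is by definition $w\in S'(\mathbf{d}^*)$ (using Remark~\ref{remgeneral}, noting $\mathbf d^*$ starts with $0=\min A$ and is alternately shift maximal — which should be verified from the fact that $\mathbf{d}^*$ was constructed as the ``star'' sequence, analogously to the positive-base case). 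Hence $S_\mb=S(\mathbf d)\cap S'(\mathbf{d}^*)$.

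I expect the main obstacle to be the bookkeeping of \emph{parity} in the alternate-order comparisons in the first case: because prepending a single letter $0$ to $\mathbf d$ shifts every comparison index by one, one must be careful that the inequality $\mathbf d\preceq_{alt}w_{n+1}w_{n+2}\cdots$ (even-indexed hypothesis) translates into the correctly-signed inequality when viewed as a comparison of $\mathbf d^* = 0\mathbf d$ against $0w_{n+1}w_{n+2}\cdots$ starting from position $2$. A clean way to handle this is to first prove the auxiliary fact that for any words $u,v\in A^\N$ and any letter $c\in A$, $cu\preceq_{alt}cv \iff v\preceq_{alt}u$ when... — actually the sign flips, so $cu \preceq_{alt} cv \iff v \preceq_{alt} u$; stating and using this little lemma up front removes all the parity guesswork. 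A secondary, minor obstacle is verifying that $d_1=\lfloor\beta\rfloor=\max A_\beta$ and that $\mathbf d^*$ is alternately shift maximal; both are small computations with $T_{-\beta}$ at the endpoints of $I_{-\beta}$, parallel to the classical facts about $\ddbs$ in the positive-base theory, and can be quoted or dispatched quickly.
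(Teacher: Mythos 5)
Your proof is correct. The paper gives no argument at all for this lemma --- it simply introduces it with ``Theorem~\ref{adm} can be restated as follows'' --- so the one piece of genuine mathematical content, namely that in the non-odd-periodic case the upper constraint $w_nw_{n+1}\cdots \preceq_{alt} \mathbf{d}^*$ is \emph{redundant} given the lower constraints, is exactly what your write-up supplies and what the paper leaves implicit. Your mechanism for this is the right one: since $\mathbf{d}^*=0\mathbf{d}$ and $0=\min A$, either $w_n>0$ and the comparison is decided (strictly) at position $1$, or $w_n=0$ and the auxiliary fact $cu\preceq_{alt}cv \iff v\preceq_{alt}u$ (the parity flip under prepending a common letter) converts the upper bound at position $n$ into the lower bound $\mathbf{d}\preceq_{alt}w_{n+1}w_{n+2}\cdots$ at position $n+1$, which is already among the hypotheses defining $S(\mathbf{d})$. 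Two small remarks: the observation that $d_1=\lfloor\beta\rfloor=\max A$ is not actually needed anywhere in your argument (only the fact that $\mathbf{d}^*$ begins with $0=\min A$ matters, and that holds by definition of $\mathbf{d}^*$); and in the odd-periodic case the identity $S_\mb=S(\mathbf{d})\cap S'(\mathbf{d}^*)$ is, as you say, purely definitional unpacking of the two-sided condition in Theorem~\ref{adm}, so no verification that $\mathbf{d}^*$ is alternately shift maximal is required for the lemma itself (that property only becomes relevant later, when Remark~\ref{remgeneral} is invoked to apply the automaton construction to $S'(\mathbf{d}^*)$).
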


\begin{theorem}
The $(-\beta)$-shift is a system of finite type if and only if 
$\Edbm(-\frac{\beta}{\beta+1})$ is purely periodic.
\end{theorem}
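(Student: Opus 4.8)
The plan is to reduce this to the general results on alternate-order subshifts already established in Propositions~\ref{gensof} and~\ref{gensft}, together with the dichotomy recorded in Lemma~\ref{inter}. First I would treat the easy direction of the parity dichotomy: if $\mathbf{d}=\Edbm(-\tfrac{\beta}{\beta+1})$ is \emph{not} a periodic sequence with odd period, then by Lemma~\ref{inter} we have $S_\mb=S(\mathbf{d})$, and Proposition~\ref{gensft} says $S(\mathbf{d})$ is of finite type if and only if $\mathbf{d}$ is purely periodic. Since a purely periodic word with even period is not a periodic sequence with odd period (its minimal period being even, or the whole sequence constant), this case gives exactly: $S_\mb$ is of finite type $\iff$ $\mathbf{d}$ is purely periodic (necessarily with even period, or trivial).

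Next I would handle the remaining case, where $\mathbf{d}=(d_1\cdots d_{2p+1})^\omega$ is purely periodic with odd period $2p+1$. Here I must show $S_\mb$ \emph{is} of finite type, since $\mathbf{d}$ is purely periodic and the theorem claims finite type in precisely the purely periodic case. By Lemma~\ref{inter}, $S_\mb = S(\mathbf{d})\cap S'(\mathbf{d}^*)$ with $\mathbf{d}^*=(0d_1\cdots d_{2p}(d_{2p+1}-1))^\omega$, which is purely periodic of even period $2p+2$. Proposition~\ref{gensft} applied to $\mathbf{d}$ shows $S(\mathbf{d})=S(\mathbf{d})_{X}$ for a finite set $X$, and the analogous statement for alternately shift-maximal words recorded in Remark~\ref{remgeneral}, applied to $\mathbf{d}^*$ (which is purely periodic), shows $S'(\mathbf{d}^*)=S'(\mathbf{d}^*)_{X'}$ for a finite set $X'$. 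Then $S_\mb = S_{X\cup X'}$ is a subshift of finite type, being the intersection of two subshifts of finite type.

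Finally I would close the converse in this odd-period regime: if $\mathbf{d}$ is not purely periodic, I must show $S_\mb$ is not of finite type. If $S_\mb$ were of finite type it would in particular be sofic; but $S_\mb$ sofic forces, via the structure of Lemma~\ref{inter} and Proposition~\ref{gensof} (applied to $S(\mathbf{d})$, which contains $S_\mb$ as the relevant constraint), that $\mathbf{d}$ be eventually periodic — and then the argument of Proposition~\ref{gensft} produces an infinite family $w^{(2k)}\in X(S_\mb)$ exactly as in its proof, contradicting finite type. The one point requiring care is that the $X(S_\mb)$-computation in Proposition~\ref{gensft} was done for $S(s)$ alone, whereas here $S_\mb$ may also carry the upper constraint $S'(\mathbf{d}^*)$; I would check that the witnesses $w^{(2k)}$ constructed from the lower word $\mathbf{d}$ still lie in $F(S_\mb)$ and remain minimal forbidden words for $S_\mb$, using that $\mathbf{d}^*$ is built from $\mathbf{d}$ so the upper bound is automatically respected by prefixes of $\mathbf{d}$.

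The main obstacle I expect is precisely this last bookkeeping step: the clean reduction works verbatim when $S_\mb=S(\mathbf{d})$, but in the odd-period-adjacent case one must verify that adjoining the $S'(\mathbf{d}^*)$ constraint neither destroys the finite-type conclusion (handled by intersecting two SFTs, which is routine) nor spoils the infinite-family obstruction in the converse. Everything else is a direct citation of Propositions~\ref{gensof}, \ref{gensft} and Remark~\ref{remgeneral} combined with the parity dichotomy of Lemma~\ref{inter}.
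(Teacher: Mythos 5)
Your proof is correct and follows essentially the same route as the paper: the parity dichotomy of Lemma~\ref{inter} combined with Proposition~\ref{gensft} and Remark~\ref{remgeneral}, with the odd-period case handled as an intersection of two subshifts of finite type. Note only that your third paragraph addresses a vacuous case: if $\mathbf{d}$ is not purely periodic, Lemma~\ref{inter} already gives $S_\mb=S(\mathbf{d})$ with no upper constraint $S'(\mathbf{d}^*)$, so the bookkeeping you worry about never arises and your first two paragraphs complete the argument.
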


\begin{proof}
If $\Edbm(-\frac{\beta}{\beta+1})$ is purely periodic with an even period, the result follows from Theorem~\ref{adm}, Lemma~\ref{inter} and Proposition~\ref{gensft}.
If $\Edbm(-\frac{\beta}{\beta+1})$ is purely periodic with an odd period,
the result follows from Theorem~\ref{adm}, Lemma~\ref{inter}, Proposition~\ref{gensft},
Remark~\ref{remgeneral}, and the fact that the intersection of two finite sets is finite.
\end{proof}

By Theorem~\ref{adm}, Lemma~\ref{inter}, Proposition~\ref{gensof},
Remark~\ref{remgeneral}, and the fact that the intersection of two regular sets is again regular the following result follows.

\begin{theorem}[Ito-Sadahiro \cite{IS}]
The $(-\beta)$-shift is a sofic system if and only if 
$\Edbm(-\frac{\beta}{\beta+1})$ is eventually periodic.
\end{theorem}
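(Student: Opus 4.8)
The plan is to deduce this statement almost immediately from the general sofic-characterization machinery of Section~\ref{alt}, exactly as indicated in the sentence preceding the theorem. First I would recall from Lemma~\ref{inter} that $S_{\mb}$ is expressed in terms of the general subshifts $S(\mathbf{d})$ (and, in the odd-period case, $S'(\mathbf{d}^*)$), where $\mathbf{d}=\Edbm(-\frac{\beta}{\beta+1})$ is an asmin-word and $\mathbf{d}^*$ is an alternately shift maximal word; so the finite-automaton recognizability of $F(S_{\mb})$ reduces to that of $F(S(\mathbf{d}))$ and $F(S'(\mathbf{d}^*))$.

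The key steps, in order, are: (1) Assume $\Edbm(-\frac{\beta}{\beta+1})$ is eventually periodic. If it is not a periodic sequence of odd period, then $S_{\mb}=S(\mathbf{d})$ by Lemma~\ref{inter}, and Proposition~\ref{gensof} gives that $S(\mathbf{d})$ is sofic, since $\mathbf{d}$ is eventually periodic. If it is periodic of odd period, then $S_{\mb}=S(\mathbf{d})\cap S'(\mathbf{d}^*)$; here $\mathbf{d}$ is (purely, hence eventually) periodic, and $\mathbf{d}^*=(0d_1\cdots d_{2p}(d_{2p+1}-1))^\omega$ is also eventually periodic, so by Proposition~\ref{gensof} together with Remark~\ref{remgeneral}, both $S(\mathbf{d})$ and $S'(\mathbf{d}^*)$ are sofic. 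Since $F(S_{\mb}) = F(S(\mathbf{d}))\cap F(S'(\mathbf{d}^*))$ and the intersection of two regular languages is regular, $F(S_{\mb})$ is recognizable by a finite automaton, so $S_{\mb}$ is sofic. (2) Conversely, assume $S_{\mb}$ is sofic. Then $F(S_{\mb})$ is regular. One shows that $F(S(\mathbf{d}))$ is then regular as well — intuitively because the upper constraint coming from $\mathbf{d}^*$ is itself determined by $\mathbf{d}$ in the odd-period case, so $S(\mathbf{d})$ and $S_{\mb}$ share the same factor language up to a controlled finite modification; alternatively, one invokes the contrapositive of Proposition~\ref{gensof}: if $\mathbf{d}$ were not eventually periodic, the right-congruence argument in the proof of Proposition~\ref{gensof} exhibits infinitely many right-congruence classes modulo $F(S(\mathbf{d}))$, and the same words witness infinitely many classes modulo $F(S_{\mb})$, contradicting regularity. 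Hence $\mathbf{d}=\Edbm(-\frac{\beta}{\beta+1})$ is eventually periodic.

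I expect the only genuinely delicate point to be the converse in the odd-period case: one must check that soficity of the intersection $S(\mathbf{d})\cap S'(\mathbf{d}^*)$ forces soficity of $S(\mathbf{d})$ (and hence eventual periodicity of $\mathbf{d}$), rather than merely being implied by it. This is where the precise relationship $\mathbf{d}^*=(0d_1\cdots d_{2p}(d_{2p+1}-1))^\omega$ matters: $\mathbf{d}^*$ is eventually periodic exactly when $\mathbf{d}$ is, and when $\mathbf{d}$ fails to be eventually periodic the forbidden-word witnesses built in the proof of Proposition~\ref{gensof} can be chosen to lie in $F(S'(\mathbf{d}^*))$ as well (they are prefixes of $\mathbf{d}$ extended by a single letter, hence remain below $\mathbf{d}^*$ in the alternate order), so they still separate right-congruence classes of $F(S_{\mb})$. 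Everything else is a direct citation: Theorem~\ref{adm}, Lemma~\ref{inter}, Proposition~\ref{gensof}, Remark~\ref{remgeneral}, plus closure of regular languages under intersection — which is precisely the one-line justification the paper already gives before the statement.
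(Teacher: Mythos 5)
Your proposal is correct and follows essentially the paper's own one-line justification, which derives the theorem from Theorem~\ref{adm}, Lemma~\ref{inter}, Proposition~\ref{gensof}, Remark~\ref{remgeneral}, and closure of regular languages under intersection. The only remark worth making is that the ``delicate point'' you flag in the converse is in fact vacuous: if $\Edbm(-\frac{\beta}{\beta+1})$ is not eventually periodic, it is in particular not periodic of odd period, so Lemma~\ref{inter} gives $S_{\mb}=S(\mathbf{d})$ outright and the contrapositive of Proposition~\ref{gensof} applies directly, with no intersection $S(\mathbf{d})\cap S'(\mathbf{d}^*)$ to analyse.
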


\begin{example}\label{ex1}
Let $G=\frac{1+\sqrt{5}}{2}$; then  $\mathop{\mathsf{d}_{G}}(1)=11$ and the $G$-shift is of finite type.
Since $\mathop{\mathsf{d}_{-G}}(-\frac{G}{G+1})=10^\omega$ the $(-G)$-shift
is a sofic system which is not of finite type. \\
The automaton in Fig.~\ref{shiftG} (right) recognizing the $(-G)$-shift is obtained by minimizing the result of the
construction of Proposition~\ref{aut}. Remark that it is the automaton
which recognizes the celebrated even shift (see~\cite{LM}).

\begin{figure}[h]
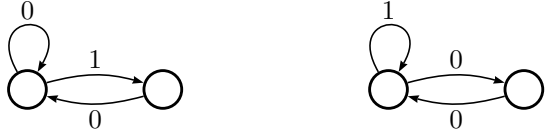

\begin{center}
\VCDraw{%
\begin{VCPicture}{(-1,-0.2)(4,1.8)}
\State{(-4,0)}{A}
\State{(-1,0)}{B}

\ArcL[.5]{A}{B}{1}
\ArcL[.5]{B}{A}{0}
\LoopN[.5]{A}{0}

\State{(4,0)}{A1}
\State{(7,0)}{B1}
\ArcL[.5]{A1}{B1}{0}
\ArcL[.5]{B1}{A1}{0}
\LoopN[.5]{A1}{1}
\end{VCPicture}
        }
\end{center}
\caption{Finite automata for the $G$-shift (left) and for the $(-G)$-shift (right)}
\label{shiftG}
\end{figure}
\end{example}

\begin{example}\label{ex2}
Let $\beta=G^2=\frac{3+\sqrt{5}}{2}$; then  $\mathop{\mathsf{d}_{\beta}}(1)=21^\omega$
and the $\beta$-shift is sofic, but not of finite type. Now, $\Edbm(-\frac{\beta}{\beta+1})=(21)^\omega$
and the $(\mb)$-shift is of finite type: the set of minimal forbidden factors is $X(S_\mb)=\{20\}$.

\medskip

\begin{figure}[h]
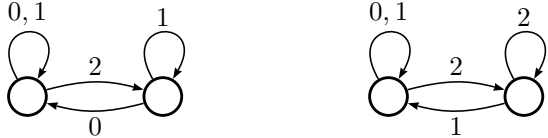

\begin{center}
\VCDraw{%
\begin{VCPicture}{(-1,-0.2)(4,1.8)}
\State{(-4,0)}{A}
\State{(-1,0)}{B}

\ArcL[.5]{A}{B}{2}
\ArcL[.5]{B}{A}{0}
\LoopN[.5]{A}{0,1}
\LoopN[.5]{B}{1}

\State{(4,0)}{A1}
\State{(7,0)}{B1}
\ArcL[.5]{A1}{B1}{2}
\ArcL[.5]{B1}{A1}{1}
\LoopN[.5]{A1}{0,1}
\LoopN[.5]{B1}{2}
\end{VCPicture}
        }
\end{center}
\caption{Finite automata for the $G^2$-shift (left) and for the $(-G^2)$-shift (right)}
\label{autmG}
\end{figure}
\end{example}

\subsection{Entropy of the $-\beta$-shift}

Examples \ref{ex1} and \ref{ex2} suggest that the entropy of the $\mb$-shift is the same as the entropy of the $\beta$-shift because the adjacency matrices of the automata are the same. This is what we show in this section.

A standard technique for computing the entropy of a subshift $S$ is to construct a (not necessarily finite) automaton recognizing $F(S)$. Then the submatrices of the adjacency matrix are taken into account and for every $n$ the greatest eigenvalue $\lambda_n$ of the submatrix of order $n$ is computed. A result proved in \cite{Ho1} ensures that the limit $\lambda$ of the sequence $\lambda_n$ exists and it satisfies $h(S)=\log \lambda$. Unfortunately the explicit computation of the $\lambda_n$'s in the general case
turns out to be very complicated, so we use tools from the theory of dynamical systems:
\begin{itemize}
 \item[--] the notion of topological entropy for one-dimensional dynamical systems, a one-dimensional dynamical system being a couple $(I,T)$ consisting in a bounded interval $I$ and a piecewise continuous transformation $T:I\rightarrow I$;
\item[--] a result by Takahashi~\cite{Tak80} establishing the relation between topological entropies of one-dimensional dynamical systems and symbolic dynamical systems;
\item[--] a result by Shultz~\cite{Shu07} on the topological entropy of some one-dimensional dynamical systems.
\end{itemize}
Let us begin with the definition of topological entropy for one-dimensional dynamical systems.

\vskip0.2cm

\begin{definition}
 Let $(I,T)$ be a dynamical system.

For every finite cover of $I$, say $\mathcal C$, set:
\begin{equation*}
 H(T,\mathcal C):=\lim\sup\frac{1}{n}\log N\left(\bigvee_{m=0}^{n-1}T^{-m}\mathcal C\right)
\end{equation*}
with $\bigvee$ denoting the finest common refinement and $N(\mathcal C)$ denoting the number of elements of the smallest subcover of $\mathcal C$, a subcover of $\mathcal C$ being a subfamily of $\mathcal C$ still covering $I$. 

The \emph{topological entropy} of $(I,T)$ is given by the formula
\begin{equation}
 h(I,T):=\sup H(T,\mathcal C).
\end{equation}
\end{definition}

In \cite{Tak80} Takahashi proved the equality between the topological entropy of a piecewise continuous dynamical system and the topological entropy of an appropriate subshift.
Before stating such a result we need a definition.

\begin{definition}
Let $T:I\rightarrow I$ be a piecewise continuous map on the interval $I$. The \emph{lap intervals} $I_0,\dots,I_l$ of $T$ are closed intervals satisfying the following conditions:
\begin{enumerate}[(a)]
 \item $I_0\cup\dots\cup I_l=I$;
 \item $T$ is monotone on each interval $I_i$, $~i=0,\dots,l$;
\item the number $l$ is minimal under the conditions (a) and (b). 
\end{enumerate}
The number $l$ is called \emph{lap number} and it is denoted $lap(T)$. 
\end{definition}

\begin{remark}
 If the map $T$ is piecewise linear then the lap intervals are unique and they coincide with the intervals of continuity of $T$.
\end{remark}

\begin{theorem}[Takahashi~\cite{Tak80}] \label{t entr tak}
Let $T$ be a piecewise continuous transformation over the closed interval $I$ on itself. Let $\gamma_T:I\rightarrow A_T^\N$ be the map defined by the relation $x\mapsto x_1x_2\cdots$ with $x_n$ satisfying $T^n(x)\in I_{x_n}$. Define the subshift $X_T:=\overline{\gamma_T(I)}$ in $A^\N$. 

If $lap(T)$ is finite then:
\begin{equation}
 h(X_T)=h(I,T).
\end{equation}
\end{theorem}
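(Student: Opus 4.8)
The plan is to produce a single finite interval cover of $I$ for which the abstract quantity $H(T,\mathcal C)$ can be computed in two ways: it equals the symbolic entropy $h(X_T)$ on the one hand, and it realizes the supremum defining $h(I,T)$ on the other. The natural candidate is the \emph{lap cover} $\mathcal C_0=\{I_0,\dots,I_l\}$, and the hypothesis that $lap(T)$ is finite is precisely what makes $\mathcal C_0$ a finite cover.

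First I would analyze the refinements $\mathcal C_0^{(n)}:=\bigvee_{m=0}^{n-1}T^{-m}\mathcal C_0$. Because $T$ is continuous and monotone on each lap interval, the preimage under $T|_{I_i}$ of an interval is again an interval; iterating, every nonempty element of $\mathcal C_0^{(n)}$ — i.e. every nonempty cylinder $\{x\in I:T^m(x)\in I_{a_m},\ 0\le m\le n-1\}$ — is a subinterval of $I$, and distinct cylinders have disjoint interiors. Hence the smallest subcover of $\mathcal C_0^{(n)}$ has cardinality equal, up to a bounded additive term, to the number $c_n$ of nonempty length-$n$ cylinders. Next I would identify $c_n$ with $B_n(X_T)$: factors of elements of the closure $\overline{\gamma_T(I)}$ are exactly factors of elements of $\gamma_T(I)$ (a standard product-topology approximation argument), a word $w$ is a factor of some $\gamma_T(x)$ iff, using the equivariance $\gamma_T\circ T=\sigma\circ\gamma_T$, it is an initial segment of $\gamma_T(y)$ for some $y$, and that holds iff the cylinder indexed by $w$ is nonempty. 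So $B_n(X_T)=c_n$ up to bounded corrections from the finitely many ambiguity points, and therefore $H(T,\mathcal C_0)=\limsup_n\frac1n\log N(\mathcal C_0^{(n)})=\lim_n\frac1n\log B_n(X_T)=h(X_T)$, the limit existing by subadditivity of $\log B_n$.

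It remains to show $h(I,T)=H(T,\mathcal C_0)$, i.e. that $\mathcal C_0$ attains the supremum. Given an arbitrary finite cover $\mathcal D$ of $I$, I would first replace it by a finite cover by intervals refining it and then by the common refinement $\mathcal D':=\mathcal D\vee\mathcal C_0$; since passing to a refinement never decreases $H(T,\cdot)$ (a subcover of the finer cover yields a subcover of the coarser one of at most the same size), $H(T,\mathcal D)\le H(T,\mathcal D')$, and every element of $\mathcal D'$ is an interval lying inside a single lap interval. The key estimate is then that each cylinder $C\in\mathcal C_0^{(n)}$ is subdivided by $\bigvee_{m=0}^{n-1}T^{-m}\mathcal D'$ into a number of subintervals that is only \emph{linear} in $n$: on $C$ each $T^m$ ($0\le m\le n-1$) is monotone, so if $\mathcal D'$ has $d$ pieces the cuts introduced at time $m$ number at most $d-1$, giving at most $n(d-1)+1$ subintervals of $C$. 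Consequently $N\big(\bigvee_{m=0}^{n-1}T^{-m}\mathcal D'\big)\le\big(n(d-1)+1\big)\,N(\mathcal C_0^{(n)})$; the polynomial factor disappears after applying $\frac1n\log(\cdot)$ and $\limsup$, so $H(T,\mathcal D)\le H(T,\mathcal D')\le H(T,\mathcal C_0)$. Taking the supremum over $\mathcal D$ gives $h(I,T)\le H(T,\mathcal C_0)\le h(I,T)$, hence equality, and combined with the previous paragraph this yields $h(X_T)=h(I,T)$.

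I expect the main obstacle to be the bookkeeping around covers versus partitions: the lap intervals and all their refinements overlap only at endpoints, so one must check that extracting minimal subcovers, and the presence of finitely many points with ambiguous itinerary, cost only a sub-exponential factor. The crucial quantitative input is the linear-in-$n$ subdivision bound in the last paragraph, which is exactly where finiteness of $lap(T)$ enters — without it the correction factor could fail to be sub-exponential — while monotonicity of $T$ on the lap intervals is what keeps the number of fresh cut points bounded at each step and continuity is what keeps images (hence cylinders) intervals.
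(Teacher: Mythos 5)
The paper does not prove this statement at all: it is imported verbatim from Takahashi's 1980 article and used as a black box, so there is no internal proof to compare against. Your proposal therefore does strictly more than the paper, and it is a sound outline of the standard argument (essentially the Misiurewicz--Szlenk computation of the entropy of a piecewise monotone interval map as the growth rate of the lap numbers of $T^n$, spliced with the identification of nonempty itinerary cylinders with factors of $X_T$). Both key lemmas --- that the atoms of $\bigvee_{m=0}^{n-1}T^{-m}\mathcal C_0$ are intervals whose count matches $B_n(X_T)$ up to subexponential corrections, and that refining by an arbitrary interval cover subdivides each cylinder into only linearly many pieces --- are correctly identified, and the monotonicity of $T^m$ on each cylinder is exactly the right reason for the linear bound; the polynomial factor then dies under $\frac1n\log$.

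Two caveats, both traceable more to the paper's loose definition of $h(I,T)$ than to your argument. First, your reduction of an arbitrary finite cover $\mathcal D$ to a finite cover by intervals is only valid if the covers in the supremum are open (or at least finite unions of intervals); the paper's definition literally allows any finite cover, under which a partition into sufficiently wild sets can make every length-$n$ atom nonempty and push $H(T,\mathcal D)$ above the lap growth rate, so the theorem requires the standard open-cover convention. Conversely, under that convention the closed lap cover $\mathcal C_0$ is not itself admissible and must be approximated by slightly enlarged open intervals, which costs a further (standard) limiting step you do not mention. Second, the set of points with ambiguous itinerary is the full backward orbit of the lap endpoints, hence countable rather than ``finitely many''; the discrepancy between $B_n(X_T)$ and the cylinder count remains additive and of the same order, so your conclusion stands, but that phrase should be amended. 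Neither issue is a conceptual error; with those repairs the sketch fills in to a complete proof.
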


The entropy in the very particular case of a piecewise linear map with constant slope is explicitely given in the following result.

\begin{proposition}[{Shultz~\cite[Proposition 3.7]{Shu07}}]\label{p entr piecew} 
 Let $T$ be a piecewise linear map with slope $\pm \beta$. Then the topological entropy of $T$ is equal to $\log \beta$.
\end{proposition}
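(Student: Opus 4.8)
The plan is to compute $h(I,T)$ directly from the cover definition given above, taking for $\mathcal{C}$ the partition of $I$ into the maximal affine pieces of $T$, showing that $H(T,\mathcal{C})=\log\beta$ and that $\mathcal{C}$ is a generator, so the supremum defining $h(I,T)$ is attained at $\mathcal{C}$. Throughout I assume $\beta>1$ (the relevant case), so that $|T'|=\beta$ makes $T$ expanding. Write $\mathcal{C}=\{I_0,\dots,I_l\}$ for the lap intervals and $B$ for the finite set of breakpoints of $T$.

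First I would analyse the refinement $\bigvee_{m=0}^{n-1}T^{-m}\mathcal{C}$. Its nonempty members are the cylinders $\{x\in I : T^{m}(x)\in I_{a_m},\ 0\le m\le n-1\}$; each such cylinder is an interval on which $T^{n}$ is affine with slope $\pm\beta^{n}$, i.e. exactly one lap of $T^{n}$, and since its image lies in $I$ its length is at most $|I|\,\beta^{-n}$. Hence $N\!\left(\bigvee_{m=0}^{n-1}T^{-m}\mathcal{C}\right)$ equals the number of laps $\ell_n:=\mathrm{lap}(T^{n})+1$, and the cylinders have diameter tending to $0$. The latter makes $\mathcal{C}$ a generating cover, so by the standard generator principle $h(I,T)=H(T,\mathcal{C})=\lim_n\frac1n\log\ell_n$ (the limit existing by submultiplicativity $\ell_{m+n}\le\ell_m\ell_n$). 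The whole problem is thus reduced to showing $\ell_n=\beta^{\,n(1+o(1))}$.

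For the lower bound I would use that the cylinders partition $I$ up to endpoints, each having length at most $|I|\beta^{-n}$, whence $\ell_n\ge\beta^{n}$ and $h(I,T)\ge\log\beta$; equivalently, since every piece has slope of modulus $\beta$, the total variation satisfies $\mathrm{Var}(T^{n})=\beta^{n}|I|\le \ell_n\,|I|$. For the upper bound I would iterate the identity $\ell_n=\ell_{n-1}+\sum_{b\in B}\#\,T^{-(n-1)}(b)$, which holds because the lap boundaries created at step $n$ are exactly the points whose $(n-1)$-st image is a breakpoint. It then suffices to bound the preimage multiplicity uniformly, namely $\#\,T^{-j}(y)\le C\beta^{j}$ for every $y\in I$; summing the geometric series gives $\ell_n=O(\beta^{n})$ and hence $h(I,T)\le\log\beta$, completing the proof.

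The main obstacle is precisely this uniform preimage bound. The Banach indicatrix theorem gives only the average statement $\int_I \#\,T^{-j}(y)\,dy=\mathrm{Var}(T^{j})=\beta^{j}|I|$, i.e. $\beta^{j}$ preimages on average, whereas the breakpoints $b\in B$ could a priori be atypical. The clean way to control the worst point is the transfer operator $(Lf)(y)=\beta^{-1}\sum_{x\in T^{-1}(y)}f(x)$, for which $\#\,T^{-j}(y)=\beta^{j}(L^{j}\mathbf 1)(y)$: by the Lasota--Yorke inequality for piecewise linear expanding maps the iterates $L^{j}\mathbf 1$ remain bounded in bounded variation, hence uniformly bounded, yielding $\#\,T^{-j}(y)\le C\beta^{j}$ for all $y$. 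Alternatively one argues geometrically, that a lap of $T^{j}$ whose image meets a fixed small neighbourhood of $b$ has length comparable to $\beta^{-j}$, and disjointness of laps bounds their number near $b$. Finitely many discontinuities of $T$, if present, only inflate $C$ and the additive constant in the recursion, so they do not affect the exponential rate. Collecting the two bounds gives $\lim_n\frac1n\log\ell_n=\log\beta$, and therefore $h(I,T)=\log\beta$.
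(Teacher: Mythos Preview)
The paper does not give a proof of this proposition: it is quoted from Shultz~\cite[Proposition~3.7]{Shu07} and used as a black box in the proof of the subsequent theorem on $h(S_{-\beta})$. There is therefore no argument in the paper to compare yours against.

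Your outline is essentially the Misiurewicz--Szlenk route: identify $h(I,T)$ with the exponential growth rate of $\mathrm{lap}(T^n)$ via a generating cover, then pin that rate to $\log\beta$ using the constant-slope hypothesis. The lower bound $\ell_n\ge\beta^{n}$ from the length estimate is clean and correct. For the upper bound you rightly isolate the real issue --- a \emph{pointwise} bound $\#T^{-j}(y)\le C\beta^{j}$ rather than the average one given by the Banach indicatrix --- and your transfer-operator/Lasota--Yorke argument does deliver it, though it is heavier machinery than the statement really needs (Misiurewicz--Szlenk handle this combinatorially). Two small corrections: your recursion $\ell_n=\ell_{n-1}+\sum_{b\in B}\#T^{-(n-1)}(b)$ should be an inequality $\le$, since new preimages of breakpoints may coincide with existing ones; and the ``generator principle'' you invoke for topological entropy via closed covers requires a word of care when $T$ is allowed to be discontinuous at the breakpoints, though this is a routine patch.
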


We now prove our result.

\begin{theorem}
 The topological entropy of $S_{-\beta}$ is equal to $\log \beta$.
\end{theorem}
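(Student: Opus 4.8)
The plan is to connect the symbolic dynamical system $S_{-\beta}$ to the one-dimensional dynamical system $(I_{-\beta}, T_{-\beta})$ via Takahashi's theorem, and then to apply Shultz's result. First I would observe that the Ito--Sadahiro transformation $T_{-\beta}:I_{-\beta}\to I_{-\beta}$, $x\mapsto -\beta x-\lfloor -\beta x+\frac{\beta}{\beta+1}\rfloor$, is piecewise linear with constant slope $-\beta$ on each of its intervals of continuity. Since the slope has modulus $\beta>1$, the interval $I_{-\beta}$ of length $1$ is cut into finitely many lap intervals (precisely $\lceil\beta\rceil$ of them, but the exact count is irrelevant), so $\mathrm{lap}(T_{-\beta})$ is finite. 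By the Remark following the definition of lap intervals, for a piecewise linear map the lap intervals coincide with the intervals of continuity, so the digit map $x\mapsto x_1x_2\cdots$ with $T_{-\beta}^n(x)\in I_{x_n}$ is exactly the $(-\beta)$-expansion map $\Edbm$ up to the labelling of lap intervals by digits $0,1,\ldots,\lfloor\beta\rfloor$.

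The key identification is that the subshift $X_{T_{-\beta}}=\overline{\gamma_{T_{-\beta}}(I_{-\beta})}$ appearing in Takahashi's theorem is exactly the $(-\beta)$-shift $S_{-\beta}$. Indeed, $\gamma_{T_{-\beta}}(I_{-\beta})$ is the set of $(-\beta)$-admissible words (one-sided), whose closure, after passing to the two-sided system in the standard way, is $S_{-\beta}$ by the definition of the $(-\beta)$-shift given before Theorem~\ref{adm}. One should note the minor discrepancy that Takahashi works with one-sided subshifts of $A^\N$ while $S_{-\beta}$ sits in $A_\beta^\Z$; this is harmless because the topological entropy of a two-sided subshift equals that of its one-sided counterpart (they have the same language of factors, hence the same $B_n$), so $h(S_{-\beta})=h(X_{T_{-\beta}})$.

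With these identifications in place the proof is short. By Theorem~\ref{t entr tak}, since $\mathrm{lap}(T_{-\beta})$ is finite, $h(X_{T_{-\beta}})=h(I_{-\beta},T_{-\beta})$. By Proposition~\ref{p entr piecew}, since $T_{-\beta}$ is piecewise linear with slope $\pm\beta$, we have $h(I_{-\beta},T_{-\beta})=\log\beta$. Combining, $h(S_{-\beta})=\log\beta$.

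The main obstacle is not any deep computation but rather the bookkeeping needed to verify that the hypotheses of the cited theorems genuinely apply: namely, that $T_{-\beta}$ is piecewise linear (checking that the floor term is locally constant, so on each continuity interval $T_{-\beta}(x)=-\beta x + c$ for a constant $c$), that it maps $I_{-\beta}$ into itself (this is built into the Ito--Sadahiro construction), that the lap number is finite, and that $\gamma_{T_{-\beta}}$ really reproduces the admissibility condition of Theorem~\ref{adm} so that its closure is $S_{-\beta}$. The matching of boundary/endpoint behaviour (the half-open interval $I_{-\beta}$, the special sequence $\mathbf{d}^*$ handling the periodic-odd-period case) is the fiddly point, but it only affects a measure-zero set of orbits and hence does not change the closure or the entropy.
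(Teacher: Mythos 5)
Your proof is correct and follows essentially the same route as the paper: identify $X_{T_{-\beta}}$ with the (one-sided version of the) $(-\beta)$-shift, note that the piecewise linear map $T_{-\beta}$ has finitely many lap intervals, and then combine Takahashi's theorem with Shultz's proposition to get $h(S_{-\beta})=h(I_{-\beta},T_{-\beta})=\log\beta$. The only point the paper makes explicit that you handle implicitly is extending $T_{-\beta}$ to the closure of $I_{-\beta}$ so that Takahashi's hypotheses (a closed interval) apply, but your remark that the endpoint behaviour does not affect the closure or the entropy covers this.
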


\begin{proof}
 Consider the dynamical system $(I_{-\beta},T_{-\beta})$. We extend the map $T_{-\beta}$ to the closure of $I_{-\beta}$ to fullfill the conditions of Theorem \ref{t entr tak}. By definition of the $(-\beta)$-expansion, the subshift $X_{T_\beta}$ coincides with the closure of the set of the $(-\beta)$-expansions in $A_\mb^\N$, whose entropy is the same as $S_{-\beta}\subset A_\mb^\Z$. As $T_{-\beta}$ is piecewise linear, the lap intervals coincide with the (finite) number of continuity intervals. Then, by Theorem \ref{t entr tak} and by Proposition \ref{p entr piecew}, $h(S_{-\beta})=h(I_{-\beta},T_{-\beta})=\log \beta$. 
\end{proof}
\subsection{The Pisot case}

We first prove that the classical result saying that if $\beta$ is a Pisot number, then every element of $\Q(\beta) \cap [0,1]$ has an eventually periodic
$\beta$-expansion is still valid for the base $\mb$. 

\begin{theorem}\label{rat}
If $\beta$ is a Pisot number, then every element of $\Q(\beta) \cap I_{-\beta}$ has an eventually periodic
$(-\beta)$-expansion.
\end{theorem}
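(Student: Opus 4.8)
The plan is to mimic the classical proof for $\beta$-expansions (Bertrand--Mathis, Schmidt), adapted to the Ito--Sadahiro transformation $T_{-\beta}$. First I would fix $x \in \Q(\beta) \cap I_{-\beta}$ and write $\Edbm(x) = (x_i)_{i\ge 1}$, so that $r_n := T_{-\beta}^{\,n}(x)$ satisfies the recurrence coming from the definition of the transformation: $r_n = -\beta r_{n-1} - x_n$, hence by iteration $r_n = (-\beta)^n x - \sum_{i=1}^n x_i (-\beta)^{\,n-i}$. Since all digits $x_i$ lie in the finite alphabet $A_{-\beta} = \{0,\dots,\lfloor\beta\rfloor\}$ and each $r_n$ lies in the bounded interval $I_{-\beta} = [-\frac{\beta}{\beta+1}, \frac{1}{\beta+1})$, the $r_n$ form an infinite family of elements of $\Q(\beta)$ that is bounded in the real embedding. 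To conclude eventual periodicity of the digit sequence it suffices to show the $r_n$ take only finitely many values, because $r_m = r_n$ with $m<n$ forces $\Edbm(x)$ to be eventually periodic with period dividing $n-m$ (the digit at step $k$ is a function of $r_{k-1}$ alone).

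The key step is the Pisot estimate on the conjugates. Write $x = P(\beta)/Q$ with $P \in \Z[X]$ and $Q \in \Z$ a common denominator clearing $x$; then $Q r_n \in \Z[\beta]$, in fact $Q r_n = (-\beta)^n Q x - Q\sum_{i=1}^n x_i (-\beta)^{n-i}$ lies in $\Z[\beta]$, and moreover its coefficients, expressed in the integral basis, are controlled. Let $\beta = \beta^{(1)}, \beta^{(2)}, \dots, \beta^{(d)}$ be the conjugates of $\beta$, with $|\beta^{(j)}| < 1$ for $j \ge 2$. Applying the $j$-th embedding to the identity $Q r_n = (-\beta)^n Q x - Q \sum_{i=1}^n x_i(-\beta)^{n-i}$, the term $(-\beta^{(j)})^n Q x^{(j)} \to 0$ as $n \to \infty$, and the sum $Q\sum_{i=1}^n x_i (-\beta^{(j)})^{n-i}$ is bounded in modulus by $Q \cdot \max(x_i) \cdot \sum_{i\ge 0} |\beta^{(j)}|^i = Q\lfloor\beta\rfloor/(1 - |\beta^{(j)}|)$, uniformly in $n$. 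Hence every conjugate $(Qr_n)^{(j)}$ is bounded uniformly in $n$, for $j \ge 2$; and $(Qr_n)^{(1)} = Qr_n$ is bounded since $r_n \in I_{-\beta}$. Therefore the algebraic integers $Q r_n$ all lie in a bounded region of $\R^d$ under the canonical embedding, so they form a finite set; consequently the $r_n$ form a finite set, and $\Edbm(x)$ is eventually periodic.

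The main obstacle, and the only point requiring care, is making precise that $Q r_n$ is an algebraic \emph{integer} (or at least lies in a fixed finitely generated $\Z$-module independent of $n$), so that "bounded under all embeddings" really does imply "finitely many values." Since $\beta$ is an algebraic integer, $\Z[\beta]$ is a finitely generated free $\Z$-module with basis $1, \beta, \dots, \beta^{d-1}$; I would choose $Q$ so that $Qx \in \Z[\beta]$, observe that $(-\beta)^n Qx \in \Z[\beta]$ and $Q\sum_{i=1}^n x_i(-\beta)^{n-i} \in \Z[\beta]$, hence $Qr_n \in \Z[\beta]$ for all $n$. The coordinates of $Qr_n$ in the basis $(\beta^{i})$ are integers, and the conjugate bounds above bound a fixed invertible linear image of the coordinate vector, hence bound the integer coordinates themselves — so only finitely many coordinate vectors occur. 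This finiteness, together with the observation that $x_{k} = \lfloor -\beta r_{k-1} + \frac{\beta}{\beta+1}\rfloor$ depends only on $r_{k-1}$, gives that the orbit $(r_n)$ is eventually periodic, hence so is $\Edbm(x)$.
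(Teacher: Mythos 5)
Your proposal is correct and follows essentially the same route as the paper: bound the real remainder $r_n=T_{-\beta}^n(x)$ by membership in $I_{-\beta}$, bound its conjugates using the Pisot property, and conclude that the remainders lie in a bounded subset of a discrete lattice of rank $d$ (hence take finitely many values), which forces eventual periodicity. The only cosmetic difference is bookkeeping: you place $Qr_n$ directly in $\Z[\beta]$ with basis $1,\beta,\dots,\beta^{d-1}$, whereas the paper proves by induction that $qr_n$ lies in the lattice spanned by $(-\beta)^{-1},\dots,(-\beta)^{-d}$.
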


\begin{proof}
 Let $M_\beta(X)=X^d -a_1X^{d-1}-\cdots-a_d$ be the minimal polynomial of $\beta$ and denote by $\beta=\beta_1,\ldots,\beta_d$ the conjugates of $\beta$. Let $x$ be arbitrarily fixed in $\Q(\beta) \cap I_{-\beta}$. Since $\Q(\beta) =
 \Q(-\beta)$, $x$ can be expressed as $x=q^{-1}\sum_{i=0}^{d-1} p_i (-\beta)^i$
with $q$ and $p_i$ in $\Z$, $q>0$ as small as possible in order to have uniqueness.

Let $(x_i)_{i\geq 1}$ be the $(-\beta)$-expansion of $x$, and write
\[
 r_n=r_n^{(1)}=r_n^{(1)}(x)=\frac{x_{n+1}}{-\beta}+\frac{x_{n+2}}{(-\beta)^2}+\cdots=(-\beta)^n\left(x-\sum_{k=1}^{n} x_k (-\beta)^{-k}\right).
\]
Since $r_n=T^n_{-\beta}(x)$ belongs to $I_{-\beta}$
then $|r_n|\leq \frac{\beta}{\beta+1}<1$.
For $2\leq j \leq d$, let
\[r_n^{(j)}=r_n^{(j)}(x)=(-\beta_j)^n\left(q^{-1}\sum_{i=0}^{d-1} p_i (-\beta_j)^i-\sum_{k=1}^{n} x_k (-\beta_j)^{-k}\right).
\]
Let $\eta=\max\{|\beta_j| \mid 2\leq j \leq d\}$: since $\beta$ is a Pisot number, $\eta<1$. Since $x_k \leq \lfloor \beta \rfloor$ we get
\[
 |r_n^{(j)}|\leq q^{-1}\sum_{i=0}^{d-1}| p_i | \eta^{n+i} + \lfloor \beta \rfloor \sum_{k=0}^{n-1} \eta^{k} 
\]
and since $\eta<1$, $\max_{1\leq j \leq d}\{\sup_n\{|r_n^{(j)}|\}\}< \infty.$

We need a technical result. Set $R_n=(r_n^{(1)}, \ldots, r_n^{(d)})$ and let $B$ the matrix $B=((-\beta_j)^{-i})_{1\leq i,j \leq d}.$
\begin{lemma}
 Let $x=q^{-1}\sum_{i=0}^{d-1} p_i (-\beta)^i$. For every $ n \geq 0$ there exists a unique $d$-uple $Z_n=(z_n^{(1)},\dots,z_n^{(d)})$ in $\Z^d$ such that $R_n=q^{-1}Z_nB$.
\end{lemma}

\begin{proof}
By induction on $n$. First, $r_1=-\beta x - x_1$, thus
\[
 r_1=q^{-1}\left(\sum_{i=0}^{d-1} p_i (-\beta)^{i+1}-qx_1\right)=q^{-1}\left(
\frac{z_1^{(1)}}{-\beta}+\cdots+\frac{z_1^{(d)}}{(-\beta)^d}\right)
\]
using the fact that $(-\beta)^d= -a_1(-\beta)^{d-1}+a_2 (-\beta)^{d-2}+
\cdots+(-1)^{d}a_d$. Now, $r_{n+1}=-\beta r_{n} - x_{n+1}$, hence
\[
 r_{n+1}=q^{-1}\left(z_n^{(1)}+
\frac{z_1^{(2)}}{-\beta}+\cdots+\frac{z_n^{(d)}}{(-\beta)^{d-1}}-qx_{n+1}\right)=q^{-1}\left(
\frac{z_{n+1}^{(1)}}{-\beta}+\cdots+\frac{z_{n+1}^{(d)}}{(-\beta)^{d}}\right)
\]
since $z_n^{(1)}-qx_{n+1}\in \Z$.
Thus for every $n$ there exists $(z_n^{(1)},
\ldots,z_n^{(d)})$ in $\Z^d$ such that
\[r_{n}=q^{-1}\sum_{k=1}^{d}z_n^{(k)}(-\beta)^{-k}.
\]
Since the latter equation has integral coefficients and is satisfied by $-\beta$, it is also satisfied by $-\beta_j$, $2 \leq j \leq d$, and
\[r_{n}^{(j)}=(-\beta_j)^n\left(q^{-1}\sum_{i=0}^{d-1}\bar p_i (-\beta_j)^i-\sum_{k=1}^{n} x_k (-\beta_j)^{-k}\right)=q^{-1}\sum_{k=1}^{d}z_n^{(k)}(-\beta_j)^{-k}.
\]
\end{proof}

Let us go back to the proof of Theorem~\ref{rat}. Let $V_n=qR_n$. The $(V_n)_{n\geq1}$ have bounded norm, since
$\max_{1\leq j \leq d}\{\sup_n\{|r_n^{(j)}|\}\}< \infty$. As the matrix $B$ is invertible, for every $ n\geq1$,
\[\|Z_n\|=\|(z_n^{(1)},\dots,z_n^{(d)})\|=\max\{|z_n^{(j)}|\; : 1\leq j \leq d\}<\infty\]
so there exist $p$ and $m \geq 1$ such that $Z_{m+p}=Z_{p}$, hence $r_{m+p}=r_{p}$ and the $(-\beta)$-expansion of $x$ is eventually periodic.
\end{proof}

As a corollary we get the following result.
\begin{theorem}\label{pis-sof}
If $\beta$ is a Pisot number then the $(-\beta)$-shift is a sofic system.
\end{theorem}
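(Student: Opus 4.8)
The plan is to derive Theorem~\ref{pis-sof} as a direct consequence of Theorem~\ref{rat} together with the combinatorial characterization of sofic $(-\beta)$-shifts already recorded in the excerpt (the Ito--Sadahiro theorem, restated here via Proposition~\ref{gensof}). Recall that the $(-\beta)$-shift is sofic if and only if $\Edbm(-\frac{\beta}{\beta+1})$ is eventually periodic. So it suffices to show that when $\beta$ is a Pisot number, the $(-\beta)$-expansion of the point $-\frac{\beta}{\beta+1}$ is eventually periodic.

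First I would observe that $-\frac{\beta}{\beta+1}$ lies in $I_{-\beta}=\left[-\frac{\beta}{\beta+1},\frac{1}{\beta+1}\right)$ --- indeed it is the left endpoint of that interval --- so its $(-\beta)$-expansion $\Edbm(-\frac{\beta}{\beta+1})$ is defined. Next, since $\beta$ is an algebraic integer, $\frac{\beta}{\beta+1}\in\Q(\beta)$, hence $-\frac{\beta}{\beta+1}\in\Q(\beta)\cap I_{-\beta}$. Theorem~\ref{rat} then applies verbatim to the number $x=-\frac{\beta}{\beta+1}$ and tells us that its $(-\beta)$-expansion is eventually periodic. Invoking the characterization of sofic $(-\beta)$-shifts (the last displayed theorem of the subsection, due to Ito--Sadahiro, which itself follows from Theorem~\ref{adm}, Lemma~\ref{inter}, Proposition~\ref{gensof}, Remark~\ref{remgeneral}, and closure of regular sets under intersection), we conclude that $S_{-\beta}$ is sofic.

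There is essentially no obstacle here: the entire weight of the argument has been pushed into Theorem~\ref{rat}, whose proof (the Pisot-conjugate boundedness argument with the integer vectors $Z_n$) is the substantive content. The only point worth stating carefully is that $-\frac{\beta}{\beta+1}$ genuinely belongs to $\Q(\beta)\cap I_{-\beta}$, so that Theorem~\ref{rat} is applicable to exactly the number whose expansion governs soficity; everything else is a citation. Thus the proof is a short two-line deduction.

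\begin{proof}
Since $\beta$ is an algebraic integer, $\frac{\beta}{\beta+1} \in \Q(\beta)$, and the point $x=-\frac{\beta}{\beta+1}$ is the left endpoint of $I_{-\beta}=\left[-\frac{\beta}{\beta+1},\frac{1}{\beta+1}\right)$, hence $x \in \Q(\beta)\cap I_{-\beta}$. By Theorem~\ref{rat}, the $(-\beta)$-expansion $\Edbm(-\frac{\beta}{\beta+1})$ is eventually periodic. By the characterization of sofic $(-\beta)$-shifts recalled above (Ito--Sadahiro), the $(-\beta)$-shift is sofic.
\end{proof}
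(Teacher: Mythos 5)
Your proof is correct and coincides with the paper's own (implicit) derivation: the paper presents Theorem~\ref{pis-sof} simply as a corollary of Theorem~\ref{rat}, the point being exactly that $-\frac{\beta}{\beta+1}\in\Q(\beta)\cap I_{-\beta}$ so its $(-\beta)$-expansion is eventually periodic, whence soficity by the Ito--Sadahiro characterization. Nothing to add.
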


The \emph{normalization} in base $-\beta$
is the function which maps any $(-\beta)$-represen\-tation on an alphabet $C$ of digits
of
a given number of $I_{\mb}$ onto the admissible $(-\beta)$-expansion
of that number.

Let $C=\{-c, \ldots,c\}$, where $c \ge \lfloor \beta \rfloor$ is an integer.
Denote
$$Z_{\mb}(2c)= \Big\{(z_i)_{i \ge 0} \in\{-2c,\ldots,2c\}^\N\ \Big|\ 
\sum_{i \ge 0}z_i(\mb)^{-i}=0\Big\}\,.
$$

The set $Z_\mb(2c)$ is recognized by a countable infinite automaton $\mathcal A_\mb(2c)$:
the set of states $Q(2c)$ consists of  all
$s\in\mathbb Z[\beta] \cap [-\frac{2c}{\beta-1},\frac{2c}{\beta-1}]$.
Transitions are of the form $s\stackrel e \to s'$
with $e \in\{-c,\ldots,c\}$ such that $s'=-\beta s+e$.
The state $0$ is initial; every state is terminal.

Let $M_\beta(X)$ be the minimal polynomial of $\beta$,
and denote by $\beta=\beta_1$, $\beta_2$, \ldots, $\beta_d$ the roots of $M_\beta$.
We define a norm on the discrete lattice of rank $d$, $\Z[X]/(M_\beta)$, as
$$||P(X)||=\max_{1 \le i \le d} |P(\beta_i)|.$$

\begin{proposition}
If $\beta$ is a Pisot number then the automaton $\mathcal A_\mb(2c)$ is finite for 
every $c \ge \lfloor \beta \rfloor$.
\end{proposition}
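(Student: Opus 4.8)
The plan is to show that the set of states $Q(2c)$ of $\mathcal A_\mb(2c)$ is finite whenever $\beta$ is a Pisot number, which immediately gives finiteness of the automaton since the edge set is then bounded as well (at most $2c+1$ outgoing edges per state). The states live in $\Z[\beta]$, so I would identify them with polynomials in $\Z[X]/(M_\beta)$ and use the norm $\|P(X)\|=\max_{1\le i\le d}|P(\beta_i)|$ defined just above. Since $\Z[X]/(M_\beta)$ is a discrete lattice of rank $d$ embedded in $\R^d$ via $P\mapsto(P(\beta_1),\dots,P(\beta_d))$, any set of lattice points with bounded norm is finite. So it suffices to prove that every reachable state $s$ satisfies $\|s\|\le K$ for some constant $K=K(\beta,c)$ independent of $s$.

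First I would bound the ``first coordinate'' $|s|=|s(\beta_1)|$. By definition a reachable state $s$ has the form $s=-\beta s_{k-1}+e_k$ after reading a word $e_1\cdots e_k$, i.e. $s=\sum_{j=1}^{k}e_j(-\beta)^{k-j}$, equivalently $s=(-\beta)^k\sum_{j=1}^k e_j(-\beta)^{-j}$; and the paragraph before the statement already restricts to $s\in[-\tfrac{2c}{\beta-1},\tfrac{2c}{\beta-1}]$, so $|s(\beta_1)|\le\tfrac{2c}{\beta-1}$ is built into the definition of $Q(2c)$. Next I would bound the conjugate coordinates: applying the Galois automorphism sending $\beta$ to $\beta_i$ to the relation $s=\sum_{j=1}^{k}e_j(-\beta)^{k-j}$ is not directly useful because $k$ grows, so instead I would use the other expression $s(\beta_i)=(-\beta_i)^k\bigl(\text{something in }[-\tfrac{2c}{\beta-1},\tfrac{2c}{\beta-1}]\bigr)$ — more precisely, writing $s=-\beta s'+e$ repeatedly and unwinding, $s(\beta_i)=\sum_{j\ge 0}(\text{digit})(-\beta_i)^{j}$ truncated, which since $|\beta_i|<1$ (Pisot!) is bounded by $c\sum_{j\ge0}|\beta_i|^j=\dfrac{c}{1-|\beta_i|}\le\dfrac{c}{1-\eta}$ where $\eta=\max_{i\ge2}|\beta_i|<1$. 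Hence $\|s\|\le\max\bigl(\tfrac{2c}{\beta-1},\tfrac{c}{1-\eta}\bigr)=:K$ for every reachable state.

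Therefore all reachable states lie in the finite set $\{P\in\Z[X]/(M_\beta): \|P\|\le K\}$, the automaton has finitely many states, and — each state having at most $2c+1$ outgoing transitions — finitely many edges; so $\mathcal A_\mb(2c)$ is finite. The main obstacle is the bookkeeping in the second step: one must be careful that the two representations of a reachable state (the ``polynomial in $-\beta$ with small coefficients'' form used to bound conjugates, and the ``element of a bounded real interval'' form built into $Q(2c)$) are applied to the right coordinates, and that the geometric series estimate genuinely uses $\eta<1$, i.e. the Pisot hypothesis — this is exactly the point where the argument would fail for a non-Pisot $\beta$. Everything else is the standard fact that a bounded region contains finitely many points of a discrete lattice.
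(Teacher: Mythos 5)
Your proposal is correct and follows essentially the same route as the paper: write a reachable state as an integer polynomial in $-\beta$ with digit-bounded coefficients, bound the archimedean coordinate via the interval constraint in the definition of $Q(2c)$, bound each conjugate coordinate by a geometric series using $|\beta_i|<1$, and conclude by discreteness of the lattice $\Z[X]/(M_\beta)$ under the norm $\|P\|=\max_i|P(\beta_i)|$. The only (immaterial) divergence is the constant in the conjugate bound ($c$ versus the paper's $2c$), which does not affect finiteness.
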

\begin{proof}
Every state $s$ in $Q(2c)$ is
associated with the label of the shortest path
$f_0f_1\cdots f_n$ from $0$ to $s$ in the automaton. Thus $s=f_{0}(-\beta)^{n} +f_1(-\beta)^{n-1} +\cdots
 + f_n=P(\beta)$, 
with $P(X)$ in $\Z[X]/(M_\beta)$. Since $f_0f_1\cdots f_n$ is a prefix of a word of
$Z_\mb(2c)$, there exists $f_{n+1}f_{n+2}\cdots$ such that $(f_i)_{i \ge 0}$
is in $Z_\mb(2c)$. Thus
$s=|P(\beta)|<\frac {2c}{\beta-1}$.
For every conjugate $\beta_i$, $2 \le i \le d$, $|\beta_i|<1$, and
$|P(\beta_i)| < \frac {2c}{1-|\beta_i|}$.
Thus every state of $Q(2c)$ is bounded in norm, and so there is only
 a finite number of them.
\end{proof}

The {\em redundancy transducer} $\mathcal R_\mb(c)$ is similar
to $\mathcal A_\mb(2c)$.
Each transition $s\stackrel e\to s'$ of $\mathcal A_\mb(2c)$ is replaced in
$\mathcal R_\mb(c)$ by a set of transitions
$s\stackrel{a|b}\longrightarrow s'$, with $a,b\in\{-c,\ldots,c\}$ and $a-b=e$.
Thus one obtains the following proposition.

\begin{proposition}\label{redundancy}
The redundancy transducer $\mathcal R_\mb(c)$ recognizes the set
$$
\big\{(x_1x_2\cdots,y_1y_2\cdots) \in C^\N \times C^\N\ \big|\ \;
\sum_{i \ge 1}x_i(\mb)^{-i}=\sum_{i \ge 1}y_i(\mb)^{-i} \big\}.
$$
If $\beta$ is a Pisot number, then $\mathcal R_\mb(c)$ is finite.
\end{proposition}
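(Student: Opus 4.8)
The plan is to derive both assertions from the auxiliary automaton $\mathcal A_\mb(2c)$ introduced just above, so that almost nothing new has to be proved. For the finiteness statement I would simply observe that $\mathcal R_\mb(c)$ has the same set of states $Q(2c)$ as $\mathcal A_\mb(2c)$, which is finite when $\beta$ is a Pisot number by the preceding proposition, and that replacing each edge $s \stackrel{e}{\to} s'$ of $\mathcal A_\mb(2c)$ by the edges $s \stackrel{a|b}{\longrightarrow} s'$ with $(a,b) \in C \times C$ and $a-b=e$ introduces at most $|C|^2$ new edges per old edge. Hence $\mathcal R_\mb(c)$ is finite; this part is immediate.

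For the recognition statement, the key remark is that, by the very construction of $\mathcal R_\mb(c)$, a pair $(x,y)=(x_1x_2\cdots,y_1y_2\cdots) \in C^\N\times C^\N$ labels an infinite path from the initial state $0$ in $\mathcal R_\mb(c)$ (all states being terminal) if and only if the difference word $(z_i)_{i\ge1}:=(x_i-y_i)_{i\ge1}$, which belongs to $\{-2c,\ldots,2c\}^\N$, labels an infinite path from $0$ in $\mathcal A_\mb(2c)$ through the same sequence of states; here one uses that $\{a-b \mid a,b\in C\}=\{-2c,\ldots,2c\}$, so that the edge-splitting is surjective on labels in both directions. Thus it suffices to prove: a sequence $(z_i)_{i\ge1}\in\{-2c,\ldots,2c\}^\N$ labels an infinite path from $0$ in $\mathcal A_\mb(2c)$ if and only if $\sum_{i\ge1}z_i(\mb)^{-i}=0$.

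To prove this equivalence I would unfold the defining recurrence $s_{n+1}=-\beta s_n+z_{n+1}$, $s_0=0$, into $s_n=\sum_{k=1}^{n}z_k(\mb)^{n-k}$, whence $(\mb)^{-n}s_n=\sum_{k=1}^{n}z_k(\mb)^{-k}$. If an infinite path exists, every $s_n$ lies in $[-\tfrac{2c}{\beta-1},\tfrac{2c}{\beta-1}]$ by definition of $Q(2c)$, and since $\beta>1$ the left-hand side tends to $0$, forcing $\sum_{i\ge1}z_i(\mb)^{-i}=0$. Conversely, if that series vanishes I would put $s_n:=-\sum_{j\ge1}z_{n+j}(\mb)^{-j}$ for $n\ge0$ and check that $s_0=0$, that $|s_n|\le 2c\sum_{j\ge1}\beta^{-j}=\tfrac{2c}{\beta-1}$, that $s_n=\sum_{k=1}^{n}z_k(\mb)^{n-k}\in\Z[\beta]$ (this identity is exactly where vanishing of the series is used), and that $s_{n+1}=-\beta s_n+z_{n+1}$; these four facts say precisely that $0\stackrel{z_1}{\to}s_1\stackrel{z_2}{\to}\cdots$ is a legitimate infinite path of $\mathcal A_\mb(2c)$.

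The only point that is not pure bookkeeping is the forward direction of this last equivalence: turning ``the run stays inside the bounded set $Q(2c)$'' into ``the truncated series tends to $0$'', which is where the hypothesis $\beta>1$ together with the uniform bound built into the definition of the states of $\mathcal A_\mb(2c)$ enter. Everything else reduces to manipulating the recurrence $s_{n+1}=-\beta s_n+z_{n+1}$ and summing a geometric series.
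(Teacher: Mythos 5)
Your proof is correct and takes essentially the same route as the paper, which in fact offers no explicit argument beyond the construction of $\mathcal R_\mb(c)$ from $\mathcal A_\mb(2c)$ (``Thus one obtains the following proposition''); you merely supply the verification that $\mathcal A_\mb(2c)$ recognizes $Z_\mb(2c)$ (via the recurrence $s_{n+1}=-\beta s_n+z_{n+1}$ and the boundedness of the states) and that the edge-splitting gives an exact path correspondence. Note only that you tacitly, and correctly, read the transition labels of $\mathcal A_\mb(2c)$ as ranging over $\{-2c,\ldots,2c\}$ rather than the $\{-c,\ldots,c\}$ written in the paper, which is evidently a typo since $\{a-b \mid a,b\in C\}=\{-2c,\ldots,2c\}$.
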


\begin{theorem}
If $\beta$ is a Pisot number, then normalization in base $-\beta$
 on any alphabet $C$ is
realizable by a finite transducer.
\end{theorem}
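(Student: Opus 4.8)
The plan is to reduce normalization in base $-\beta$ on the alphabet $C=\{-c,\ldots,c\}$ (with $c\ge\lfloor\beta\rfloor$) to the composition of two finite automata-theoretic objects: the redundancy transducer $\mathcal R_\mb(c)$ of Proposition~\ref{redundancy}, which is finite when $\beta$ is Pisot, and a finite automaton recognizing the set of $(-\beta)$-admissible words. First I would observe that normalization $\nu_{-\beta,C}$ is precisely the partial function sending $(y_i)_{i\ge1}\in C^\N$ with $\sum y_i(-\beta)^{-i}\in I_{-\beta}$ to the unique admissible word $(x_i)_{i\ge1}$ with the same numerical value. Hence its graph is the intersection of the relation recognized by $\mathcal R_\mb(c)$ with the set $C^\N\times S_{-\beta}^{(+)}$, where $S_{-\beta}^{(+)}$ denotes the one-sided $(-\beta)$-shift (equivalently the set of admissible words, whose closure is $S_{-\beta}$).

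The key steps, in order: (1) By Theorem~\ref{pis-sof}, since $\beta$ is Pisot the $(-\beta)$-shift is sofic, so the set of admissible words is recognized by a finite automaton $\mathcal D$; more precisely, using Theorem~\ref{adm} and Lemma~\ref{inter} together with Propositions~\ref{gensof} and~\ref{aut}, one obtains a finite deterministic automaton whose paths are exactly the admissible words (the strict-inequality condition at the right end is handled exactly as in the sofic construction). (2) By Proposition~\ref{redundancy}, $\mathcal R_\mb(c)$ is a finite transducer recognizing all pairs of $C$-words with equal value. (3) Form the product transducer $\mathcal T$ of $\mathcal R_\mb(c)$ with $\mathcal D$ acting on the output tape: a state of $\mathcal T$ is a pair (state of $\mathcal R_\mb(c)$, state of $\mathcal D$), and a transition $s\stackrel{a\mid b}\longrightarrow s'$ of $\mathcal R_\mb(c)$ together with $t\stackrel b\to t'$ in $\mathcal D$ gives $(s,t)\stackrel{a\mid b}\longrightarrow(s',t')$. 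Since both factors are finite, $\mathcal T$ is finite. (4) Check that $\mathcal T$ realizes $\nu_{-\beta,C}$: a pair $((y_i),(x_i))$ is accepted iff the values agree \emph{and} $(x_i)$ is admissible, which by Theorem~\ref{adm} and uniqueness of the $(-\beta)$-expansion forces $(x_i)=\Edbm(\sum y_i(-\beta)^{-i})$; one also notes that every $y$-word whose value lies in $I_{-\beta}$ is extendable to an accepted pair, since the greedy expansion provides the matching output. (5) Since addition is the special case $C=\{0,\ldots,2\lfloor\beta\rfloor\}$ (a sub-alphabet of a symmetric $C$ after a harmless shift of digits, or handled directly with $C=\{0,\ldots,2\lfloor\beta\rfloor\}$ and the analogous redundancy transducer on nonnegative digits), the result covers addition as well.

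I expect the main obstacle to be step (1): making precise, and finite, the automaton recognizing exactly the admissible one-sided words, including the subtle boundary behaviour when $\mathbf d=\Edbm(-\tfrac{\beta}{\beta+1})$ is periodic with odd period, where by Lemma~\ref{inter} one must intersect $S(\mathbf d)$ with $S'(\mathbf d^*)$ and also respect the \emph{strict} alternate inequality $w_nw_{n+1}\cdots\prec_{alt}\mathbf d^*$ from Theorem~\ref{adm} rather than the closed one. The remedy is standard: intersect the two finite sofic automata from Remark~\ref{remgeneral}, then delete the (finitely many) states from which only the forbidden equality-word $\mathbf d^*$ or $\mathbf d$-violating suffixes emanate, exactly as one passes from the $\beta$-shift to the set of $\beta$-expansions in the classical Parry setting. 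A secondary, purely bookkeeping, point is verifying that the product transducer $\mathcal T$ is trim and that its accepted relation is exactly the graph of a function on the intended domain; this follows from uniqueness of $(-\beta)$-expansions and causes no real difficulty.

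\begin{proof}
Let $C=\{-c,\ldots,c\}$ with $c\ge\lfloor\beta\rfloor$. By Theorem~\ref{pis-sof} the $(-\beta)$-shift is sofic, so by Theorem~\ref{adm}, Lemma~\ref{inter}, Remark~\ref{remgeneral} and Proposition~\ref{gensof} there is a finite deterministic automaton $\mathcal D$ whose accepted infinite paths are exactly the $(-\beta)$-admissible words; when $\mathbf d=\Edbm(-\tfrac{\beta}{\beta+1})$ has odd period one takes the product of the finite automata for $S(\mathbf d)$ and $S'(\mathbf d^*)$ and removes the states that would accept the forbidden equality sequence $\mathbf d^*$, exactly as in the classical passage from the $\beta$-shift to the set of $\beta$-expansions.

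By Proposition~\ref{redundancy}, since $\beta$ is Pisot the redundancy transducer $\mathcal R_\mb(c)$ is finite and recognizes the set of pairs $((x_i)_{i\ge1},(y_i)_{i\ge1})\in C^\N\times C^\N$ with $\sum_{i\ge1}x_i(\mb)^{-i}=\sum_{i\ge1}y_i(\mb)^{-i}$. Form the transducer $\mathcal T$ whose states are pairs $(s,t)$ with $s$ a state of $\mathcal R_\mb(c)$ and $t$ a state of $\mathcal D$, with initial state $(0,t_0)$, all states terminal, and a transition $(s,t)\stackrel{a\mid b}\longrightarrow(s',t')$ whenever $s\stackrel{a\mid b}\longrightarrow s'$ in $\mathcal R_\mb(c)$ and $t\stackrel b\to t'$ in $\mathcal D$. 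As both factors are finite, $\mathcal T$ is finite.

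It remains to check that $\mathcal T$ realizes $\nu_{-\beta,C}$. A pair $((y_i),(x_i))$ is accepted by $\mathcal T$ iff $\sum y_i(\mb)^{-i}=\sum x_i(\mb)^{-i}$ and $(x_i)$ is $(-\beta)$-admissible. If $\mathbf y=(y_i)\in C^\N$ satisfies $0\le$, resp.\ the value lies in $I_{-\beta}$, then $y:=\sum y_i(\mb)^{-i}\in I_{-\beta}$ has a unique admissible expansion $\Edbm(y)=(x_i)$, and by construction of $\mathcal R_\mb(c)$ the pair $(\mathbf y,\Edbm(y))$ is recognized by $\mathcal R_\mb(c)$; since $\Edbm(y)$ is admissible it is accepted by $\mathcal D$, so $(\mathbf y,\Edbm(y))$ is accepted by $\mathcal T$. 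Conversely, any accepted pair has $(x_i)$ admissible and of value $y$, hence $(x_i)=\Edbm(y)$ by uniqueness. Thus $\mathcal T$ recognizes exactly the graph of $\nu_{-\beta,C}$. Since addition corresponds to $C=\{0,\ldots,2\lfloor\beta\rfloor\}$, which is a sub-alphabet of such a symmetric $C$, addition in base $-\beta$ is in particular realizable by a finite transducer.
\end{proof}
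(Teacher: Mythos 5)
Your proposal is correct and follows essentially the same route as the paper: both obtain the normalizer by intersecting the output automaton of the finite redundancy transducer $\mathcal R_{-\beta}(c)$ (finite by Proposition~\ref{redundancy} in the Pisot case) with a finite automaton recognizing the $(-\beta)$-admissible words, whose existence follows from Theorem~\ref{pis-sof}. Your additional care about the strict boundary inequality in Theorem~\ref{adm} and the verification that the accepted relation is the graph of $\nu_{-\beta,C}$ simply fills in details the paper leaves implicit.
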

\begin{proof}
The normalization is obtained by keeping in $\mathcal R_\mb(c)$ only the outputs
$y$ that are $(\mb$)-admissible. By Theorem~\ref{pis-sof} the set of admissible words is recognizable
by a finite automaton $\mathcal D_\mb$. The finite transducer 
$\mathcal N_\mb(c)$ doing the normalization is obtained by making
the intersection of the output automaton of $\mathcal R_\mb(c)$ 
with $\mathcal D_\mb$.
\end{proof}

\begin{proposition}\label{conversion}
If $\beta$ is a Pisot number, then the conversion from base $-\beta$ to base $\beta$ is realizable by a finite transducer.
The result is $\beta$-admissible.
\end{proposition}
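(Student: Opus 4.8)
The plan is to mimic the strategy that produced the negative-to-negative normalization and the positive-to-negative integer conversion, by building a transducer whose input automaton reads $(-\beta)$-admissible words and whose output reads $\beta$-admissible words, realizing the identity on the represented real number. Concretely, I would introduce the set
\[
W=\Big\{(x_1x_2\cdots,\,y_1y_2\cdots)\in A_{-\beta}^\N\times A_\beta^\N \ \Big|\ \sum_{i\ge 1}x_i(-\beta)^{-i}=\sum_{i\ge 1}y_i\beta^{-i}\Big\},
\]
and first describe a countable automaton recognizing the ``difference $=0$'' relation. Here the two bases differ, so a digit read on the input side contributes $x_i(-\beta)^{-i}$ and on the output side $y_i\beta^{-i}=(-1)^i y_i(-\beta)^{-i}$; hence the running remainder after $n$ steps, multiplied by $(-\beta)^n$, is an element of $\Z[\beta]$ of the form $\sum_{i=1}^n\big(x_i-(-1)^i y_i\big)(-\beta)^{n-i}$. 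States are these remainders $s\in\Z[\beta]$, with transitions $s\xrightarrow{a|b} s'$ where $s'=-\beta s+\big(a-(-1)^{\,|\cdot|}b\big)$; the parity bookkeeping can be absorbed by doubling the state set (a parity bit) or, more cleanly, by processing two input/output digits at a time as in Proposition~\ref{conv_b}. The state $0$ is initial and every state is terminal.

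Next I would prove finiteness of this automaton when $\beta$ is Pisot, exactly as in the proof of the proposition on $\mathcal A_{-\beta}(2c)$: a reachable state $s$ that is also co-reachable is simultaneously a value $P(\beta)$ for $P\in\Z[X]/(M_\beta)$ and bounded in absolute value (because both tails $\sum x_i(-\beta)^{-i}$ and $\sum y_i\beta^{-i}$ lie in bounded intervals, so their difference scaled by $(-\beta)^n$ stays bounded), while for every conjugate $\beta_j$ with $2\le j\le d$ the quantity $|P(\beta_j)|$ is bounded by a geometric series with ratio $|\beta_j|<1$. Hence all reachable-and-co-reachable states are bounded in the norm $\|P\|=\max_{1\le i\le d}|P(\beta_i)|$ on the rank-$d$ lattice $\Z[X]/(M_\beta)$, so there are finitely many of them; trimming to the accessible-coaccessible part gives a finite transducer recognizing $W$.

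Finally I would cut down this finite transducer so that the \emph{output} is forced to be $\beta$-admissible: by Parry's theorem (Theorem~\ref{parryth}) together with the classical fact that the $\beta$-shift is sofic when $\beta$ is Pisot, the set $D_\beta$ of $\beta$-expansions of numbers in $[0,1)$ is recognized by a finite automaton $\mathcal D_\beta$; intersecting the output automaton of the transducer with $\mathcal D_\beta$ (and correspondingly restricting the input side to $(-\beta)$-admissible words using the finite automaton $\mathcal D_{-\beta}$ from Theorem~\ref{pis-sof}) yields a finite transducer realizing the conversion with $\beta$-admissible output, which proves the claim. The main obstacle I anticipate is not the finiteness argument, which is routine once the right state space is chosen, but getting the alternating sign convention right: the input digit weights alternate in sign relative to the output digit weights, so the transition function and the choice of which remainders to allow must track parity carefully, and one must check that the resulting partial function is single-valued after intersecting with $\mathcal D_\beta$, i.e. that two distinct admissible $\beta$-expansions never represent the same number (which holds except possibly at countably many points, and those are handled exactly as the admissibility normalization in the previous theorem). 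A secondary point to verify is that every $x$ represented by a $(-\beta)$-admissible word indeed has its value lie in $[0,1)$ after the appropriate shift, so that a $\beta$-expansion exists; this follows from $I_{-\beta}=[-\tfrac{\beta}{\beta+1},\tfrac{1}{\beta+1})$ and a shift by a bounded power of $\beta$, just as in the passage from $[0,1]$ to general reals for $\beta$-expansions.
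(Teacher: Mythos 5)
Your construction is correct, but it takes a genuinely different route from the paper. The paper's proof is a two-line reduction: given $x\in I_{-\beta}$ with $x\ge 0$ and $\Edbm(x)=x_1x_2x_3\cdots$, the word $\overline{x_1}x_2\overline{x_3}\cdots$ (odd-indexed digits negated) is a $\beta$-representation of $x$ on the symmetric alphabet $\{-\lfloor\beta\rfloor,\ldots,\lfloor\beta\rfloor\}$, so the conversion is exactly normalization in \emph{positive} base $\beta$ on that alphabet, which is already known to be computable by a finite transducer for Pisot $\beta$ by the cited result of Frougny. You instead build the two-tape ``equal value'' automaton for the mixed-base relation from scratch and prove its finiteness by the conjugate-bound argument on the lattice $\Z[X]/(M_\beta)$; this works, and is essentially an inlined re-proof of the normalization theorem in this special case, at the cost of the parity bookkeeping you flag (which the paper's digit-negation trick absorbs in one stroke). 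What your version buys is self-containedness; what the paper's buys is brevity and a cleaner conceptual statement. Two small points: the single-valuedness worry is moot, since the statement only asks for a finite transducer (the paper explicitly notes afterwards that it is not sequential), and your closing remark about shifting by a power of $\beta$ is unnecessary --- the conversion is implicitly restricted to $x\ge 0$, and then $x<\tfrac{1}{\beta+1}<1$ already, so $\mathcal D_\beta$ applies directly; for $x<0$ no base-$\beta$ expansion over $A_\beta$ exists at all, a restriction you should state.
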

\begin{proof}
Let $x \in I_\mb$, $x \ge 0$, such that
$\Edbm(x)=x_1x_2x_3\cdots$. 
Denote $\bar a$ the signit digit $(-a)$.
Then $\overline{x_1}x_2\overline{x_3}\cdots$ is a 
$\beta$-representation of $x$ on the alphabet 
$\widetilde{A_\mb}=\{-\lfloor \beta \rfloor , \ldots,\lfloor \beta \rfloor\}$.
Thus the conversion is equivalent to the normalization in base $\beta$ on the alphabet
$\widetilde{A_\mb}$, and when $\beta$ is a Pisot number, it is realizable by
a finite transducer by \cite{Frougny92}.
\end{proof}

\section{On-line conversion from positive to negative base}\label{s_conversion}

Proposition \ref{conversion} shows the actability of the conversion from positive to negative base with a finite transducer for a particular class of bases, {\it i.e.} the Pisot numbers. 
The result is admissible, but this transducer is not sequential.

In the case where the base is a negative integer, we have seen 
in Section~\ref{int} that the conversion from base $b$ to base $-b$
is realizable by a finite right sequential transducer.

\subsection{On-line conversion in the general case}\label{algo}

An on-line algorithm is such that, after a certain delay of latency
$\delta$ during which the data are read without writing, a digit of the output
is produced for each digit of the input, see \cite{M} for 
on-line arithmetic in integer base. 

\begin{theorem}
There exists a conversion from base $\beta$ to base $-\beta$ which is computable by an on-line
algorithm with delay $\delta$, where $\delta$ is the smallest positive integer such that
\begin{equation}\label{delay}
\frac{\lfloor \beta \rfloor}{\beta^{\delta -1}}  + \frac{\lfloor \beta \rfloor}{\beta^{\delta }}
\le 1 - \{\beta\}.
\end{equation}
The result is not admissible.
\end{theorem}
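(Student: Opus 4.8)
The plan is to exhibit an explicit on-line conversion algorithm and then argue that after reading the first $\delta$ input digits, each subsequent output digit can be produced immediately from the digits read so far, with the partial remainder kept small enough that the greedy-type digit choice is unambiguous. Suppose the input is a $\beta$-representation $x = (\decdot x_1 x_2 \cdots)_\beta$ with $x_i \in A_\beta = \{0,\ldots,\lfloor\beta\rfloor\}$, so $x \in [0,1]$. We want to produce $y_1 y_2 \cdots$ with $y_i \in A_{-\beta}$ and $x = (\decdot y_1 y_2 \cdots)_{-\beta} = \sum_{i\ge1} y_i(-\beta)^{-i}$, not requiring the $y_i$ to be $(-\beta)$-admissible. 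The natural device is to track, after having emitted $y_1\cdots y_k$ and read $x_1\cdots x_{k+\delta}$, the quantity
\begin{equation*}
q_k = (-\beta)^k\Bigl(\sum_{i=1}^{k+\delta} x_i\beta^{-i} - \sum_{i=1}^{k} y_i(-\beta)^{-i}\Bigr),
\end{equation*}
a kind of ``scaled lookahead remainder''. The first $\delta$ steps are transient: one reads $x_1,\ldots,x_\delta$ and writes nothing. Then at step $k+1$ one reads $x_{k+\delta+1}$, updates $q_k \mapsto -\beta q_k + x_{k+\delta+1}\beta^{-\delta}$ (adjusting for the fresh input digit), and chooses $y_{k+1}$ to be the unique digit in $A_{-\beta}$ that brings the new scaled remainder back into a fixed bounded window; then $q_{k+1} = -\beta q_k + x_{k+\delta+1}\beta^{-\delta} - y_{k+1}$.

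First I would set up the invariant precisely: I claim one can maintain $q_k \in J$ for a suitable interval $J$ — the right choice is the mirror of the Ito–Sadahiro domain, namely $J = \bigl(-\tfrac{1}{\beta+1}, \tfrac{\beta}{\beta+1}\bigr]$ or its reflection, shifted so that the digit map $t \mapsto \lfloor \text{something} \rfloor$ picks out exactly one digit of $A_{-\beta}$. The key inequality to verify is that the "incoming" quantity $-\beta q_k + x_{k+\delta+1}\beta^{-\delta}$, whose range has length $\beta\cdot|J| + \lfloor\beta\rfloor\beta^{-\delta}$, can always be reduced into $J$ by subtracting a single digit from $\{0,\ldots,\lfloor\beta\rfloor\}$; this works precisely when the total spread does not exceed $\lfloor\beta\rfloor + |J|$, i.e. when $\beta|J| + \lfloor\beta\rfloor\beta^{-\delta} \le \lfloor\beta\rfloor + |J|$. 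Unwinding $|J| = 1$ (the length of $I_{-\beta}$) and doing the bookkeeping — the genuinely new input digit $x_{k+\delta+1}$ enters scaled by $\beta^{-\delta}$, and one also has to absorb the piece of $x$ that has not yet been "seen", which is at most $\{\beta\}$-worth of slack — this collapses exactly to condition~\refeq{delay}: $\lfloor\beta\rfloor\beta^{-(\delta-1)} + \lfloor\beta\rfloor\beta^{-\delta} \le 1 - \{\beta\}$. So $\delta$ is chosen minimal for the invariant to be maintainable.

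Next I would check three things. (i) \emph{Correctness}: telescoping the recursion for $q_k$ gives $\sum_{i=1}^k y_i(-\beta)^{-i} = \sum_{i=1}^{k+\delta}x_i\beta^{-i} - (-\beta)^{-k}q_k$; since $q_k$ is bounded and $(-\beta)^{-k}\to 0$, letting $k\to\infty$ yields $\sum_{i\ge1}y_i(-\beta)^{-i} = \sum_{i\ge1}x_i\beta^{-i} = x$, so the output genuinely represents $x$ in base $-\beta$. (ii) \emph{Digit range}: the digit chosen at each step lies in $A_{-\beta}=\{0,\ldots,\lfloor\beta\rfloor\}$ — this is exactly what the spread inequality guarantees. (iii) \emph{On-line form}: the first $\delta$ transitions read an input digit and output $\varepsilon$, matching the definition of the transient part of an on-line transducer; from step $\delta+1$ on, every transition consumes one input digit and emits one output digit, so it is synchronous with delay $\delta$. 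The result need not be admissible because we never enforce the Ito–Sadahiro admissibility bounds of Theorem~\ref{adm} on the $y_i$ — we only keep the remainder in a window large enough to always succeed, which is weaker.

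The main obstacle is pinning down the exact interval $J$ and the exact digit-selection rule so that the single-digit reduction always works \emph{and} the arithmetic of the slack terms reproduces~\refeq{delay} on the nose rather than up to an off-by-one. In particular one must be careful that the relevant "unseen tail" of the $\beta$-expansion contributes at most $\{\beta\}/\beta^{\delta}$ or similar, and that the sign alternation inherent in base $-\beta$ (the factor $-\beta$ rather than $\beta$ in the update) does not flip the direction of the inequality — this is where working with $|J|=1$ and tracking endpoints rather than just lengths is essential. Once the interval and rule are fixed, everything else is a routine induction.
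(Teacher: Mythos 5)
Your architecture (scaled lookahead remainder, a transient prefix of length $\delta$, one digit emitted per step afterwards, telescoping plus boundedness for convergence) is the same as the paper's, and your items (i)--(iii) are fine. But the heart of the theorem --- that the specific $\delta$ of \refeq{delay} suffices --- is exactly the part you defer as ``the main obstacle,'' and the accounting you sketch for it does not work out. Your spread inequality $\beta|J|+\lfloor\beta\rfloor\beta^{-\delta}\le\lfloor\beta\rfloor+|J|$ with $|J|=1$ reduces to $\lfloor\beta\rfloor\beta^{-\delta}\le 1-\{\beta\}$, which is not \refeq{delay}; the claim that some extra ``$\{\beta\}$-worth of slack'' from the unseen tail makes it ``collapse exactly'' to \refeq{delay} is unsupported (and in your formulation of $q_k$ there is no unseen tail to absorb --- it is handled entirely by the limit). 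So as written you have neither exhibited an interval $J$ and digit rule for which a one-step invariant holds, nor derived the stated delay.

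The paper resolves this differently, and the difference is instructive: it does \emph{not} maintain a single-interval invariant. With the natural Ito--Sadahiro digit map $y_j=\lfloor z+\tfrac{\beta}{\beta+1}\rfloor$ (clamped to $0$ or $\lfloor\beta\rfloor$ when $z$ leaves $[-\tfrac{\beta}{\beta+1},\tfrac{\beta^2}{\beta+1}]$), the remainder is kept in $I_{-\beta}$ only at even synchronous steps; because the fresh input digit enters with alternating sign $(-1)^j\beta^{-\delta}x_{\delta+j}$, the remainder can undershoot to as low as $-\tfrac{\beta}{\beta+1}-\lfloor\beta\rfloor\beta^{-\delta}$ at an odd step, and the proof shows a forced recovery at the next step (the next $z$ overshoots, the digit saturates at $\lfloor\beta\rfloor$, and the remainder returns to $I_{-\beta}$). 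It is precisely this two-step recovery that produces the bound $\tfrac{\beta^2}{\beta+1}+\tfrac{\lfloor\beta\rfloor}{\beta^{\delta-1}}+\tfrac{\lfloor\beta\rfloor}{\beta^{\delta}}-\lfloor\beta\rfloor\le\tfrac{1}{\beta+1}$, i.e.\ condition \refeq{delay}: the term $\lfloor\beta\rfloor\beta^{-(\delta-1)}$ is the previous step's overshoot amplified by $\beta$, and $\lfloor\beta\rfloor\beta^{-\delta}$ is the new perturbation. To repair your proof you would either have to carry out this parity-dependent analysis, or genuinely construct a window $J$ for which a one-step invariant holds and then check what delay that forces --- which, by your own length computation, would not be the $\delta$ asserted in the statement.
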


\vskip0.2cm

\hrule 
 
\vskip0.2cm
 
\noindent {\bf On-line algorithm}
\vskip0.2cm 

\hrule

\vskip0.2cm 

\noindent{\sl Input}:  a word $ (x_j)_{j \ge 1}$ of $A_\beta^\N$ such that
$x= \sum_{j \ge 1} x_j \beta^{-j}$ and $0 \le x <\frac{1}{\beta+1}$.\\
{\sl Output}: a word $ (y_j)_{j \ge 1}$ of $A_\beta^\N$
such that 
$x=\sum_{j\ge 1} y_j (-\beta)^{-j}.$\\

\noindent\texttt{begin}\\
\noindent$q_0:=0$\\
\noindent\texttt{for $j:= 1$ to $\delta$ do}\\
\hspace*{0.5cm} $q_j:=q_{j-1}+\frac{x_{j}}{\beta^j}$\\
$j:=1$\\
\noindent\texttt{while $j\ge 1$ do}\\
\hspace*{0.5cm} $z_{\delta+j} := -\beta q_{\delta+j-1}+ (-1)^j\frac{x_{\delta+j}}{\beta^\delta}$\\
\hspace*{0.5cm} \texttt{if}  $-\frac{ \beta}{\beta+1} \le z_{\delta+j} \le 
\frac{\beta^2}{\beta+1}$ \texttt{then} $y_j:= \lfloor z_{\delta+j} + \frac{\beta}{\beta+1} \rfloor$\\
\hspace*{0.5cm} \texttt{if}  $z_{\delta+j} >\frac{\beta^2}{\beta+1}$ \texttt{then} $y_j:=\lfloor \beta \rfloor$ \\
\hspace*{0.5cm} \texttt{if}  $z_{\delta+j} <-\frac{ \beta}{\beta+1}$ \texttt{then} $y_j:=0$ \\
\hspace*{0.5cm} $q_{\delta+j}:=z_{\delta+j}-y_j$\\
\hspace*{0.5cm} $j:=j+1$\\
\noindent\texttt{end}\\

\hrule
\vskip0.2cm

\begin{proof}\textbf{Claim 1.} For each $j \ge 1$
$$\frac{x_1}{\beta}+ \frac{x_2}{\beta^2}+\cdots + \frac{x_{\delta+j}}{\beta^{\delta+j}}=
-\frac{y_1}{\beta}+ \frac{y_2}{\beta^2}-\cdots + (-1)^j\frac{y_j}{\beta^{j}}+(-1)^j\frac{q_{\delta+j}}{\beta^{j}}.$$
\noindent\textbf{Claim 2.} If $-\frac{ \beta}{\beta+1} \le z_{\delta+j} \le 
\frac{\beta^2}{\beta+1}$ then $y_j$ belongs to $A_\beta$ and 
$ q_{\delta+j}$ belongs to $I_{\mb}=[-\frac{ \beta}{\beta+1} ,\frac{1}{\beta+1})$.\\
Proof of Claim 2: Clearly $0 \le y_j \le \frac{\beta^2}{\beta+1} + \frac{\beta}{\beta+1}=\beta$. Moreover, $ z_{\delta+j} + \frac{\beta}{\beta+1}=y_j + \{ z_{\delta+j} + \frac{\beta}{\beta+1}\}$, thus
$q_{\delta+j}:=z_{\delta+j}-y_j=\{ z_{\delta+j} + \frac{\beta}{\beta+1}\}  -  \frac{\beta}{\beta+1}$, and the claim is proved.\\
\noindent\textbf{Claim 3.} If 
$z_{\delta+j} >\frac{\beta^2}{\beta+1}$ then $ q_{\delta+j}>-\frac{ \beta}{\beta+1}$.\\
Proof of Claim 3: We have that $q_{\delta+j}=z_{\delta+j}-\lfloor \beta \rfloor>
\frac{\beta^2}{\beta+1} -\lfloor \beta \rfloor>-\frac{ \beta}{\beta+1}$.\\
\noindent\textbf{Claim 4.} If 
$z_{\delta+j} >\frac{\beta^2}{\beta+1}$ and
$q_{\delta+j-1} \ge -\frac{ \beta}{\beta+1}$ then $ q_{\delta+j}<\frac{1}{\beta+1}$.\\
Proof of Claim 4: Since $q_{\delta+j}= -\beta q_{\delta+j-1}+ (-1)^j\frac{x_{\delta+j}}{\beta^\delta} -\lfloor \beta \rfloor \le \frac{\beta^2}{\beta+1} + \frac{\lfloor \beta \rfloor}{\beta^\delta} - \lfloor \beta \rfloor$, the claim is proved if, and only if,
$\frac{\lfloor \beta \rfloor}{\beta^\delta} - \lfloor \beta \rfloor <1 -\beta$, that is
to say, if, and only if, $\frac{\lfloor \beta \rfloor}{\beta^\delta}<1-\{\beta\}$,
which is true thanks to~(\ref{delay}).\\
\noindent\textbf{Claim 5.} If $z_{\delta+j} <-\frac{ \beta}{\beta+1}$ and $q_{\delta+j-1}
\in I_{\mb}$ then $j$ is odd, $-  \frac{\beta}{\beta+1}- \frac{\lfloor \beta \rfloor}{\beta^\delta} \le q_{\delta+j}<-\frac{\beta}{\beta+1}$, and $ q_{\delta+j+1}$ belongs to $I_{\mb}$.\\
Proof of Claim 5: If $j$ is even then $z_{\delta+j} := -\beta q_{\delta+j-1}+ \frac{x_{\delta+j}}{\beta^\delta} > -\frac{ \beta}{\beta+1}+ \frac{x_{\delta+j}}{\beta^\delta}
\ge  -\frac{ \beta}{\beta+1}$, hence $j$ must be odd. Set $j=2k+1$.
We have $y_{2k+1}=0$ and $q_{\delta+2k+1}=z_{\delta+2k+1}=-\beta q_{\delta+2k} - \frac{x_{\delta+2k+1}}{\beta^\delta}\ge -  \frac{\beta}{\beta+1}- \frac{\lfloor \beta \rfloor}{\beta^\delta}$ since 
$q_{\delta+j-1}
\in I_{\mb}$.

Then $z_{\delta+2k+2}=-\beta q_{\delta+2k+1} + \frac{x_{\delta+2k+2}}{\beta^\delta} 
 >\frac{\beta^2}{\beta+1}$. Hence $y_{2k+2}=\lfloor \beta \rfloor$.
 By Claim 3, $q_{\delta+2k+2} >-\frac{ \beta}{\beta+1}$.\\
 On the other hand $q_{\delta+2k+2}=z_{\delta+2k+2} - \lfloor \beta \rfloor=
 -\beta q_{\delta+2k+1} + \frac{x_{\delta+2k+2}}{\beta^\delta} - \lfloor \beta \rfloor=
 \beta^2q_{\delta+2k}+\frac{x_{\delta+2k+1}}{\beta^{\delta-1}}+
 \frac{x_{\delta+2k+2}}{\beta^\delta} - \lfloor \beta \rfloor<\frac{\beta^2}{\beta+1} +
\frac{\lfloor \beta \rfloor}{\beta^{\delta-1}}+
 \frac{\lfloor \beta \rfloor}{\beta^\delta} - \lfloor \beta \rfloor \le \frac{1}{\beta+1}$
by~(\ref{delay}), thus $q_{\delta+2k+2}$ belongs to $I_\mb$.\\
 
 By hypothesis, $q_\delta$ is in $I_\mb$. By the previous claims,
 for every $k \ge 0$, $q_{\delta+2k}$ belongs to $I_\mb$ and
 $-  \frac{\beta}{\beta+1}- \frac{\lfloor \beta \rfloor}{\beta^\delta} \le q_{\delta+2k+1}<\frac{1}{\beta+1}$. Thus, for every $j \ge 1$, 
 $$\frac{x_1}{\beta}+ \cdots + \frac{x_{\delta+j}}{\beta^{\delta+j}}=
\frac{y_1}{(-\beta)}+ \cdots + \frac{y_{j}}{(-\beta)^{j}}+\frac{q_{\delta+j}}{(-\beta)^{j}}$$
with $q_{\delta+j}$ bounded. Therefore the algorithm converges, and 
$$\sum_{j \ge 1} x_j \beta^{-j}=\sum_{j\ge 1} y_j (-\beta)^{-j}.$$
\end{proof}

\subsection{Conversion in the Pisot case}

We now show that, when $\beta$ is a Pisot number, there is a finite on-line transducer
realizing the conversion.

\begin{theorem}
If $\beta$ is a Pisot number, the conversion from base $\beta$ to base $\mb$ is realizable
by a finite on-line transducer.
\end{theorem}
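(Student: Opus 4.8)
The plan is to show that the on-line transducer underlying the algorithm of the previous theorem has only finitely many reachable states when $\beta$ is a Pisot number; the argument runs exactly parallel to the proof that the automaton $\mathcal A_\mb(2c)$ is finite. Recall that the transient states of that transducer are the partial sums $q_j=\sum_{i=1}^{j}x_i\beta^{-i}$ for $1\le j\le\delta$; since the transient part is a tree of depth $\delta$ labelled by the finite alphabet $A_\beta$, there are at most $|A_\beta|^\delta$ of them. The synchronous states are the values $q_{\delta+j}$, $j\ge 0$, and the transition reading $x_{\delta+j}\in A_\beta$ from state $q_{\delta+j-1}$ goes to $q_{\delta+j}=-\beta q_{\delta+j-1}+(-1)^jx_{\delta+j}\beta^{-\delta}-y_j$ while emitting $y_j\in A_\beta$. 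So it suffices to prove that the set of reachable $q_{\delta+j}$ is finite.

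First I would place these states in a fixed finitely generated $\Z$-module. Since $q_\delta=\beta^{-\delta}\sum_{i=1}^{\delta}x_i\beta^{\delta-i}\in\beta^{-\delta}\Z[\beta]$, and the recurrence above multiplies by $-\beta$ (which preserves $\beta^{-\delta}\Z[\beta]$, as $\beta\Z[\beta]\subseteq\Z[\beta]$) and then adds an element of $\beta^{-\delta}\Z[\beta]$ (note $y_j\in\Z$), an immediate induction gives $q_{\delta+j}\in\beta^{-\delta}\Z[\beta]$ for all $j\ge 0$. Writing $M_\beta(X)=X^d-a_1X^{d-1}-\cdots-a_d$ for the minimal polynomial of $\beta$, one has $\beta^{-1}\in\frac1{a_d}\Z[\beta]$ (solve $M_\beta(\beta)=0$ for $\beta^{-1}$), hence $\beta^{-\delta}\Z[\beta]\subseteq\frac1{a_d^\delta}\Z[\beta]$, a free $\Z$-module of rank $d$, on which the norm $\|P\|=\max_{1\le i\le d}|P(\beta_i)|$ already introduced before the $\mathcal A_\mb(2c)$ proposition is defined.

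Then I would bound all the Galois conjugates of the $q_{\delta+j}$ uniformly. Let $\beta=\beta_1,\ldots,\beta_d$ be the conjugates of $\beta$, and for $2\le i\le d$ let $q_{\delta+j}^{(i)}$ be the image of $q_{\delta+j}\in\Q(\beta)$ under the field embedding $\beta\mapsto\beta_i$; applying this embedding to the recurrence (which is legitimate because each $q_{\delta+j}$ is built from rationals and $\beta$ by the field operations, so the embedding commutes with it) gives $q_{\delta+j}^{(i)}=-\beta_i q_{\delta+j-1}^{(i)}+(-1)^jx_{\delta+j}\beta_i^{-\delta}-y_j$. Since $\beta$ is a Pisot number, $\eta:=\max_{2\le i\le d}|\beta_i|<1$, and the additive term is bounded in modulus by $C:=\lfloor\beta\rfloor\eta^{-\delta}+\lfloor\beta\rfloor$, so $|q_{\delta+j}^{(i)}|\le\eta\,|q_{\delta+j-1}^{(i)}|+C$; as $q_\delta^{(i)}$ is a fixed bounded quantity, this yields $\sup_j|q_{\delta+j}^{(i)}|<\infty$ for every $i$. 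On the real side, the proof of the previous theorem already shows $q_{\delta+j}\in[-\tfrac{\beta}{\beta+1}-\tfrac{\lfloor\beta\rfloor}{\beta^\delta},\tfrac1{\beta+1})$ for all $j$, so $|q_{\delta+j}^{(1)}|$ is bounded too.

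Finally, these bounds say exactly that $\|P\|$ is uniformly bounded over all reachable states, where $q_{\delta+j}=\frac1{a_d^\delta}P(\beta)$ with $P\in\Z[X]/(M_\beta)$. Since $\Z[X]/(M_\beta)$ is a discrete $\Z$-lattice of rank $d$ on which $\|\cdot\|$ is a norm (its Minkowski-type embedding into $\R^{r_1}\times\mathbb C^{r_2}\cong\R^d$ is a full lattice), a bounded region contains only finitely many such $P$, so only finitely many synchronous states occur — this is the same discreteness argument used for $\mathcal A_\mb(2c)$. Together with the finite transient part and the finiteness of $A_\beta$, the on-line transducer of the previous theorem has finitely many states and edges, which proves the theorem. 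The only point requiring a little care is checking that $q_{\delta+j}^{(i)}$ genuinely is the $i$-th conjugate of the algebraic number $q_{\delta+j}$, so that uniform bounds on it translate into the module-norm bound; once that is granted, the rest is routine.
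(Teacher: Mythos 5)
Your argument is essentially the paper's own proof: the paper likewise bounds $|q_{\delta+j}(\beta_i)|$ by a constant $M_i$ via the contraction $|\beta_i|<1$ in the recurrence, bounds the real state $q_{\delta+j}(\beta)$ by the algorithm's invariant, and concludes finiteness from the discreteness of the rank-$d$ lattice under the norm $\Vert q\Vert=\max_{1\le i\le d}|q(\beta_i)|$; your extra care in locating the states inside $\frac{1}{a_d^{\delta}}\Z[\beta]$ only makes explicit what the paper leaves implicit when it writes the states as elements of $\Z[\beta,\beta^{-1}]$. The one slip is that $\eta^{-\delta}=(\max_{2\le i\le d}|\beta_i|)^{-\delta}$ is a \emph{lower} bound for $|\beta_i|^{-\delta}$, so your constant $C$ should be built from $\min_{2\le i\le d}|\beta_i|$ rather than $\eta$ (or taken per conjugate, as the paper does with its $M_i$); this does not affect the conclusion.
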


\begin{proof}
Following the on-line algorithm of Section~\ref{algo}
we construct an on-line transducer $\mathcal{C}$ as follows. The set of states is $Q=Q_t \cup Q_s$, with the set of transient states $Q_t=\{q_j \mid 0 \le j \le \delta-1\}$,
and the set of synchronous states $Q_s=\{q_{\delta+j} \mid j \ge 0\}$.
The initial state is $q_0$.
For $1 \le j \le
\delta$,
transient edges are defined by
$$q_{j-1} \stackrel{x_j|\varepsilon}{\longrightarrow}q_j.$$

Synchronous edges are defined by
$$q_{\delta+j-1}
\stackrel{x_{\delta+j}|y_{j}}{\longrightarrow}q_{\delta+j}$$
for $j \ge 1$.
There
 is an infinite path in the automaton $\mathcal C$ starting in $q_0$
and labelled by
$$q_0\stackrel{x_1|\varepsilon}{\longrightarrow}q_1 \cdots
\stackrel{x_\delta|\varepsilon}{\longrightarrow}q_\delta
\stackrel{x_{\delta+1}|y_1}{\longrightarrow}q_{\delta+1}
\stackrel{x_{\delta+2}|y_2}{\longrightarrow}q_{\delta+2}\cdots $$
if, and only if, $\sum_{j \ge 1} x_j
\beta^{-j}=\sum_{j \ge 1} y_j (-\beta)^{-j}$.

\bigskip

Let $M_\beta(X)$ be the minimal polynomial of $\beta$ and let $\beta=\beta_1, \beta_2,\ldots,\beta_d$ be the roots of $M_\beta$. Recall that $\Z[X]/(M_\beta(X)) \sim
\Z[\beta] $ is a discrete lattice of rank $d$.
Since $\beta$ is a Pisot number, $|\beta_i|<1$
for $2 \le i \le d$. 

For each $j \ge 1$, $q_{j}$ is an element
of $\Z[\beta, \beta^{-1}]$. For $1 \le i \le d$ let $q_{j}(\beta_i)$ be the element of
$\Z[\beta_i, \beta_i^{-1}]$ obtained by replacing $\beta$ by $\beta_i$
in $q_j$. Then $q_{j}=q_{j}(\beta)$.

First of all, for every $j \ge 1$, $-  \frac{\beta}{\beta+1}- \frac{\lfloor \beta \rfloor}{\beta^\delta} \le q_{j}(\beta)<\frac{1}{\beta+1}$ by the on-line algorithm.

Secondly, for every $j \ge 1$ and $2 \le i \le d$,
\begin{equation}\label{bou}
q_{\delta+j}(\beta_i)=-\beta_i q_{\delta+j-1}(\beta_i)+(-1)^j\frac{x_{\delta+j}}{\beta_i^{\delta}}
-y_{j}.
\end{equation}
For $2 \le i \le d$ let
$$M_i=\frac{\lfloor \beta \rfloor}{(1-|\beta_i|)}\big(1
+\frac{1}{|\beta_i|^{\delta}}\big).$$
Then, if $|q_{\delta+j-1}(\beta_i)| \le M_i$, then $|q_{\delta+j}(\beta_i)| \le M_i$
by (\ref{bou}).

Now, for $0 \le j \le \delta$ and $2 \le i \le d$, 
$$|q_j(\beta_i)| < \lfloor \beta \rfloor (\frac{1}{|\beta_i|}
+ \cdots + \frac{1}{|\beta_i|^{\delta}})<M_i.$$
Define a norm on $\Z[X]/(M_\beta(X))$ by
$\Vert q \Vert= \max_{1 \le i \le d} |q(\beta_i)|$.
Thus the elements of $Q$ are all bounded in norm, and so $Q$ is finite.
\end{proof}

\bigskip

In the particular case that $\beta^2=a \beta +1$ ($\beta$ is thus a Pisot number)
we can construct directly a simpler finite left sequential transducer realizing the conversion.

\begin{proposition}
If 
$\beta^2=a \beta +1$, $a \ge 1$, then the conversion from base $\beta$ to base $\mb$
is realizable by the finite left sequential transducer of Fig.~\ref{quad}. 
\end{proposition}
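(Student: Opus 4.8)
The plan is to give an explicit construction of the left sequential transducer and then prove it is correct. Since $\beta^2 = a\beta + 1$, we have the key digit-level rewriting rule: in base $\beta$, the block $a1$ (meaning coefficient $a$ at some power followed by $1$ at the next lower power) relates to $\beta^2$, while in base $-\beta$ we have $\beta^2 = a\beta + 1$ becomes $(-\beta)^2 = -a(-\beta)+1$, i.e. the analogous identity with $a$ replaced by $-a$. The idea is that converting a $\beta$-representation $x_1 x_2 x_3 \cdots$ to a $(-\beta)$-representation amounts to flipping signs on odd positions and then re-normalizing, using the recursion $q_j := -\beta q_{j-1} + (-1)^j x_j$ (no delay needed here because $\{\beta\}$ can be controlled). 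First I would set up the states to carry a small bounded carry $q \in \Z[\beta,\beta^{-1}]$; because $\beta^2 - a\beta - 1 = 0$, the relevant carries all lie in $\Z[\beta^{-1}]$ of bounded height and in fact in a finite set, so the transducer is finite.

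First I would write down the transition function. Processing left to right, at step $j$ with current carry-state $q$ and input digit $x_j \in A_\beta = \{0,\ldots,a\}$ (note $\lceil\beta\rceil - 1 = a$ since $a < \beta < a+1$), I would compute $z := -\beta q + (-1)^j x_j$ and then output $y_j := \lfloor z + \tfrac{\beta}{\beta+1}\rfloor$ whenever $z$ lies in $I_{-\beta}$ rescaled appropriately, falling back to the boundary outputs $y_j = \lfloor\beta\rfloor = a$ or $y_j = 0$ exactly as in the on-line algorithm of Section~\ref{algo}; the new state is $z - y_j$. The parity alternation $(-1)^j$ means the transducer really has two copies of the state set (one for even $j$, one for odd $j$), or equivalently the carries alternate sign; either way the state set stays finite. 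I would then verify, exactly as in Claims~1--5 of the on-line algorithm proof, that the carries remain in the bounded interval $[-\tfrac{\beta}{\beta+1} - \tfrac{a}{\beta^\delta}, \tfrac{1}{\beta+1})$, which here simplifies because one checks $\delta = 1$ suffices: condition~(\ref{delay}) with $\delta=1$ reads $\tfrac{a}{\beta^0} + \tfrac{a}{\beta} \le 1 - \{\beta\} = 1 - (\beta - a) = 1 + a - \beta$, hmm — I would double-check which $\delta$ works and accept whatever small constant the inequality forces, keeping the state count finite.

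The correctness invariant is the analogue of Claim~1: after reading $x_1\cdots x_j$ and outputting $y_1\cdots y_j$ (for $j$ past the latency), one has
\begin{equation*}
\sum_{k=1}^{j} x_k \beta^{-k} = \sum_{k=1}^{j} y_k (-\beta)^{-k} + (-1)^j q_j \beta^{-j},
\end{equation*}
which I would prove by induction using $q_j = -\beta q_{j-1} + (-1)^j x_j - y_j$ and the boundedness of $q_j$; letting $j \to \infty$ gives $\sum x_k\beta^{-k} = \sum y_k(-\beta)^{-k}$. Finiteness of the state set then follows because every reachable carry $q_j$ is an element of $\Z[\beta,\beta^{-1}]$ that is bounded in absolute value, and — crucially — since $\beta$ satisfies $\beta^2 = a\beta+1$ its only conjugate is $-1/\beta$, of modulus $<1$, so by the same norm argument as in the preceding theorem ($\Vert q\Vert = \max_i |q(\beta_i)|$) the reachable carries form a finite subset of the lattice $\Z[X]/(M_\beta)$. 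That it is \emph{left} sequential is by construction since the input automaton, reading left to right, is deterministic.

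The main obstacle I expect is pinning down the exact finite state set and the transition labels displayed in Figure~\ref{quad}: one must check that starting from $q_0 = 0$ the reachable carries are a manageably small set (for the quadratic Pisot unit this should be something like a handful of values of the form $c_0 + c_1\beta^{-1}$ with small integers $c_0, c_1$), and verify case-by-case that the output digit always lands in $\{0,\ldots,a\}$ and the boundary cases of the algorithm are either absent or produce a consistent transition. The bookkeeping of the parity $(-1)^j$ — whether to encode it in the states or to absorb it by alternating the sign convention on the carry — is the other fiddly point, but it is routine once the convention is fixed, and it does not affect finiteness.
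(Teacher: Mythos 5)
There is a genuine gap: the proposition is not ``some finite left sequential transducer exists'' (that already follows from the preceding theorem, since an on-line transducer is in particular left sequential), but that the \emph{specific two-state transducer of Fig.~\ref{quad}} realizes the conversion. Your proposal never engages with that transducer; it re-runs the general Pisot on-line construction and then explicitly defers ``pinning down the exact finite state set and the transition labels displayed in Figure~\ref{quad}'' as an obstacle. That deferred step is the entire content of the proposition. Moreover, the construction you describe is structurally different from the one in the figure: Fig.~\ref{quad} has no latency and processes the input by blocks of two digits with a carry in $\{0,-1\}$, whereas the on-line algorithm necessarily has a positive delay here. Your tentative claim that $\delta=1$ suffices is false: since $\beta=a+1/\beta$, condition~(\ref{delay}) reads $\beta^{\delta-1}\ge a(\beta+1)/(\beta-1)$, which for $a=1$ (the golden ratio) forces $\delta=4$. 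So specializing the on-line transducer does not produce the object named in the statement.

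The argument actually needed is a direct verification of the figure, resting on the identity $\beta^{-2k}-a\beta^{-(2k+1)}-\beta^{-(2k+2)}=0$. Reading a block $x_{2k}x_{2k+1}$ with current carry $\kappa\in\{0,-1\}$ (state $0$ or $\bar 1$) and writing the output block $y_{2k}y_{2k+1}$, one checks on each edge that
\begin{equation*}
\kappa\beta^{-2k}+x_{2k}\beta^{-2k}+x_{2k+1}\beta^{-(2k+1)}
= y_{2k}(-\beta)^{-2k}+y_{2k+1}(-\beta)^{-(2k+1)}+\kappa'\beta^{-(2k+2)},
\end{equation*}
with the new carry $\kappa'\in\{0,-1\}$ as indicated by the target state; for instance the edge $ce\mid(c+1)(a-e)$ leaves exactly the deficit $-\beta^{-(2k+2)}$ by the identity above. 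One must also check that every admissible input block is covered (admissibility of the $\beta$-expansion excludes the blocks $ae$ with $e\ge 1$, which is why no such edge is needed from state $0$) and that all output digits stay in $\{0,\dots,a\}$. Summing the displayed relation over $k$ and using boundedness of the carry gives $\sum x_k\beta^{-k}=\sum y_k(-\beta)^{-k}$. None of this case analysis, nor the block-level identity driving it, appears in your proposal, so as written it does not establish the stated result.
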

\begin{proof}
The left sequential transducer in Fig.~\ref{quad}
converts a $\beta$-expansion of a real number $x$ in $[0,\beta)$
of the form
$x_0 \decdot x_1 x_2 \cdots$ into a $(\mb)$-representation of
$x$ of the form $y_0 \decdot y_1 y_2 \cdots$. 
We take $0 \le d \le a$, $0 \le c \le a-1$, $1 \le e \le a$.
Since the input is admissible, no factor $ae$, with $1 \le e \le a$ can occur.

 \begin{figure}[h]
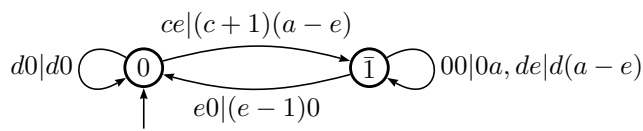

 \begin{center}
 \VCDraw{%
 \begin{VCPicture}{(-1,-1)(4,2)}
 \State[0]{(-1,0)}{A}
 \State[\bar 1]{(4,0)}{B}
 \Initial[s]{A} 
 \ArcL[.5]{B}{A}{e0|(e-1)0}
 \ArcL[.5]{A}{B}{ce|(c+1)(a-e)}
 \LoopW[.5]{A}{d0|d0}
 \LoopE[.5]{B}{00|0a,de|d(a-e)}

 \end{VCPicture}%
         }
 \end{center}
 \caption{Finite left sequential transducer realizing conversion from base $\beta$ to base $\mb$, $\beta^2=a \beta +1$}\label{quad}
 \end{figure}

\end{proof}


\end{document}